\par\vspace{4mm}}
\renewcommand{\P}{\mbox{\sf P}}
\newcommand{\NP}{\mbox{\sf NP}}
\newcommand{\opt}{\mathsf{OPT}}
\newcommand{\be}{\begin{enumerate}}
\newcommand{\ee}{\end{enumerate}}
\newcommand{\bd}{\begin{description}}
\newcommand{\ed}{\end{description}}
\newcommand{\bi}{\begin{itemize}}
\newcommand{\ei}{\end{itemize}}
\newtheorem{theorem}{Theorem}[section]
\newtheorem{lemma}[theorem]{Lemma}
\newtheorem{claim}[theorem]{Claim}
\newtheorem{proposition}[theorem]{Proposition}
\newtheorem{assumption}{Assumption}[section]
\newtheorem{definition}{Definition}[section]
\newtheorem{remark}{Remark}[section]
\newenvironment{proof}{\par \smallskip{\bf Proof: }}{\hfill\stopproof}
\def\stopproof{\square}
\def\square{\vbox{\hrule height.2pt\hbox{\vrule width.2pt height5pt \kern5pt
\vrule width.2pt} \hrule height.2pt}}
\newcommand{\eps}{\epsilon}
\newcommand{\poly}{\operatorname{poly}}
\newcommand{\E}{\ensuremath{\mathbb E}}
\newcommand{\prob}[2][]{\text{\bf Pr}_{#1}\left (#2\right)}
\mathchardef\hyphen="2D
\newcommand{\amb}{\textsf{AMB}}
\newcommand{\menu}{\textsf{EXP-menu}}
\newcommand{\single}{\textsf{EXP-single}}
\begin{document}
%\title{Hiring for An Uncertain Task: \\Joint Design of Information and Contracts\thanks{Authors are listed in $\alpha$-$\beta$ order.}} 
%\author{
%Matteo Castiglioni \\ Politecnico di Milano \\ \small matteo.castiglioni@polimi.it
%\and
%Junjie Chen\thanks{Work was done while Junjie Chen was a visiting student at Osaka University.} \\ City University of Hong Kong \\ \small junjchen9-c@my.cityu.edu.hk
%}
\title{Hiring for An Uncertain Task: \\Joint Design of Information and Contracts\thanks{Authors are listed in $\alpha$-$\beta$ order.}} 
\author{
Matteo Castiglioni \\ Politecnico di Milano \\ \small matteo.castiglioni@polimi.it
\and
Junjie Chen\thanks{Work was done while Junjie Chen was a visiting student at Osaka University.} \\ City University of Hong Kong \\ \small junjchen9-c@my.cityu.edu.hk
}
\date{}

\begin{titlepage}
	\clearpage\maketitle
	\thispagestyle{empty}

\begin{abstract}

Information design and contract design are two important topics in microeconomics. In a principal-agent model, information design concerns the information asymmetry between two players, while contract design addresses the moral hazard challenge. Though they are typically studied as two disjoint problems, in many real-world scenarios, e.g., recruitment, information asymmetry and moral hazard could appear simultaneously. 

In this paper, we initiate the computational problem of jointly designing information and contracts. 
%Our main technical contributions are algorithmic and understanding the computational challenges of the joint design problem. 
We consider three possible classes of contracts with decreasing flexibility and increasing simplicity: ambiguous contracts, menus of explicit contracts and explicit single contract. Ambiguous contracts allow the principal to conceal the applied payment schemes through a contract that depends on the unknown state of nature, while explicit contracts reveal the contract prior to the agent's decision.
%\mat{can we give an intuitive explanation of ambiguous and explicit in one/two lines)?}
Our results show a trade-off between the simplicity of the contracts and the computational complexity of the joint design.
Indeed, we show that an approximately-optimal mechanism with ambiguous contracts can be computed in polynomial time.
However, they are convoluted mechanisms and not well-suited for some %\jj{oringial: ``all''.  I think it's not ``all'', but some. example: lost item notices may simply as "a generous reward for the finder".} 
real-world scenarios.
Conversely, explicit menus of contracts and single contracts are simpler mechanisms, but they cannot be computed efficiently.
In particular, we show that computing the optimal mechanism with explicit menus of contracts and single contracts is APX-Hard.
We also characterize the structure of optimal mechanisms.
Interestingly, direct mechanisms are optimal for both the most flexible ambiguous contracts and the least flexible explicit single contract, but they are suboptimal for that with menus of contracts.
Finally, motivated by our hardness results,  we turn our attention to menus of linear contracts and single linear contracts. We show that both the problem of computing the optimal mechanism with an explicit menu of linear contracts and an explicit single linear contract admits an FPTAS.

%Interestingly, the maximum of ambiguous contracts is not achievable. 
%More counter-intuitively, a direct mechanism is optimal for both the most flexible ambiguous contracts and the least flexible single explicit contract, but is suboptimal for the moderately flexible menu of contracts. In terms of complexity, we show that the ambiguous contracts case is polynomially solvable while explicit contracts cases are both \NP-hard.  
%Finally, given the hardness results for the cases of explicit contracts, we turn to the most simple linear contracts. We show that both the explicit (menu and single) contracts admit an FPTAS.
\end{abstract}

\end{titlepage}

\section{Introduction}

%\jj{need to be revised.}

Information design and contract design both play fundamental roles in microeconomics, each addressing different challenges. Information design is applied when there is {\it information asymmetry} between the principal and the agent, allowing the principal to influence agent's decisions \cite{kamenica2011bayesian,arieli2019private}. 
Information asymmetry is common in practice. One typical example is online ads auctions~\cite{badanidiyuru2018targeting,bergemann2021calibrated,emek2014signaling,chen2023bayesian,bro2012send}. Online platforms have access to users' cookies and therefore know more about the users than advertisers. By strategically revealing information about the users or click-through rates about displayed ads, the platforms can influence the advertisers' bidding behaviors to align with the principal’s objectives. 
In comparison, contract design addresses the  challenge of  {\it moral hazard},  where the principal cannot observe the agent's action but can observe certain outcomes resulting from that action~\cite{bolton2004contract,carroll2015robustness,holmstrom1980theory}. Hence, to incentivize the agent to take specific actions, the principal offers contracts that pays the agents on the basis of the realized outcomes. 
Given their great success, e.g., Nobel Prize 2016 awarded to the field of contract design \cite{nobel-contract}, and their practical significance, it is unsurprising that both fields have recently received significant interest from the theoretical computer science community~\cite{xu2020tractability,dughmi2017algorithmic,dughmi2017algorithmicnoex,dughmi2019algorithmic,dutting2019simple,duetting2024multi,dutting2021complexity,castiglioni2021bayesian,castiglioni2024reduction}. 
%The algorithmic study of information design and contract design helps analyze more complex and general problems, enabling more effective solutions in various practical applications. 

In the classical information design model, the principal can observe the state of nature $\theta \in \Theta$ hidden to the agent. The principal designs and commits to a signaling scheme $\pi: \Theta \to \Delta_\Sigma $, mapping each state to a distribution over a set of signals $\Sigma$. The goal of the principal is to incentivize the agent to play favorable actions. 
Similarly, in contract design the principal's goal is to design a contract $p$, also known as payment scheme,  to incentivize an agent.
The agent has $n$ possible actions and each action induces a probability distribution over $m$ possible outcomes. In particular, a matrix $F\in \mathbb{R}_+^{n\times m}$ maps actions to possible outcomes, where $F_{ij}$ is the probability that action $a_i$ leads to outcome $\omega_j$. Then, the contract $p\in \mathbb{R}_+^m$ specifies the payment $p_j\ge 0$ from the principal to the agent when outcome $\omega_j$ is observed. 

In this paper, we aim at combining the two research fields, providing a model that subsumes both the classical information design and contract design problem.
%This paper subscribes to the algorithmic study literature in both two fields. While generally they are studied as two separate problems, many practical applications indeed require the joint consideration of information design and contract design.
Our goal is to deal with the two challenges together: incentivize agents through information and through payments under moral hazard.
%In this paper, we initiate a new and interesting problem by bridging two challenges together, i.e., {\it information asymmetry} and {\it moral hazard}.
%The objective considered is to maximize the principal's expected utility. 
In Section~\ref{sec:model}, we briefly introduce our mode, while in Section~\ref{sec:results} we highilght our results. Finally, in Section~\ref{sec:related} we discuss the more relevant related works. 

\subsection{Model}\label{sec:model}

\begin{comment}
\begin{figure}[h]
\centering
\includegraphics[scale=0.33]{notice.jpg}
\includegraphics[scale=0.13]{recruitment.jpeg}
\caption{Left: A reward notice in  manga series {\it One Piece} has important characteristics:  information describing the wanted criminals and contracts specifying the reward. Right: A recruitment notice includes the job description and a ({\it negotiable}) salary. }
\label{bountyhunterscene}
\end{figure}
\end{comment}

%\mat{This is has far has we can go in my opinion.}
The problem of jointly designing information and contracts was motivated when one of the authors was watching the manga series {\it One Piece}. 
%Therefore, let us first familiarize the model with {\it One Piece}. 
In this manga world, the government issues bounty notices to incentivize bounty hunters (e.g., the character Zoro) to {pursue criminals}.
%One bounty notice example is in Figure \ref{fig:f1} in Appendix \ref{sec_example_bountoyt}
Two important pieces of information typically need to be included: a description of criminals and a reward for the hunter who catches them. Clearly, it is critical  to consider the design of these two factors simultaneously to make the notice effective. 
Then, given the (incomplete) information in the criminal's description and the reward, the bounty hunter either makes his plan for the task or quits. 

Many similar scenarios are common in practical life. 
One familiar example is a lost item notice, where the owner needs to provide information about the lost item, e.g., the item's appearance, where it was lost, etc., and promise a reward for those who find the lost item. 
Another example is employee recruitment. To seek qualified individuals for a position, the recruitment posts usually need to contain the job description (e.g., skill requirements and responsibilities of the position) and more importantly the salary which could depend on performance, e.g., base salary  plus bonus. %\mat{add that the salary depends on the performance? otherwise no hidden action} 
The model is also applicable to online car-hailing and hitchhiking services which comprise a billion-dollar market~\cite{statista2024}. Typically, users publish their ride requests (including details such as destination, pick-up time, and number of passengers) and specify the payments to drivers on online platforms, e.g., DiDi in China and Uber in US, so that interested drivers will accept these ride orders.

The essential characteristic of all these examples is that the principal needs to provide information and reward to the agent, i.e., the joint design of information and contracts.
Formally, we model this settings as a game between a principal and an agent, where the principal can 
exclusively observe the state of nature $\theta$. 
Each $\theta$ is associated with a matrix $F^{\theta} \in \mathbb{R}_+^{n\times m}$.
Intuitively, this models that the success of an action depends on the  environment, which is unknown to the agent. To exploit the information asymmetry, the principal designs a signaling scheme $\pi$ to strategically reveal information through signals $s\in \Sigma$, e.g., the value of lost items in the {loss item notice} example.   
At the same time, the principal commits to a payment scheme, e.g., a fraction of the value of the returned item. Conditional on the received signal and the payment scheme, the agent chooses the action more aligned with his interests.

While it is well-known that both the classical information design and the classical contract design can be solve efficiently through linear programming~\cite{dughmi2019algorithmic,dutting2019simple}, the interplay of information and contracts makes the problem intrinsically harder. This might be expected since previous works, e.g., \cite{badanidiyuru2018targeting,emek2014signaling}, adding information design to an simple models can potentially lead to intractable problems. For example, \citet{emek2014signaling} show that applying information design to a second-price auction leads to an \NP-hard problem.
%We show that this is not always the case in our problem. 
The closest work to ours is~\cite{dughmi2019persuasion}, which studies information design augmented with payments. We remark that in \cite{dughmi2019persuasion}, payment is conditional on {\it observable} actions, and the payments are used to ``reinforce" persuasiveness, which is indeed still addressing only the information asymmetry. In contrast, the payments in our model are designed for {\it unobservable} actions to address moral hazard. Without payments, the agent will choose to quit regardless of  the information. 
%Hence, our model instead considers two different problems jointly, which is more challenging.

%\mat{Persuasion and Incentives Through the Lens of Duality. Check this paper, should be related to persuasion + contract with observable actions}

\subsection{Our Results}\label{sec:results}

In this paper, our main focus is on the computational aspects of jointly designing information and contracts. We consider three types of contracts: ambiguous contracts, menus of explicit contracts and single explicit contract, which have decreasing flexibility and increasing simplicity. It turns out that the joint information and contract design problem in its natural formulation is a challenging {\it bilinear program} independently from the chosen class of contracts.

In the case of ambiguous contracts, for each signal, the principal commits to a payment scheme in which a different contract is associated to each state. This allows the principal to withhold the contracts and further increases the agent's uncertainty. Interestingly, our first result shows that the joint design problem might not admit a maximum. In contrast to the close joint design problem of information design and auctions~\cite{bergemann2007information}, which is $\NP$-hard~\cite{cai2024algorithmic}, our second result shows that our problem can be approximated to within an arbitrary small factor in polynomial time.   To prove our result, we first show that it is without loss of generality to consider direct mechanisms, i.e., in which the principal directly recommends actions to the agent. This allows us to employ a linear relaxation technique~\cite{gan2022optimal,cacciamani2024multi} to reformulate our problem as a linear program.  By exploring the structure of our problem, we develop a simple algorithm to fix the {\it irregular} cases introduced by linear relaxation. We first slightly modify the scheme $\pi$ with a {\it reserve and redistribute} technique \cite{babichenko2021multi,chen2023bayesian} to obtain a $\xi$-incentive-compatible ($\xi$-IC) mechanism, then we apply the $\xi$-IC to IC transformation \cite{castiglioni2022designing,castiglioni2024reduction,dutting2021complexity} to retain a feasible solution with a small loss in utility.

Regarding explicit contracts, our results are mainly about hardness. 
Our first set of results is devoted to joint design with a menu of explicit contracts.
Our first result is unexpectedly negative: it is suboptimal to restrict to direct mechanisms, even in a simple case with only $2$ actions, $2$ outcomes and $3$ states. This implies that the revelation principle usually applied in the Bayesian persuasion literature~\cite{dughmi2017algorithmicnoex,dughmi2017algorithmic,kamenica2011bayesian} ceases to work, and the number of signals required by an optimal mechanism could potentially be large. We leave the following as an interesting open question:\par
~~~~~~~~~~{\it How many signals are required by an optimal mechanism with a menu of explicit contracts?}\par
Next, we focus on the computational complexity of the problem. The failure of the revelation principle makes the problem more challenging. Indeed, our second results show that it is \NP-hard to find an optimal mechanism with a menu of explicit contracts. Our proof is by reducing from the problem of finding disjoint independent sets in almost $q$-colorable graphs \cite{khot2012hardness}.
%for any $\eps>0$,  integer $k$ and $q \ge 2^{k}+1$, given a graph of $n$ nodes, decide whether there exist $q$ independent sets of size $\frac{(1-\eps)n}{q}$ or there is no independent set of size $\frac{n}{q^{k+1}}$. 
%\mat{The constructed instance has $n$ states where each state is associated with matrix $F^{\theta}$ of size $(3n+1)\times (n+2)$. To achieve large utility, the principal needs to incentivize the agent to take action $a^*$ that induces the only positively rewarded outcome $\omega^*$ with large probability. Our construction ensures that it is costly for the principal to incentivize $a^*$. The key idea is that such  a delicate matrix structure $F^{\theta}$ builds a connection between the principal's expected utility and the size of an independent set. }
%\mat{Do we need this? We are not saying nothing interesting.}
%\mat{If we want to say something in the introduction, it should be about the thecnical challenges in solving the problem and the techniques. Here, we are just saying how a reduction works.} \mat{In my opinion, we don't have any high-level comment to add here.}
The graph has $n$ vertices where each vertex is associated with a delicate matrix $F^{\theta}$ so that a connection is built between the principal's expected utility and the size of an independent set.
Interestingly, in our reduction, optimal contracts require high payments. Our next result shows that this is a necessary condition.
Indeed, we show that an approximately optimal mechanism with \emph{bounded} contracts can be computed in quasi-polynomial time. In particular, if payments are constrained to be bounded by a constant $B>0$, the problem admits a quasi-polynomial-time approximation scheme (QPTAS), i.e., that for any constant $\epsilon>0$ returns a $\epsilon$-approximation in quasi-polynomial time. Our key idea is restrict to a finite set of possible distributions called $K$-uniform distributions~\cite{cheng2015mixture}, which will lead to an approximately optimal and   approximately IC solution. This is then converted into an IC solution through a procedure similar to the one used for ambiguous contracts.

Our second set of results concerns the joint design with a single explicit contract. While direct mechanisms are suboptimal for menus of explicit contracts case, it is surprising that direct mechanisms are optimal for the joint design with a single contract. Moreover, an optimal mechanism, i.e., that achieves the maximum, always exists, which is also in sharp contrast with the ambiguous contract case. In light of this positive result, one may expect this problem to be polynomially solvable. Unfortunately, 
we show that also this problem is APX-Hard. Our proof reduces from the problem of finding the smallest dominating set on a graph with degree at most $3$~\cite{chlebik2008approximation}. The key observation is that if the signaling scheme $\pi$ is full information revelation, our problem reduces to Bayesian contract design with a single contract, which is APX-Hard~\cite{guruganesh2021contracts,castiglioni2021bayesian}. Inspired by this observation,  we construct hard instances in which optimal mechanisms involve a full information revelation strategy.

Given the hardness results of general joint design problems with explicit contracts, we turn our attention to one common class of simple contracts: linear contracts~\cite{carroll2015robustness,alon2022bayesian,dutting2019simple}.  We consider both single linear contracts and menus of linear contracts. In both cases, we present a fully-polynomial-time approximation scheme (FPTAS)  that gives $\eps$-approximation for any $\eps>0$ and runs in polynomial time. Our result is based on the observation that it sufficient to consider only $1/\epsilon$ possible contracts to obtain an $\epsilon$ approximation. One challenge in proving the FPTAS for the case of menus of linear contracts is that the number of signals required is unknown due to the suboptimality of the direct mechanism. We overcome this issue by observing that two signals inducing the same action under the same linear contract can be combined. Hence, $\epsilon$-optimal signaling schemes require at most $\frac{n}{\eps}$ signals.

\subsection{Additional Discussion on Related Works}\label{sec:related}

In this section, we provide more discussion on the related works. Our work is related to contract design, information design and the application of information design in auctions.

{\bf Related Works on Contract Design.} Contract theory is one of the fundamental fields in microeconomics \cite{holmstrom1980theory,bolton2004contract,myerson1982optimal,carroll2015robustness}. Recently, it has started to receive attention from computer science communities, and the literature on algorithmic contract design is growing fast. To the best of our knowledge, \citet{babaioff2006combinatorial} is the first to study the contract design problem computationally. \citet{dutting2019simple} studies approximation guarantees of linear contracts. %\mat{change it, there are some previous works on multi-agent contracts. I think this is the first one: Moshe Babaioff, Michal Feldman, Noam Nisan, and Eyal Winter. Combinatorial agency. Journal of Economic Theory, 147(3):999–1034, 2012.} 
After that, the complexity and the approximation problems of contract design start to receive significant attention \cite{dutting2021complexity}. Along this direction, many works focus on the settings under both adverse selection and moral hazard \cite{guruganesh2021contracts,castiglioni2024reduction,castiglioni2022designing,alon2021contracts,castiglioni2021bayesian,guruganesh2023power}. Specifically, it has been shown that the design of  menus of contracts and single contract are \NP-hard in both multi-parameter and single-parameter settings \cite{castiglioni2022designing,guruganesh2021contracts,castiglioni2021bayesian,castiglioni2024reduction}. Apart from the complexity, other algorithmic problems of contract design also receive much attention, such as combinatorial contracts~\cite{dutting2022combinatorial,dutting2024combinatorial}, multi-agent contracts~\cite{duetting2024multi,castiglioni2023multi} and ambiguous contracts~\cite{dutting2023ambiguous}. More recently, attention was also given to the online learning problem of contracts \cite{chen2024bounded,bacchiocchi2023learning,zhu2023sample}.

{\bf Related Works on Information Design.} Information design concerns how a principal reveals information to influence the agent's decisions. There is a rich literature in one of its sub-fields, Bayesian persuasion, which was first studied by \citet{kamenica2011bayesian} and later extended to private Bayesian persuasion by \citet{arieli2019private} where the principal privately communicates with each agent. A series of algorithmic studies \cite{xu2020tractability,babichenko2021multi,dughmi2019algorithmic,dughmi2017algorithmicnoex} follows in theoretical computer science. Importantly, \citet{babichenko2017algorithmic} and \citet{dughmi2017algorithmicnoex} respectively show that both computing an optimal private and public Bayesian persuasion is indeed \NP-hard. There are also other recent works, such as persuasion with limited resource~\cite{gradwohl2022algorithms}, the computation of approximately optimal schemes~\cite{castiglioni2023public} and information design in concave games~\cite{smolin2022information}. We refer the interested reader to the comprehensive survey by \citet{dughmi2017algorithmic}. Another important subfield  is the sale of information, which was first introduced by \citet{babaioff2012optimal}. Later, \citet{chen2020selling} improve the results by designing a consulting mechanism. \citet{bergemann2018design} characterize the optimal information structure for the binary environment, while \citet{cai2020sell} study the algorithmic aspects for a more general setup. \citet{liu2021optimal} give an optimal threshold mechanism for selling information to a binary-decision buyer while \citet{yang2022selling} studies the information selling for segmenting consumers. A series of algorithmic works in this line study information selling in different setups ~\cite{castiglioni2023selling,bergemann2022selling,chen2022selling}.

{\bf Related Works on Information Design in Auctions.} Typically, the principal has an informational advantage over the agent, so the principal can influence agents' bids in auctions through information design approaches. 
\citet{bro2012send} and 
\citet{emek2014signaling} study the problem of information design in second-price auctions. They show that generally, it is \NP-hard to design an optimal signaling scheme to maximize the principal's revenue. \citet{fu2012ad} study the information design in Myerson's auctions. \citet{badanidiyuru2018targeting} study information design in second-price auctions with re-targeting features.  \citet{bergemann2021calibrated} study information design in click-through auctions with calibration constraints and later \citet{chen2023bayesian} extend to a Bayesian setting. \citet{cai2024algorithmic} and \citet{bergemann2007information} consider the joint design problem of information and auctions.

\section{Preliminaries}

In this section, we formally introduce the problem of jointly designing information and contracts. We consider the setting where there is one principal and one agent. The agent can take  one of the $n$ actions available, i.e., action $a_i \in \mathcal{A}$ with $i \in \langle n \rangle \triangleq \{0, 1, \dots, n-1\}$. Each action $a_i$ is associated with a nonnegative cost $c_i\ge 0$. Each action $a_i$ can result in one of the $m$ outcomes $\omega_j \in \Omega$. Each outcome $\omega$ will lead to the reward $r_{\omega}\ge 0$ for the principal. We assume that the cost $c \in \mathbb{R}_{+}^{n}$  and the reward $r \in \mathbb{R}_{+}^{m}$ are known to both the principal and agent. For easy exposition, we assume cost $c$ and reward $r$ are within the interval $[0, 1]$ unless otherwise specified. All our results generalize as long as they are bounded by some constant $C>0$.

The principal wants to hire an agent to perform some tasks. However, differently from the classical principal-agent problem, the agent is uncertain about the task environment. We model this scenario with Bayesian uncertainty on the environment encoded by the state of nature $\theta \in \Theta$. Each state $\theta$ determines a action-outcome mapping probability matrix $F^{\theta} \in \mathbb{R}_+^{n\times m}$, where $F_i^{\theta} \in \Delta^m$ is a probability distribution over outcomes $\Omega$ resulted by taking action $a_i$. We assume that only the principal can observe the state $\theta$, while the prior distribution $\mu$ over state $\Theta$ is publicly available. Moreover, following the literature \cite{castiglioni2022designing,alon2021contracts,guruganesh2021contracts,castiglioni2024reduction}, we  assume that there is an opt-out action $a_0$ for the agent, which is without loss of generality.
\begin{assumption}\label{assmutpionoptout}
    There is an opt-out action $a_0$ with cost $c_0 = 0$.
\end{assumption}

%However, the agent is not sure about the task environment, which is captured by a probability matrix $F \in R^{(n+1)\times m}$ where $F_i \in \Delta^m$. We assume there is a state of nature $\theta \in \Theta$, which determines the task environment $F$. Hence, we denote the task environment as $F^{\theta}$. The state $\theta$ is realized with probability $\mu(\theta)$.

%\medskip

{\textbf{Timeline of Interactions Between Principal and Agent.}} The timeline of the interaction between the principal and the agent is the following: 
\begin{enumerate}
    \item The principal commits to a signaling scheme $\pi$ and a payment scheme $\mathcal{P}$
    \item The state $\theta$ is realized according to distribution $\mu$.
    \item The principal observes the state $\theta$ and then samples a signal from the signaling scheme $\pi$ as $s \sim \pi(s|\theta)$. 
    \item The agent observe the signal $s$ and plays an action $a_i$ 
    \item  An outcome $\omega \sim F^\theta_i$ is realized and the agent is paid according to $\mathcal{P}$. 
\end{enumerate}

In the following, we describe in detail the joint information and contract design problem.

{\textbf {Information Design to Eliminate Uncertainty.}}  To hire an agent, the principal needs to reveal some information to {\it partially} eliminate the agent's uncertainty about the task. 
We model this procedure through  an {\it information design} approach. 

Formally, the principal designs a signaling scheme $\pi: \Theta \to \Sigma$ which maps the state $\theta \in \Theta$ to a distribution over signals $\Sigma$.  After observing the state $\theta$, 
the principal will sample a signal $s\in \Sigma$ with probability $\pi(s|\theta)$ and communicate it with the agent. 
Once observing signal $s$, the agent updates his posterior belief about the states by the Bayesian rule
\begin{equation}\label{posteriorformaitnlleion}
    \prob{\theta|s} = \frac{\mu(\theta)\pi(s|\theta)}{\prob{s}}
\end{equation}
where $\prob{s} = \sum_{\theta} \mu(\theta)\pi(s|\theta)$ is the probability of sending signal $s$. Note that we do {\it not} focus on direct signaling schemes, i.e., $\mathcal{A} = \Sigma$, as many existing works \cite{kamenica2011bayesian,xu2020tractability,dughmi2019algorithmic,dughmi2017algorithmicnoex}. Indeed, we will show that in our problem direct signaling schemes are \emph{not} optimal in general.

%\medskip

%\jj{we need to rewrite this part.}

{\textbf{Contract Design to Incentivize an Agent.}} If the agent accepts this task, he may take some action based on his own interests. However, this action is private to the agent, while the principal can only observe the realized outcomes. In order to incentivize the agent, the principal  specifies some payments for the realized outcomes of the task. We model this procedure via the {\it contract design} approach.

We consider three classes of payment scheme $\mathcal{P}$ in which the contract depends on both $\theta$ and $s$, depends only on a signal $s$, or is independent of both $\theta$ and $s$. 

We call the first class of payment scheme as {\it ambiguous contract} and denote the corresponding optimization problem as \amb. This allows the principal to conceal the true contract applied and gives the largest flexibility in incentivizing the agent. Indeed, ambiguous contracts are widely seen in practical life.  One typical example is that some lost item notices write the rewards for finders ambiguously, e.g., ``Your efforts will be greatly appreciated" or ``A generous reward awaits". Another example is in the recruitment process, where the employee may be promised a year-end bonus that could be flexible (i.e., ambiguous), e.g., one to six months' salary. We denote the ambiguous contract as $p^{s, \theta} \in \mathbb{R}^m_+$. Hence, the agent is not sure which contract the principal actually applies to since the state is not observable, but has a posterior belief over it after observing the signal $s$. 
Conditional on $\pi$ and $\mathcal{P}$, once the signal $s$ is observed, the agent takes the {\it best-response} action $a(s) \in \mathcal{A}$ that maximizes his own utility under the posterior over states $\prob{\theta|s}$, 
\begin{align}\label{bset-response}
    a(s) = \arg\max_{a_i, i \in \langle n \rangle} \sum_{\theta} \prob{\theta|s} [\langle F^{\theta}_i, p^{s,\theta} \rangle -c_i]
\end{align}
where the tie-breaking is in favor of the principal. $\langle \cdot, \cdot \rangle$ denotes the product of two vectors.

It is worth noting the difference between the {\it ambiguous contract} in our problem and that in  \cite{dutting2023ambiguous}: while the agent in the two problems is not sure about which contract the principal employs, the ambiguity in our problem comes from the agent's uncertainty about the state whereas in \cite{dutting2023ambiguous}, the principal intentionally commits to a set of contracts. Moreover, different from \cite{dutting2023ambiguous}, our \amb~problem focuses on the interplay of contracts and information.

Next, we introduce the {\it explicit contract}, which are the latter two classes of contracts we consider. {\it Explicit} means that the contract applied is revealed to the agent. In the lost item notice example, the owner may instead explicitly write the amount of the award for the finders, e.g., "One million dollars will be given".
We consider two types of explicit contracts: menus of explicit contracts (depending only on signal $s$) and single explicit contract (independent of $\theta$ and $s$).
In the menus of contracts case, we design one contract for each signal, i.e., $p^s\in \mathbb{R}^m_+$, while there is only one  contract $p\in \mathbb{R}^m_+$ in the single contract case.
Once observing signal $s$, the agent chooses the best-response action $a(s)$  as in (\ref{bset-response}) with slight modification on the contracts employed. We denote these two joint design problems as \menu~and \single, respectively.

%\medskip

{\textbf{The Principal's Problem.}} The principal's objective is to maximize his expected utility by jointly designing information structure $\pi$ and contracts $\mathcal{P} = \{p^{s,\theta}\}_{s\in \Sigma, \theta \in \Theta}$. The problem considering ambiguous contracts is  formulated as follows.
\begin{equation}\tag*{(\amb)}
\label{original_problem_contract_signal}
\begin{aligned}
    \sup_{\pi, \mathcal{P}} \quad & \sum_{\theta \in \Theta} \mu(\theta) \sum_{s \in \Sigma} \pi(s|\theta) \langle F^{\theta}_{a(s)},   r - p^{s,\theta} \rangle  \\
    \textnormal{s.t. } \quad & \sum_{\theta \in \Theta} \mu(\theta)\pi(s|\theta) \Big[ \langle F^{\theta}_{a(s)}, p^{s,\theta} \rangle -c_{a(s)} \Big] \ge \sum_{\theta \in \Theta} \mu(\theta)\pi(s|\theta) \Big[ \langle F^{\theta}_{i},  p^{s,\theta} \rangle -c_{i}\Big], \quad  \forall s \in \Sigma, i \in \langle n \rangle \\
& p^{s,\theta} \ge 0, \quad \forall s \in \Sigma, \theta \in \Theta \\
& \sum_{s \in \Sigma} \pi(s|\theta) = 1, \pi(s|\theta) \ge 0, \quad \forall \theta \in \Theta, s \in \Sigma 
\end{aligned}
\end{equation}

The first constraints ensure that $a(s)$ is the best response action for the agent under signal $s$. Moreover, due to the existence of an opt-out action (Assumption \ref{assmutpionoptout}), it also ensures that the agent receives nonnegative utility, i.e., the individual rationality (IR). Following \cite{alon2021contracts,castiglioni2024reduction,myerson1982optimal}, we call the first constraints as {\it incentive-compatible (IC)} constraints. The second constraint is the {\it limited liability} constraint commonly adopted in the literature \cite{castiglioni2024reduction,alon2021contracts,carroll2015robustness}. The final constraint is to ensure that the designed signaling scheme $\pi$ is a probability distribution.

Adding additional equality constraints to Program \ref{original_problem_contract_signal} leads to a more restrictive problem \menu. Specifically, \menu~defined upon Problem \ref{original_problem_contract_signal} is formulated as,  
\begin{equation}\tag*{(\menu)}
\label{original_problem_contract_menu}
\begin{aligned}
    \sup_{\pi, \mathcal{P}} \quad & \textnormal{Problem}~\ref{original_problem_contract_signal}  \\
    \textnormal{s.t. } \quad & p^{s, \theta} = p^{s, \theta'}, \forall \theta, \theta' \in \Theta
\end{aligned}
\end{equation}
%To get the problem of {\it menu of contracts} design, we add constraints $p^{s, \theta} = p^{s, \theta'}, \forall \theta, \theta' \in \Theta$. 

Similarly, we formulate the problem \single~with more equality constraints as,
\begin{equation}\tag*{(\single)}
\label{original_problem_contract_single_exp}
\begin{aligned}
    \sup_{\pi, \mathcal{P}} \quad & \textnormal{Problem}~\ref{original_problem_contract_signal}  \\
    \textnormal{s.t. } \quad & p^{s, \theta} = p^{s', \theta'}, \forall \theta, \theta' \in \Theta, \forall s, s' \in \Sigma
\end{aligned}
\end{equation}
It is clear that the problems with the three different types of contracts  exhibit decreasing flexibility. 

\begin{remark}
    In the above three problems \ref{original_problem_contract_signal}, \ref{original_problem_contract_menu} and \ref{original_problem_contract_single_exp}, fixing either the signaling scheme $\pi$ or the payments scheme $\mathcal{P}$ makes the problem an easily solved linear program. The challenges come from the interplay of $\pi$ and $\mathcal{P}$, which makes the problem a constrained bilinear program.
\end{remark}

%\medskip

%{\textbf{Timeline of Interactions Between Principal and Agent.}} We summarize the timeline of bounty hunter hiring problem as follows.

%\mat{I missed this part, and write a similar interaction above. I think it is better to keep the interaction at the beginning, since it is essentially the same independently from the class of contract. Notice that you don't need to reveal the contract $p^s$, since it is implicitly revealed with s.}

%\begin{enumerate}
%    \item The principal commits to a signaling scheme $\pi$ and contracts $\mathcal{P}$, where $\mathcal{P}$ can be either ambiguous contracts or explicit contracts.
%    \item The state $\theta$ is realized according to distribution $\mu$. The principal observes the state $\theta$ and then samples a signal from the signaling scheme $\pi$ as $s \sim \pi(s|\theta)$.
%    \begin{itemize}
%        \item If $\mathcal{P}$ is  ambiguous contracts, then the principal only sends the signal $s$ to the agent.
%        \item If $\mathcal{P}$ is  explicit contracts, then besides the signal $s$, the principal also reveals the contract $p^s$ for case of menu of contracts (or $p$ for the case of single contract).
%    \end{itemize}
%    \item After observing signal $s$ (and contract $p$ if it is explicit contract design), the agent chooses the best-response action $a(s)$.
%\end{enumerate}
%\mat{I agree that we need these definitions.}\jj{fixed}

Before moving to the main results, we formally define the direct mechanism and $\eps$-IC, which are common in the literature~\cite{kamenica2011bayesian,castiglioni2024reduction,castiglioni2022designing,xu2020tractability,arieli2019private,dughmi2017algorithmicnoex}.

\begin{definition}
    (Direct Mechanism) A direct mechanism is a tuple $(\pi, \mathcal{P}, \Sigma)$ such that 1) The size of the signal set is at most $|\Sigma| \le n$;
    2) Under each signal $s \in \Sigma$, the agent takes the best-response action $a$ under contracts $\mathcal{P}$ and no two signals $s, s'\in \Sigma$ induce the same best-response action for the agent.\\ 
    In other words, the principal directly recommends an action to the agent, which is indeed the agent's best-response action. 
\end{definition}

\begin{definition}
    ($\eps$-IC) Under signal s with $\{\pi(s|\theta)\}_{\theta\in \Theta}$ , the action-contract pair $(a, p)$ is $\eps$-IC in an instance \single~if the following holds for $\eps \ge 0$ and all action $a_i, i\in \langle n \rangle$,
    \[
    \sum_{\theta \in \Theta} \mu(\theta)\pi(s|\theta) \Big[ \langle F^{\theta}_{a}, p \rangle -c_{a} \Big] \ge \sum_{\theta \in \Theta} \mu(\theta)\pi(s|\theta) \Big[ \langle F^{\theta}_{i},  p \rangle -c_{i}\Big] -\eps
    \]
    $\eps$-IC is also defined similarly for instances \amb~and \menu.
\end{definition}

\section{Ambiguous Contracts} \label{sec:ambigous}

In this section, we focus on the most flexible setting with ambiguous contracts, \emph{i.e.}, the \amb~ problem. 
Our first result shows that it is without loss of generality to restrict to a direct and IC mechanism, where the principal employs signals that correspond to actions recommendations and the agent obediently follows the recommendations. Notice that this does not implies that there exists an optimal direct and IC mechanism, but just that direct mechanism are as powerful as indirect ones. Formally,

\begin{lemma}\label{lemma_recommenddirect}
Given a feasible mechanism to Problem \ref{original_problem_contract_signal}, there exists a direct and IC mechanism that achieves at least the same principal's utility.
\end{lemma}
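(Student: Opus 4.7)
The plan is a revelation-principle style pooling argument. Given a feasible mechanism $(\pi, \mathcal{P}, \Sigma)$ with agent best responses $a(s) \in \mathcal{A}$ under each signal $s$, I would collapse all signals that induce the same action. Concretely, define a new signal set $\Sigma' := \{a_i \in \mathcal{A} : \exists s \in \Sigma,\ a(s)=a_i\}$ and a new signaling scheme by pooling, $\pi'(a_i|\theta) := \sum_{s \in \Sigma : a(s)=a_i} \pi(s|\theta)$ for each $\theta \in \Theta$ and $a_i \in \Sigma'$.

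Because the contracts are ambiguous (each $p^{s,\theta}$ depends on both signal and state), two distinct original signals $s,s'$ that induce the same action may carry different payment vectors. The natural fix is to take the $\pi(\cdot|\theta)$-weighted average,
\[
p'^{a_i,\theta} \ := \ \frac{\sum_{s : a(s)=a_i} \pi(s|\theta)\, p^{s,\theta}}{\pi'(a_i|\theta)}
\]
whenever the denominator is positive, and arbitrary otherwise. By linearity of the inner product in the payment argument, the expected payment for any action $a_j$ under the new signal $a_i$ and state $\theta$, once multiplied by $\mu(\theta)\pi'(a_i|\theta)$, equals the sum of the corresponding quantities in the original mechanism over $\{s : a(s)=a_i\}$.

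The verification of IC and utility preservation is then mechanical. Summing the original IC inequality over $\{s : a(s)=a_i\}$ on both sides, and applying the linearity observation above, yields exactly the IC constraint of the new mechanism at signal $a_i$ against any deviation $a_j$. The same linearity applied to the principal's objective shows the expected utility is preserved term by term, hence globally. By construction $|\Sigma'| \le n$ and distinct new signals induce distinct recommended actions, so the resulting mechanism meets both conditions in the definition of a direct mechanism.

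I do not foresee a major obstacle; the pooling trick is standard, and the ambiguous contract structure is well-suited to it because the new payment is still allowed to depend on $(\theta, s)$. The only subtlety is that the pooled IC inequality is weak, so an alternative action could tie with the recommended $a_i$. Under the stated convention that ties are broken in favor of the principal, and given that all the preceding manipulations preserve the principal's utility exactly, the principal can designate $a_i$ as the selected best response and the utility bound claimed by the lemma is achieved.
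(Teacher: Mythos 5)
Your proof is correct and takes essentially the same approach as the paper: pool signals that induce the same best response, set the new ambiguous contract to the $\pi(\cdot|\theta)$-weighted average of the originals, and verify IC and utility preservation by summing the original constraints and exploiting linearity. The paper phrases the pooling as a pairwise merge of two signals while you do it in one shot over the whole equivalence class, but this is cosmetic; the contract-averaging formula and the linearity argument are identical.
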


We remark that restricting to direct mechanisms is {\it not} without loss of generality in more restricted versions of our problem, as we show in Section \ref{sec:explicitycontractdesing} for the case of explicit contracts. 
Next, armed with Lemma \ref{lemma_recommenddirect}, we shows that surprisingly an optimal (direct and IC) solution to Problem \ref{original_problem_contract_signal} might not exist.
%Next, armed with Lemma \ref{lemma_recommenddirect}, the following result shows a strikingly interesting observation that it could be the cases where the maximum of Problem (\ref{original_problem_contract_signal}) does not exist.

\begin{proposition}
    \label{theoremsupereconmanntoacdhi}
    There exist instances where the maximum of Problem \ref{original_problem_contract_signal} is not achievable, i.e., only the supremum of Problem \ref{original_problem_contract_signal} exists.
\end{proposition}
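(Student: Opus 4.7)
The plan is to prove the proposition by constructing an explicit small instance together with a sequence of feasible direct mechanisms $(\pi^{(k)}, \mathcal{P}^{(k)})$ whose principal utilities converge strictly upward to some finite value $U^*$, and then showing that no single feasible mechanism attains $U^*$. By Lemma~\ref{lemma_recommenddirect} the search may be restricted to direct mechanisms with at most $n$ signals, so the argument reduces to an explicit analysis of a constant-size program. I would use two states $\theta_0,\theta_1$ with equal prior, the opt-out action $a_0$, and a small number of productive actions and outcomes engineered so that the interplay between signaling and ambiguous payments creates a ``virtual payment'' loophole.

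Concretely, a signal $s$ recommending the target action in the favorable state $\theta_1$ is assigned a residual probability $\pi(s|\theta_0)=\epsilon>0$ in the unfavorable state, paired with a contract entry $p^{s,\theta_0}_\omega$ for some outcome $\omega$ that scales like $M/\epsilon$ for a fixed constant $M>0$. Because the IC constraint at signal $s$ weights $\langle F^{\theta_0}_{a}, p^{s,\theta_0}\rangle$ by $\mu(\theta_0)\pi(s|\theta_0)=\mu(\theta_0)\epsilon$, the product $\mu(\theta_0)\epsilon \cdot p^{s,\theta_0}_\omega$ remains a fixed constant as $\epsilon\to 0^+$, injecting $O(1)$ IC slack that permits the principal to reduce the honest payment required under $\theta_1$. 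The expected cost to the principal from the residual channel stays bounded by a constant as well, so the overall objective approaches a well-defined finite limit.

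The next step is to write $U(\epsilon)$ in closed form, verify that $(\pi(\epsilon),\mathcal{P}(\epsilon))$ is feasible for every $\epsilon>0$, and show that $U(\epsilon)$ is strictly monotone toward a finite $U^* = \lim_{\epsilon\to 0^+} U(\epsilon)$. To establish non-attainability I would take any hypothetical maximizer $(\pi,\mathcal{P})$ and examine its IC constraint on the corresponding signal: either $\pi(s|\theta_0)=0$, in which case the payment entries in state $\theta_0$ drop out of the IC, the virtual slack is forfeited, and the objective falls strictly below $U^*$; or $\pi(s|\theta_0)>0$ with a finite contract, in which case the principal's expected cost in $\theta_0$ is strictly positive and the cost-to-slack tradeoff is strictly dominated by the limit. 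Both cases yield utility strictly below $U^*$, so $U^*$ is a supremum but not a maximum.

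The main obstacle will be verifying that the constructed instance genuinely has supremum strictly above what is achievable by any boundary mechanism, especially full information revelation. In the simplest two-action, two-outcome instances full information absorbs any gain from the virtual payment because every reallocation of expected payment across states has the same marginal IC-slack per unit principal cost. A richer configuration---with at least three actions so that the IC against a deviation action is active and can be relaxed asymmetrically by the virtual payment---will be needed, and a careful case analysis ruling out every candidate boundary mechanism as a maximizer forms the technical crux of the proof.
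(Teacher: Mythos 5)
Your core mechanism — a signal whose conditional probability $\pi(s|\theta)$ in some state shrinks like $\eps$ while the paired contract entry $p^{s,\theta}_\omega$ blows up like $M/\eps$, so that the product $\mu(\theta)\pi(s|\theta)\,p^{s,\theta}_\omega$ stays bounded and injects $O(1)$ IC slack at bounded expected cost while the residual reward term vanishes — is exactly the device the paper uses. You also correctly foresee that the simplest two-action configuration will be absorbed by full revelation and that a richer action set is needed. To that extent the approach matches the paper's.

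There is, however, a genuine gap: you never pin down an instance, and you explicitly defer the entire non-attainability case analysis, which is where all the work lies. The paper's instance in fact needs \emph{three} states (not the two you tentatively posit), uniform prior, three actions with $c_3=\tfrac18$, and four outcomes; the crucial signal $s_3$ is sent deterministically from two favorable states $\theta_1,\theta_3$ and with probability $3\eps$ from $\theta_2$, paired with $p^{s_3,\theta_2}_{\omega_4}=\tfrac{1}{12\eps}+\tfrac18$ and zero contracts $p^{s_3,\theta_1}=p^{s_3,\theta_3}=0$. Pooling two favorable states under a single recommendation signal is what makes the virtual slack strictly valuable: the payment in state $\theta_2$ covers the entire cost $c_3$, so the principal harvests full rewards from $\theta_1$ and $\theta_3$ at net cost $\tfrac{1}{12}+O(\eps)$, and the supremum comes out as $\tfrac{9}{12}$. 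Your proposed dichotomy for non-attainability ($\pi(s|\theta_0)=0$ forfeits the slack, versus positive probability with finite contract loses) captures the right intuition but is far from the full argument — the paper must separately bound the utility through IC manipulations for every pattern of which actions each signal recommends and which states feed into each signal. Until a concrete instance is fixed, neither side of the sup-vs-max comparison can be carried out, and the worry you yourself flag — that full revelation may absorb the gain — remains unaddressed.
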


We prove this result by constructing a simple instance with $3$ states, $3$ actions and $4$ outcomes. While the expected payment from the principal to the agent could be bounded, the non-achievable maximum exists due to situations where the principal may send some signal with sufficiently small probability but induce a very large payment on some outcome in the contract associated with that signal.

%\jjr{======}

Now, we show that even if an optimal mechanism might not exist, it is possible to find an arbitrarily good approximation of the optimal mechanism in polynomial time.
We remark that as Proposition~\ref{theoremsupereconmanntoacdhi} shows, the approximation is not due to computational reasons but the non-existence of the maximum. 

\begin{theorem}
    For any $\zeta>0$, there exists one (additively) $\zeta$-optimal solution to Problem \ref{original_problem_contract_signal}, which can be found in time polynomial in the instance size and $\log(1/\zeta)$.
\end{theorem}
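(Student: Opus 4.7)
The plan is to exploit Lemma~\ref{lemma_recommenddirect} and reformulate the bilinear Problem~\ref{original_problem_contract_signal} as a polynomial-size linear program through a linearization of the signaling scheme times the contract, then correct the irregularities that the relaxation can introduce by a \emph{reserve-and-redistribute} perturbation followed by the standard $\xi$-IC to IC conversion.

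By Lemma~\ref{lemma_recommenddirect} I may restrict to direct and IC mechanisms: the signal set is $\{s_i\}_{i\in\langle n\rangle}$ with $s_i$ recommending action $a_i$. Set $x_{i,\theta}:=\pi(s_i|\theta)$ and introduce auxiliary variables $y^{i,\theta}:=\pi(s_i|\theta)\,p^{s_i,\theta}\in \mathbb{R}_{+}^{m}$. Every occurrence of $p^{s_i,\theta}$ in \ref{original_problem_contract_signal} is multiplied by $\pi(s_i|\theta)$, so after the substitution both the objective and the IC constraints become linear in $(x,y)$, yielding a polynomial-size LP whose value equals the supremum of \ref{original_problem_contract_signal}. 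This LP can be solved in polynomial time, and one would \emph{like} to recover the contracts as $p^{s_i,\theta}_{j}=y^{i,\theta}_{j}/x_{i,\theta}$.

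The obstruction is precisely that highlighted by Proposition~\ref{theoremsupereconmanntoacdhi}: an optimal LP solution may place $x_{i,\theta}$ arbitrarily close to zero while keeping $y^{i,\theta}$ non-negligible, producing unboundedly large payments. I would fix a threshold $\xi$ polynomial in $\zeta$ and apply a \emph{reserve-and-redistribute} step in the spirit of \cite{babichenko2021multi,chen2023bayesian}: reserve a small mass $\xi$ from the LP solution and redistribute it into an opt-out signal (recommending $a_{0}$ with contract $0$), clamping every $x_{i,\theta}$ that is used to define a contract to be at least a polynomial function of $\xi$. A direct calculation shows that the perturbed signaling scheme $\tilde x$, paired with the rescaled contracts $p^{s_i,\theta}_{j}=y^{i,\theta}_{j}/\tilde x_{i,\theta}$, is $O(\xi)$-IC, has contracts bounded in terms of $1/\xi$, and loses at most $O(\xi)$ in the principal's expected utility with respect to the LP optimum.

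Finally, I apply the $\xi$-IC to IC conversion of \cite{castiglioni2022designing,castiglioni2024reduction,dutting2021complexity}, which mixes a small fraction of probability into the opt-out recommendation (or equivalently shrinks the payments) to absorb the IC slack with at most an additional $O(\xi)$ utility loss. Choosing $\xi$ as a sufficiently small polynomial in $\zeta$ and in $n,m,|\Theta|$ guarantees a total additive error of at most $\zeta$; since the whole procedure consists of solving one polynomial-size LP plus two explicit perturbation steps, it runs in time polynomial in the instance size and $\log(1/\zeta)$. The main obstacle will be a careful joint error analysis of the reserve-and-redistribute step, because one has to ensure \emph{simultaneously} that the recovered contracts remain finite, that the IC slack scales linearly in $\xi$, and that the expected-utility deviation (which couples to every deviating action through the matrices $F^{\theta}$) is controlled; only after these three bounds are tuned coherently does the subsequent IC conversion yield the claimed $\zeta$-approximation.
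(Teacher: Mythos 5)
Your proposal follows essentially the same route as the paper's proof: invoke Lemma~\ref{lemma_recommenddirect} to reduce to direct mechanisms, linearize via the change of variables $z^{s,\theta}=\pi(s|\theta)\,p^{s,\theta}$ to obtain the polynomial-size LP (which is exactly Program~(\ref{relaxted_programzeqpip})), repair the irregular pairs with a reserve-and-redistribute perturbation (Lemma~\ref{lemma_epsic}), and then convert the resulting $\xi$-IC mechanism into an IC one with an $O((n+1)\sqrt{\xi})$ loss (Lemma~\ref{lemma_eps_toicbyloss}), finally tuning $\xi$ as a polynomial in $\zeta/(n+1)$. The only cosmetic deviation is in describing the redistribution step (you send reserved mass to an opt-out recommendation whereas the paper reserves to an auxiliary signal $\bar s$ and then pushes mass onto the irregular pairs before recombining $\bar s$ with a direct signal), but this does not change the argument's substance.
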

\begin{proof} By Lemma \ref{lemma_recommenddirect}, without loss of generality, we  can  let the signal space be $\Sigma = \langle n \rangle$ and consider a direct and IC mechanism. Similar to \cite{castiglioni2023selling,gan2022optimal,castiglioni2024reduction}, we apply relaxation to Problem \ref{original_problem_contract_signal} by replacing the terms $\pi(s|\theta) p^{s, \theta}$ with a new variable $z^{s, \theta}$. Hence, Program \ref{original_problem_contract_signal} can be reformulated as:
\begin{equation}\label{relaxted_programzeqpip}
\begin{aligned}
    \max_{\pi, z} \quad & \sum_{\theta} \mu(\theta) \sum_{s \in \langle n \rangle}  \langle F^{\theta}_{s}, \pi(s|\theta)r - z^{s, \theta}\rangle \\
    \textnormal{subject.to.} \quad & \sum_{\theta} \mu(\theta)  \Big[ \langle F^{\theta}_{s},  z^{s, \theta} \rangle - \pi(s|\theta) c_{s} \Big] \ge \sum_{\theta} \mu(\theta)  \Big[ \langle F^{\theta}_{i}, z^{s, \theta} \rangle - \pi(s|\theta)c_{i} \Big], \quad \forall i, s \in \langle n \rangle \\
& \sum_s \pi(s|\theta) =1, \quad \forall \theta \in \Theta\\
& z^{s, \theta} \ge 0, \quad \forall s \in \langle n \rangle, \forall  \theta \in \Theta.
\end{aligned}
\end{equation}
Notice that Program (\ref{relaxted_programzeqpip}) is linear and hence admits a maximum. Moreover, it can be solved efficiently. After solving Program (\ref{relaxted_programzeqpip}), we want to recover a solution to the original problem.
If $\pi(s|\theta)>0$, it is straightforward to recover $p^{s,\theta}$ from $z^{s,\theta}$ as follows:
\begin{equation}\label{recovingincontract}
p^{s, \theta} = \frac{z^{s, \theta}}{\pi(s|\theta)} \in \mathbb{R}^{m}_+.
\end{equation} 
%If $\pi(s|\theta)>0$ always holds, then (\ref{recovingincontract}) is well-defined. 
However, there might exist some \emph{irregular} solutions where $z^{s, \theta} \neq 0$ but $\pi(s|\theta)=0$. Lemma \ref{lemma_epsic} shows a simple method that converts a solution with \emph{irregular} pairs $(z^{s, \theta}, \pi(s|\theta))$ to a \emph{regular} one, i.e., without irregular pairs, through retaining a $\xi$-IC solution with a small utility loss to the principal, where $\xi$ can be arbitrary small.

\begin{lemma}\label{lemma_epsic}
    For any $\xi>0$, given a solution to Problem (\ref{relaxted_programzeqpip}), there exists a polynomial-time procedure that returns a $\xi$-IC solution that decreases by at most $\xi>0$ the principal's utility and such that there are no {\it irregular} $(z^{s, \theta}, \pi(s|\theta))$ pairs. 
\end{lemma}

The proof of Lemma \ref{lemma_epsic} is by the observation that slightly perturbing the signaling scheme does not change the agent's best-response action if slightly relaxing the IC constraints. Hence, by slightly perturbing the signaling scheme, we can eliminate the irregular pairs. We achieve this by the reserving and redistributing technique, which first reserves a small portion of probability with losing a small principal utility and then redistributes the reserved probability to irregular pairs so that those pairs will have $\pi(s|\theta)>0$.

Hence, applying Eq.~\eqref{recovingincontract} to the regular solution that we recovered through Lemma~\ref{lemma_epsic}, we obtain a $\xi$ optimal and $\xi$-IC solution for the Problem \ref{original_problem_contract_signal}. The next lemma shows that any approximately IC solution to the Problem \ref{original_problem_contract_signal} can be converted to an IC mechanism by losing some small principal utility. This is achieved by slightly perturbing the contracts with linear contracts similarly to \cite{dutting2019simple,castiglioni2022designing}

\begin{lemma}\label{lemma_eps_toicbyloss}
    Given any $\xi$-IC solution to Problem \ref{original_problem_contract_signal}, there exists a polynomial-time procedure that returns a IC solution by losing at most $(n+1)\sqrt{\xi}$ principal utility, where $n$ is the number of  actions.
\end{lemma}

    By Lemma \ref{lemma_epsic} and \ref{lemma_eps_toicbyloss}, the  utility loss to the principal is at most $\xi+(n+1)\sqrt{\xi}$. By taking $\xi = (\frac{\zeta}{2(n+1)})^2$, we guarantee that the loss of principal utility (with respect to the optimal mechanism) is at most $\zeta$. Finally, all the above steps can be done in polynomial time. This concludes the proof for the theorem.
\end{proof}

\section{Menus of Explicit Contracts}\label{sec:explicitycontractdesing}

In this section, we consider the joint design problem with a menu of explicit contacts.  In light of the similarity to the classical Bayesian persuasion problem \cite{kamenica2011bayesian,dughmi2019algorithmic}, it is reasonable to expect that it is sufficient to consider direct mechanisms in this problem as well. However, our first result shows that in this model with less flexible contracts, direct mechanisms are suboptimal. This is in sharp contrast with the positive result in Lemma \ref{lemma_recommenddirect}. Indeed, similar phenomenons were observed in \cite{yang2024computational,gan2023robust} when concerning the robustness of Bayesian persuasion. Below, we formally state our negative result.

\begin{proposition}\label{sumoptimaiontamenuscontr}
    (Suboptimality of direct mechanisms) There exist simple instances (with $2$ actions and $2$ outcomes) such that direct mechanisms fail to achieve the supremum of Program~\ref{original_problem_contract_menu}.
\end{proposition}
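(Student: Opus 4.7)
The plan is to exhibit a concrete small instance on which some non-direct mechanism strictly beats every direct one. The driving phenomenon is that under menus of \emph{explicit} contracts, two signals that both induce the same action can carry \emph{different} payment vectors and thus attain different principal utilities, so the merging argument behind classical Bayesian persuasion breaks.

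Concretely, I take two states $\theta_1,\theta_2$ under a uniform prior, two actions $\{a_0,a_1\}$ with $c_0=0$ and $c_1 = 0.05$, and two outcomes with reward vector $r = (0,1)$. Writing $q_\theta = F^\theta_{a_1,2}$ and $e_\theta = F^\theta_{a_1,2} - F^\theta_{a_0,2}$, the concrete instance is $(q_1,e_1) = (0.9,0.5)$ and $(q_2,e_2) = (0.1,0.1)$; for example $F^{\theta_1}_{a_0} = (0.6,0.4)$, $F^{\theta_1}_{a_1} = (0.1,0.9)$, $F^{\theta_2}_{a_0} = (1,0)$, $F^{\theta_2}_{a_1} = (0.9,0.1)$. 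For a signal with posterior $\rho$ on $\theta_1$ that induces $a_1$ with the IC-binding contract $(0, c_1/\bar{e}(\rho))$, where $\bar{q}(\rho) = 0.1+0.8\rho$ and $\bar{e}(\rho) = 0.1+0.4\rho$, the principal's utility is $V_1(\rho) = \bar{q}(\rho)\bigl(1 - c_1/\bar{e}(\rho)\bigr)$; the change of variables $u = \bar{e}(\rho)$ rewrites this as $V_1 = 2u - 0.2 + 0.005/u$, hence strictly convex. For a signal inducing $a_0$ with the zero contract, the utility is the linear $V_0(\rho) = 0.4\rho$, and a direct check yields $V_1 > V_0$ and $V_1' > V_0' \equiv 0.4$ on $[0,1]$.

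To witness a non-direct value, I use two signals $s_1,s_2$ that fully reveal the state (posteriors $1$ and $0$), each inducing $a_1$ with its own IC-tight contract; both signals induce the same action, so this mechanism is not direct. Its expected principal utility equals $\tfrac{1}{2}V_1(1) + \tfrac{1}{2}V_1(0) = \tfrac{1}{2}(0.81 + 0.05) = 0.43$, so the supremum of Program~\ref{original_problem_contract_menu} is at least $0.43$. A direct mechanism, on the other hand, has at most two signals with distinct best responses, and is parameterized by a weight $\lambda \in [0,1]$ on the $a_0$-signal together with posteriors $\rho_0,\rho_1 \in [0,1]$ satisfying $\lambda \rho_0 + (1-\lambda)\rho_1 = \tfrac{1}{2}$, giving utility $\lambda V_0(\rho_0) + (1-\lambda) V_1(\rho_1)$. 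Differentiating in $\rho_0$ while maintaining the averaging constraint produces $\lambda(V_0'(\rho_0) - V_1'(\rho_1)) < 0$, so the optimum is pushed to the smallest feasible $\rho_0$; substituting $\rho_0 = 0$ (valid for $\lambda \le \tfrac{1}{2}$) reduces the problem to a one-variable maximization whose explicit form $V_1 = 2u - 0.2 + 0.005/u$ places the optimum at $\lambda = 0$, $\rho_1 = \tfrac{1}{2}$, with value $V_1(\tfrac{1}{2}) = 5/12 \approx 0.417$; the regime $\lambda > \tfrac{1}{2}$ is symmetric and strictly smaller. Since $5/12 < 0.43$, no direct mechanism attains the supremum.

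The main obstacle is this last upper-bound step. The non-direct value $\tfrac{1}{2}(V_1(0) + V_1(1))$ is exactly the concave envelope of the \emph{convex} function $V_1$ at the prior, and would be matched by a direct mechanism \emph{if} both extreme posteriors could be paired with action $a_1$. The definition of direct forces one signal to use $V_0 < V_1$ instead, incurring a strict loss proportional to $V_1(\rho_0) - V_0(\rho_0) > 0$; turning this pointwise loss into a uniform upper bound via the one-dimensional optimization above is the technical heart of the argument.
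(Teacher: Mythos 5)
Your proof is correct, and it takes a genuinely different (and leaner) route than the paper's.

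The paper constructs a $3$-state instance and shows that full information revelation with three signals strictly dominates every $2$-signal direct mechanism; its argument requires case-splitting over which action each pooled signal induces and an explicit one-variable inequality in each case. You instead construct a $2$-state instance and exhibit a $2$-signal \emph{indirect} mechanism (both signals recommending $a_1$, each with its own IC-tight contract) achieving $0.43$. The crux is the clean structural observation that, for a fixed recommended action $a_1$, the optimal principal utility as a function of the posterior, $V_1(\rho)=\bar q(\rho)\bigl(1-c_1/\bar e(\rho)\bigr)$, is strictly \emph{convex} in $\rho$ (made transparent by your change of variables to $u=\bar e(\rho)$). Thus splitting the posterior of a signal that recommends $a_1$ into two signals still recommending $a_1$ is strictly beneficial, and the best a direct mechanism can do -- which is forced to use the weakly worse $V_0(\rho)=0.4\rho$ for one of its two signals -- is $V_1(1/2)=5/12<0.43$. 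The one spot where your exposition is slightly loose is the claim that the regime $\lambda>1/2$ is ``symmetric''; it is not symmetric, but a direct computation shows the utility there is $0.61-0.41\lambda\le 0.405<5/12$, so the conclusion still holds. Compared with the paper's proof, yours isolates the real obstruction (convexity of the per-action value in the posterior) and uses one fewer state; the paper's construction is somewhat more redundant but showcases a full-revelation optimum, a phenomenon that resurfaces in their single-contract reduction (Theorem~\ref{singlcontractishard}).

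Two small points worth making explicit if you incorporate this: (i) verify that with the IC-tight contract the agent is exactly indifferent between $a_0$ and $a_1$ and the tie-break favors the principal, so $a_1$ is indeed the induced best response at both extreme posteriors; and (ii) justify $p_1=0$ for the $a_1$-recommending contracts by noting $F^\theta_{a_1,1}<F^\theta_{a_0,1}$ for both $\theta$, so any payment on $\omega_1$ both costs the principal and weakens the incentive for $a_1$.
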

\begin{proof}
    We prove the proposition by constructing an series of examples in which direct mechanisms are suboptimal. Let $0<\delta<0.5$.  We consider instances with $2$ outcomes and $2$ actions (i.e., $a_0$ and $a_1$). To ease the exposition, we do not normalize the reward and cost to the interval $[0, 1]$, and we define $r = [10, 0]$ and $c = [0, 5]$. There are three states $\theta_1, \theta_2, \theta_3$, where the prior probabilities are $\mu(\theta_1) = 1-2\delta$ and $\mu(\theta_2) = \mu(\theta_3) = \delta$. The probability matrices mapping actions to outcomes are defined as:
\begin{equation*}
        F^{\theta_1} = \left[
    \begin{array}{cc}
         1& 0 \\ [3pt]
         1& 0 
    \end{array}
    \right] \quad \quad
     F^{\theta_2} = \left[
    \begin{array}{cc}
         0.1& 0.9 \\ [3pt]
         0.8& 0.2 
    \end{array}
    \right] \quad \quad 
      F^{\theta_3} = \left[
    \begin{array}{cc}
         0.2 & 0.8 \\ [3pt]
         1 & 0  
    \end{array}
    \right] 
\end{equation*}
The key idea is to show that sending two signals (i.e., recommending one action for each signal) achieves less utility than applying the full information revelation strategy. Notice that full revelation requires three signals. 

First, we calculate the optimal principal's utility for full revelation. It is not hard to verify the following: 
\begin{itemize}
    \item For state $\theta_1$, the optimal contract is $p^{\theta_1} = [0, 0]$ and the agent's best response is action $a_0$. Hence, the principal's utility is $10$.
    \item For state $\theta_2$, the the optimal contract is $p^{\theta_2} = [\frac{50}{7}, 0]$ and the agent's best response is action $a_1$. The principal's utility is $\frac{16}{7}$.
    \item For state $\theta_3$, the the optimal contract is $p^{\theta_3} = [\frac{25}{4}, 0]$ and the agent's best response is action $a_1$. The principal's utility is $\frac{15}{4}$. 
\end{itemize}
To summarize, the total utility of the principal under full information revelation is $10(1-2\delta) + \frac{169}{28}\delta$.

Next, we show that any other signaling scheme with only two signals (i.e., a signal $s_0$ inducing action $a_0$ and a signal $s_1$ inducing action $a_1$) will achieve lower principal's utility than full revelation. Assume that a signal  $s \in \{s_0, s_1\}$ induces a {\it non-normalized} posterior distribution   over states $q = [q_1, q_2, q_3]$, i.e.,  $q_i = \mu(\theta_i)\pi(s|\theta_i)$ for each $i$. Then, the signal induces an expected probability matrix $\bar{F}$ defined as: 
    \begin{equation*}
    \bar{F} = \frac{1}{Q}\left[
    \begin{array}{cc}
         q_1 + 0.1q_2+ 0.2q_3& 0.9q_2 + 0.8q_3 \\ [3pt]
         q_1 + 0.8q_2 + q_3& 0.2q_2 
    \end{array}
    \right]
\end{equation*}
where $Q = q_1+q_2+q_3$. Our first step is to show that it is not optimal to ‘‘mix'' $\theta_1$ with the other two states $\theta_2$ and $\theta_3$, i.e., any signal with  $q_1>0$ and at least one of $q_2> 0$ and $q_3 > 0$ is not optimal. We divide the analysis into two cases. 

{\bf Case 1: The best-response action is $a_0$.}  If the agent plays $a_0$, then it is optimal to employ contract $p^{s} = [0, 0]$. Hence, the principal's utility from signal $s$ is $\frac{10q_1 + q_2 + 2q_3}{Q}$.
We show that if we  decompose the signal $s$ into three signals $s_{\theta_1},s_{\theta_2},s_{\theta_3}$ such that each signal $s_\theta$ induces as posterior $\theta$, the principal utility increases.
In particular, similarly to what we showed for the full revelation signaling scheme, the principal can get utility $\frac{10q_1 + \frac{16}{7}q_2 + \frac{15}{4}q_3}{Q} > \frac{10q_1 + q_2 + 2q_3}{Q}$.
This implies that for any signal inducing action $a_0$, mixing $\theta_1$ with $\theta_2$ or $\theta_3$ decreases the utility with respect to full revelation. 

{\bf Case 2: The best-response action is $a_1$.} Let the designed contract be $p = [x, y]$. By the IC constraints for action $a_1$, we have 
\[
x\frac{(q_1 + 0.8q_2 + q_3)}{S} + y \frac{(0.2q_2)}{Q} -5 \ge x \frac{(q_1 + 0.1q_2+ 0.2q_3)}{Q} + y\frac{(0.9q_2 + 0.8q_3)}{Q},
\]
which is equivalent to 
\[
x\frac{(0.7q_2 + 0.8q_3)}{Q} -5 \ge y \frac{(0.7q_2 + 0.8q_3)}{Q}.
\]
%Moreover, we have the IR constraint for $a_1$ as
%\[
%x\frac{(q_1 + 0.8q_2 + q_3)}{S} + y \frac{(0.2q_2)}{S} -5 \ge 0
%\]
Simple calculations show that in the optimal solution $y=0$. Hence, the optimal contract is $p= [\frac{5Q}{0.7q_2 + 0.8q_3}, 0]$. 

Similar to Case 1,  we show that the full revelation scheme gives higher utility. That is to show 
\[
10q_1 + \frac{16}{7}q_2 + \frac{15}{4}q_3\ge
Q[\frac{q_1 + 0.8q_2 + q_3}{Q}(10 - \frac{5Q}{0.7q_2 + 0.8q_3})]
\]
where we remark that we can assume $10 - \frac{5Q}{0.7q_2 + 0.8q_3}>0$ and otherwise, the inequality holds automatically. Sufficiently we only need to show the following holds,
\[
 \frac{16}{7}q_2 + \frac{15}{4}q_3>
[{ 0.8q_2 + q_3}](10 - \frac{5(q_2+q_3)}{0.7q_2 + 0.8q_3})
\]
To show the inequality, it is without loss to assume $q_2+q_3 =1$. Then, it can be shown that the above holds if $1\ge q_2\ge 0$. This implies that for any signal inducing action $a_1$, mixing $\theta_1$ with $\theta_2$ or $\theta_3$ decreases the utility with respect to full revelation.

Finally, we know that the optimal direct mechanism with $2$ signals recommending action $a_0, a_1$ must be
\begin{equation*}
    \pi^{D} = \left[
    \begin{array}{cc}
         1& 0 \\ [3pt]
         0& 1 \\ [3pt]
         0 & 1
    \end{array}
    \right]
\end{equation*}
It is not hard to calculate the optimal contracts are $p^{s_0} = [0, 0]$ and  $p^{s_1} = [\frac{20}{3}, 0]$. The total utility is 
$10(1-2\delta) + 6\delta < 10(1-2\delta) + \frac{169}{28}\delta$. Hence, we prove that full revelation is the optimal strategy, which concludes the proof.
\end{proof}

In Bayesian persuasion, a revelation-principle-style argument is usually applied to join two signals as one if they induce the same best-response action for the agent~\cite{kamenica2011bayesian, dughmi2019algorithmic}.  However, the constructed example shows a counter-intuitive result: it could be suboptimal to join such two signals. 
%For example, we can see from the proof that it is not optimal to combine two signals both inducing action $a_1$.
This observation poses an intriguing question: how many signals are needed to implement an optimal mechanism for Problem~\ref{original_problem_contract_menu}? We leave it as an open question for future works.

Our previous result shows that optimal mechanisms might be complex and include many signals. Our next theorem complements this result showing that approximately optimal mechanism with a menu of explicit contracts cannot be computed efficiently. Indeed,  Problem~\ref{original_problem_contract_menu}  does not admit an additive PTAS unless $\P=\NP$. Interestingly, in our reduction optimal contracts required high payments. This is a necessary condition since, as we show in Section~\ref{sec:bounded} approximately optimal contracts with bounded (constant) payments can be approximated in quasi-polynomial time.

\begin{theorem}\label{menuhandressrprof}
    There is a constant $\rho>0$ such that it is NP-Hard to find a $\rho$-additive approximation of the Problem \ref{original_problem_contract_menu}.
\end{theorem}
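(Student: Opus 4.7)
The proof will be a gap-preserving reduction from the hardness of finding disjoint independent sets in almost $q$-colorable graphs due to Khot--Saket \cite{khot2012hardness}: for any constants $q\geq 2$ and $\epsilon>0$ there exists a graph $G=(V,E)$ for which it is \NP-hard to distinguish the \yi, where $V$ can be partitioned into $q$ independent sets each of size at least $(1-\epsilon)|V|/q$, from the \ni, where every independent set has size at most $\epsilon|V|$. Starting from such a $G$, I would build an instance of \menu\ whose optimal principal utility jumps by an additive constant between the two cases.

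\textbf{Construction.} Create one state $\theta_v$ for each vertex $v\in V$, with prior $\mu(\theta_v)=1/|V|$. I would use two outcomes $\Omega=\{\omega_g,\omega_b\}$ with rewards $r_{\omega_g}=1$, $r_{\omega_b}=0$, and, besides the opt-out $a_0$, one action $a_v$ per vertex $v$ with a moderate cost $c_{a_v}=c^\star$. The matrix $F^{\theta_u}$ will be crafted so that action $a_v$ yields the good outcome $\omega_g$ with high probability $\alpha$ when $u=v$, with probability $\alpha$ when $\{u,v\}\notin E$ (``compatible'' vertices), and with probability $\beta\ll\alpha$ when $\{u,v\}\in E$. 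Thus $a_v$ is incentive-compatible under a signal $s$ only if the posterior of $s$ is supported on the closed neighborhood-complement of $v$, which forces every signal supporting $a_v$ to be concentrated on an independent set containing $v$. I will need the cost-to-gap ratio to be tight enough that the minimum contract payment that induces $a_v$ is roughly $c^\star/(\alpha-\beta)$ (a large number, matching the remark in the theorem statement) and leaves the principal a positive per-signal utility only when the posterior of the signal is sufficiently concentrated on vertices non-adjacent to $v$.

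\textbf{Yes/No analysis.} In the \yi, the principal uses $q$ signals, one per independent set $I_k$ of the $q$-coloring, each deterministically revealing the set; under signal $s_k$ the principal picks any vertex $v\in I_k$ (or the representative that maximizes local utility) and posts the contract that induces $a_v$. Because the posterior of $s_k$ is concentrated on an independent set, the IC constraint is satisfied and the expected reward per unit of probability mass is at least some $\gamma>0$, yielding principal utility at least $(1-\epsilon)\gamma$. In the \ni, consider any feasible mechanism. For every signal $s$, the action recommended (if not $a_0$) is some $a_v$, and the IC constraint forces the probability mass of $s$ on the closed neighborhood of $v$ to be at most a tiny fraction of the total mass of $s$; consequently the mass of $s$ concentrated on the corresponding independent set is at most $\epsilon|V|/|V|=\epsilon$ times the mass of $s$. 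Summing over signals, and using that in the \ni\ every independent set has size $\leq\epsilon|V|$, the total useful probability mass across all signals is $O(\epsilon)$, which caps the principal's utility by $O(\epsilon)$. Choosing the parameters $\alpha,\beta,c^\star,q$ so that the \yi\ utility exceeds the \ni\ utility by a constant $\rho>0$ completes the reduction.

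\textbf{Main obstacle.} The delicate step is handling the fact, established in Proposition~\ref{sumoptimaiontamenuscontr}, that a direct revelation principle fails for \menu: an optimal mechanism may use arbitrarily many signals, so I cannot assume one signal per recommended action. I will address this by proving a structural lemma: in the constructed instance, for any signal with recommended action $a_v$, only the mass concentrated on non-neighbors of $v$ contributes positively to the principal's utility, and the IC constraint forces the neighbor mass to be vanishingly small. This lets me ``charge'' each signal to the independent set formed by the support of its useful mass, bounding the principal's utility by a quantity that depends only on the maximum total size of a family of (not necessarily disjoint) independent sets, which in turn is controlled by the \ni\ guarantee of Khot--Saket. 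Once this charging argument is in place, the additive gap follows by straightforward parameter tuning.
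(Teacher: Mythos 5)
Your high-level plan — reduce from Khot--Saket's disjoint-independent-set hardness in almost-$q$-colorable graphs — matches the paper, but the construction you sketch does not create any gap between the Yes and No cases, so the reduction would fail. The defect is structural: you use only two outcomes $\{\omega_g,\omega_b\}$, so a contract is just a pair $(p_g,p_b)\in\mathbb{R}^2_+$, and the IC comparison between two actions $a_v$ and $a_{v'}$ reduces to comparing $\sigma_v$ and $\sigma_{v'}$, where $\sigma_v=\sum_{u\notin \mathcal{N}(v)}\prob{\theta_u\mid s}$ is the posterior mass on the closed non-neighborhood of $v$. The IC constraint only requires the recommended $a_v$ to maximize $\sigma_v$; it does \emph{not} force the posterior to be supported on an independent set, nor does it push neighbor mass to zero. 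More damaging, under \emph{full information revelation} every signal reveals a single state $\theta_u$, the principal recommends $a_u$, $\sigma_u=1$, the minimal $p_g$ is $c^\star/(\alpha-\gamma)$ (where $\gamma$ is the probability of $\omega_g$ under the outside option), and the per-state principal utility is exactly $\alpha\bigl(1-\tfrac{c^\star}{\alpha-\gamma}\bigr)$ — a constant that is entirely independent of the graph. Full revelation therefore achieves the same utility in Yes and No instances, so the additive gap is $0$, not a constant $\rho>0$, and both your Yes lower bound and your No upper bound are in fact attained simultaneously by the same trivial mechanism. Your ``charging'' argument in the No case is chasing a constraint that doesn't exist: nothing in your IC system ties the probability mass of a signal to an independent set, because the contract cannot target specific vertices with only two outcomes.

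What the paper does, and what your construction is missing, is a per-vertex outcome $\omega_v$ together with three families of auxiliary actions per vertex. The per-vertex outcomes allow the contract to pay differently on different vertices. The action $\hat a_v$ (which concentrates outcome mass on $\omega_v$) forces the incentivizing payments for the target action $a^*$ to be spread roughly uniformly across many $\omega_v$, rather than dumped on one vertex. The actions $\bar a_v$ and $\tilde a_v$ make it unprofitable to simultaneously put large posterior mass on $\theta_v$ and large payment on $\omega_{v'}$ for any edge $(v,v')\in E$; it is precisely these ``edge'' actions $\tilde a_v$ that inject the graph structure into the IC constraints, so that a signal with high principal utility must implicitly select a large independent set. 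None of these mechanisms can be emulated in a 2-outcome instance. To salvage your approach you would at minimum need (i) one outcome per vertex so the contract lives in $\mathbb{R}^{|V|}_+$, and (ii) deviation actions whose attractiveness grows when the payment vector or the posterior is ``edge-concentrated''. Without those, the minimum incentivizing payment under full revelation is already graph-independent and there is no gap to preserve.
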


\begin{proof}
We prove the hardness by reducing from a problem related to finding disjoint independent sets. In particular, \citet{khot2012hardness} show that given an integer $k$, an integer $q$
such that $q \ge 2^{k}+1$, and an arbitrarily small constant $\epsilon>0$, the following problem is NP-Hard:

Given a graph $(V,E)$, decide whether:
\begin{itemize}
    \item {\bf ``Yes" case:} The are $q$ disjoint independent sets $V_1, \ldots, V_q \subseteq V$, such that $|V_i|=\frac{1 - \epsilon}{q}|V|$ for each $i \in [q]$.
    \item {\bf``No" case:} There are no independent sets of size $\frac{1}{ q^{k+1}}|V|$.
\end{itemize}

We reduce from the previous problem with $k=1$, $q=3$, and $\epsilon=\frac{1}{100}$.
Then, either there exist $3$ disjoint independent sets of size at least $\bar k = \frac{33}{100}|V|$,  or all the independent sets have size at most $\hat k = \frac{1}{9}|V|$.
%or all the independent sets have size at most where $\hat k$, where $\bar k=\frac{33}{100}|V|$ and $\hat k=\frac{1}{9}|V|$.

\textbf{Construction.} For each vertex $v \in V$, there is one state of nature $\theta_v$ and one outcome $\omega_v$. Moreover, there are two additional outcomes $\omega^*$ and $\omega_\varnothing$. The prior $\mu$ is uniform over all the states. The reward on outcome $\omega^*$ is
$r_{\omega^*}=1$, while the reward of every other outcome $\omega \neq \omega^*$ is $r_{\omega}=0$. We define a parameter $\delta = \frac{1}{|V|10^{5}}$ that will be used in the following. %By setting $\delta < \frac{1}{4(\hat{k}-1)}$ such that  $\frac{3}{4}-\hat{k}\delta \ge \frac{1}{2}-\delta$,

Next, we elaborate on the construction of the matrix $F^{\theta_v}$ for all $\theta_v \in \Theta$. All other non-specified entries are $0$. Figure \ref{proceduremultisingleprocess} Provides an example of the construction.

\begin{itemize}
    \item There is an action $a^*$, whose cost is  $c_{a^\star}=1/4$. For each $\theta_v \in \Theta$, let $F^{\theta_v}_{a^*, \omega^*}=\frac{1}{2}$. Additionally, let $F^{\theta_v}_{a^*, \omega_v}=\delta$ and $F^{\theta_v}_{a^*, \omega_\varnothing}=\frac{1}{2}-\delta$. 
    \item There exists an action $\hat a_v$ for each $v\in V$, whose cost is $c_{\hat a_v}=1/16$. Thus, in total, there are $|V|$ such actions $\hat a_v$. For each $\theta_v \in \Theta$, let $F^{\theta_v}_{\hat a_{v'}, \omega^*}=1/4$ for all $v' \in V$. Moreover, let $F^{\theta_{v}}_{\hat a_v, \omega_v}=\hat k \delta$ and $F^{\theta_v}_{\hat a_v, \omega_{\varnothing}}=\frac{3}{4}- \hat k \delta$. For $v' \neq v$, let $F^{\theta_v}_{\hat a_{v'}, \omega_{\varnothing}}=\frac{3}{4}$. 
\item There is an action $\bar a_v$ for each $v\in V$,  whose cost is $c(\bar a_v)=0$. For each action $a_v,  v\in V$,   let $F^{\theta_{v'}}_{\bar a_{v}, \omega_{v}} = \frac{\delta}{10}$ and $F^{\theta_{v'}}_{\bar a_{v}, \omega_\varnothing} = 1- \frac{\delta}{10}$ for all $ v'\in V$. 
\item For each $v\in V$, we add an action $\tilde a_v$ whose cost is $c(\tilde a_v)=0$. Consider each state $\theta_v, v\in V$. If there is an edge between nodes $v$ and $v'\neq v$, i.e., $(v, v')\in E$, let $F^{\theta_{v}}_{\tilde a_{v'}, \omega_{v'}}=1$. If $(v, v')\notin E$, let $F^{\theta_{v}}_{\tilde a_{v'}, \omega_{\varnothing}}=1$. Finally, $F^{\theta_{v}}_{\tilde a_{v}, \omega_{\varnothing}}=1$.
\end{itemize}

\begin{figure}[t]
\centering
\includegraphics[scale=0.5]{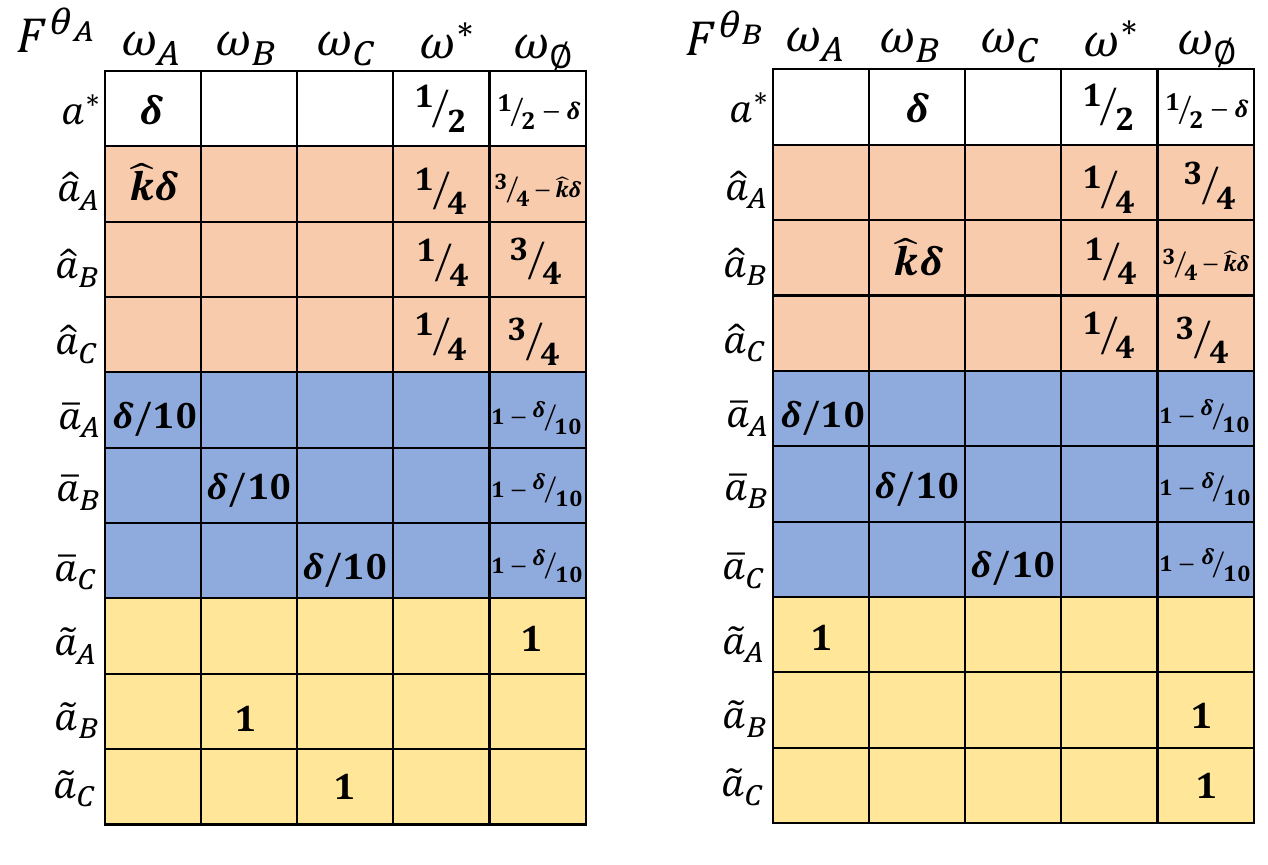}
\caption{An example for the reduction of Theorem \ref{menuhandressrprof}, where the graph consists of $3$ nodes A, B, C such that A is connected to B and C and B and C are not connected. All the empty entries are zeros. }
\label{proceduremultisingleprocess}
\end{figure}

Intuitively, our instances guarantees two properties. First, the principal must incentivize the agent to choose action $a^*$ in order to achieve a large utility. Second, incentivize action $a^*$ (with a small payment) is related to find a large independent set.
In particular, the introduction of action $\hat{a}$ ensures that the payments are ‘‘sufficiently uniform'' over the outcomes associated with nodes in $V$. 
Moreover, actions $\bar{a}$ and $\tilde{a}$  excludes the situations where a signal and the associated contracts would simultaneously assign high probabilities and large payments to two nodes connected by an edge, i.e., not in an independent set.

By this delicate construction, we are able to build a connection between the principal's utility and the size of an independent set.
Our proof employs the  following two key lemmas: Lemma \ref{lm:yes} and \ref{ourlemmanocase}.
In Lemma \ref{lm:yes} we exploit that the independent sets are almost a partition of $V$. In this way, our signaling scheme can exploit three signals where each signal induces an uniform posterior over one of the independent set, guaranteeing an large principal's utility (almost) independently from the realized state of nature. This would not be possible reducing from independent set.
The main challenge comes from the proof of Lemma \ref{ourlemmanocase}. To prove that, we first show that it is without loss of generality to consider contracts with zero payments on the outcome $\omega_{\varnothing}$. Then, we prove by contradiction that if there exists a signal inducing principal's utility larger than $\frac{31}{160}$ then there exists a large independent set.

\begin{lemma}\label{lm:yes}
(Completeness) If ``Yes" case holds, then there exists a mechanism such that the utility of  principal is at least $\frac{13563}{63040}$.
\end{lemma}

\begin{lemma}\label{ourlemmanocase}
   (Soundness) If ``No" case holds, the optimal mechanism achieves an expected principal's utility strictly smaller than $\frac{31}{160}$.
\end{lemma}

The previous two lemmas show that the gap between the utilities of these two cases is $\rho = \frac{13563}{63040} - \frac{31}{160} \ge 0.021$. This proves that it is NP-Hard to approximate our problem within a $\rho$ factor, concluding the proof.
\end{proof}

\subsection{Proof of Lemma~\ref{lm:yes}}

Recall that we have $q=3$ independent sets.    Consider the following mechanism. The mechanism sends a signal $s_i$ for eac independent set $V_i$ with $i=1, 2, 3$, and an additional signal $s_\varnothing$.
For each independent set $V_i$, we let $\pi(s_i|\theta_v)=1$ for each $v \in V_i$.
Moreover, for any $v\notin V_1 \cup V_2 \cup V_3$, we let $\pi(s_\varnothing|\theta_v)=1$.
All the other probabilities are set to $0$. It is easy to see that the signaling scheme is well-defined.

Next, we define the payments.
For each signal $s_i$, we set $p^{s_i}_{\omega_{v}}=\frac{3}{16}\frac{\bar k}{\delta (\bar k-\hat k)}$ for outcome $\omega_{v}$ with $v \in V_i$, and payment $0$ on other outcomes.
For signal $s_\varnothing$ we set the payment to $0$ on every outcome.

Consider the signal $s_i$ for $i \in \{1,2,3\}$. By the ``Yes" case, we know that the probability of sending signal $s_i$ is at least $\prob{s_i} = \frac{1-\eps}{q} \ge \frac{\bar k}{|V|} = \frac{33}{100}$.
Then, we show that $a^*$ is an agent's best response to signal $s_i$.
Indeed, the agent's expected utility from playing action $a^*$ is
\begin{align*}
    \sum_{\theta} \prob{\theta|s_i} [\langle F^{\theta}_{a^*}, p^{s_i} \rangle -c_{a^*}] &= \sum_{v \in V_i} \delta \prob{\theta_v|s_i}p^{s_i}_{\omega_{v}}-c_{a^*} \\
    & = \bar k \delta \frac{1}{\bar k} \frac{3}{16}\frac{\bar k}{\delta (\bar k-\hat k)} -\frac{1}{4} \tag*{($|V_i| = \bar{k}$ and $\prob{\theta_v|s_i} = \frac{1}{\bar{k}}$)} \\
    & =  \frac{- \bar k+ 4\hat k}{16(\bar k-\hat k)}=\frac{103}{3152}. 
\end{align*}
For $v \in V_i$, the utility of action $\hat a_v$ is at most
\begin{align*}
    \sum_{\theta} \prob{\theta|s_i} [\langle F^{\theta}_{\hat{a}_v}, p^{s_i} \rangle -c_{\hat{a}_v}] &=  \hat{k}\delta \prob{\theta_v|s_i}p^{s_i}_{\omega_{v}}-c_{\hat{a}_v} \\
    & = \hat k \delta \frac{1}{\bar k}\frac{3}{16} \frac{\bar k}{\delta (\bar k-\hat k)}-\frac{1}{16}\\
    &= \frac{-\bar k+4 \hat k}{16(\bar k-\hat k)}=\frac{103}{3152}.
\end{align*}
Since $a^*$ guarantees a larger principal's utility, due to tie-breaking in favor of the principal, the agent will  choose action $a^*$ over $\hat a_v$.

Moreover, for other $v \notin V_i$, the expected payment to the agent is zero. Hence, the agent will not choose action $\hat a_v$.

For $v \in V_i$, the utility of action $\bar a_v$ is at most
\begin{align*}
    \sum_{\theta} \prob{\theta|s_i} [\langle F^{\theta}_{\bar{a}_v}, p^{s_i} \rangle -c_{\bar{a}_v}] = \frac{\delta}{10} p^{s_i}_{\omega_{v}}-c_{\bar{a}_v} 
     = \frac{\delta}{10} \frac{3}{16}\frac{\bar k}{\delta (\bar k-\hat k)}
    =\frac{3}{160} \frac{297}{197}<\frac{103}{3152}.
\end{align*}
Moreover, since the payment on outcome $p^{s_i}_{\omega_{v}} = 0$ for other $v \notin V_i$, the expected payment to the agent is zero. Hence, the agent will not choose action $\bar a_v$.

Finally, for any $v \in  V$, the utility of action $\tilde a_{v}$ is $0$
since by the construction of $p^{s_i}$ which only has positive payments for independent nodes,  $\prob{\theta_v|s_i} p^{s_i}_{\omega_{v'}}=0$ for each $(v,v') \in E$.

Hence, we conclude that the played action is $a^*$ and the expected principal's utility from signal $s_i$:
\begin{align*}
    \sum_{\theta_v} \prob{\theta_v|s_i} \langle F^{\theta_v}_{a^*}, r-p^{s_i}  \rangle &= \sum_{\theta_v} \prob{\theta_v|s_i} F^{\theta_v}_{a^*, \omega^*} r_{\omega^*} -  \sum_{\theta_v: v\in V_i} \prob{\theta_v|s_i} F^{\theta_v}_{a^*, \omega_v} p^{s_i}_{\omega_v}\\
    & = \frac{1}{2} - \frac{1}{\bar{k}} \bar{k} \delta \frac{3}{16}\frac{\bar k}{\delta (\bar k-\hat k)} = \frac{1}{2}- \frac{891}{3152}= \frac{685}{3152}.
\end{align*}

Finally, for signal $s_\varnothing$, the agent will take one action from $\{\bar{a}_v\}_{v\in V}$ and $\{\tilde{a}_v\}_{v\in V}$, which leads to $0$ principal's utility.
Overall, the principal's utility is at least 
\[
\sum_{i\in \{1, 2, 3\}} \prob{s_i} \sum_{\theta_v} \prob{\theta_v|s_i} \langle F^{\theta_v}_{a^*}, r-p^{s_i}  \rangle \ge \frac{685}{3152} \frac{99}{100} =\frac{13563}{63040}. 
\]

\subsection{Proof of Lemma~\ref{ourlemmanocase}}

Recall that the ``No" case implies all the independent sets have sizes strictly smaller than $\hat k$.
For any signal $s$ and its associated induced posterior $(\prob{\theta_v|s})_{\theta_v \in \Theta}$, we want to show that the optimal expected utility of the principal is strictly less than  $\frac{31}{160}$. Then, an averaging argument is sufficient to prove the theorem.

Suppose by contradiction that the utility of principal is at least  $\frac{31}{160}$ under some signal $s$.
First, it is easy to see that to obtain an utility at least $\frac{31}{160}$, the principal must incentivize action $a^\star$.
Indeed, the social welfare of any other action is at upper bounded by that of any action $\hat{a}_v, v\in V$, and in particular by
\[
\sum_{\theta} \prob{\theta|s} [\langle F^{\theta}_{\hat a_v}, p^{s} \rangle -c_{\hat a_v}] = \frac{1}{4}-\frac{1}{16}=\frac{3}{16}.
\]
Let $p$ be the payment scheme induced by signal $s$, where to simplify the exposition we remove the dependency on $s$. We first show that we can set $p_{\omega_{\varnothing}} = 0$. Formally, given the signal $s$, we show that there always exists an an optimal contract with $p_{\omega_{\varnothing}} = 0$.
\begin{claim}
     For any signal $s$, there exists an optimal contract with $p_{\omega_\varnothing} = 0$.
\end{claim}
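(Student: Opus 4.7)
My plan is as follows. In the regime relevant to the soundness argument, the incentivized action must be $a^\star$, so it suffices to show that any contract $p$ inducing $a^\star$ can be transformed into a contract $p'$ with $p'_{\omega_\varnothing}=0$ that still induces $a^\star$ and yields the same principal utility. For the remaining cases, any contract whose incentivized action $a$ has $F^\theta_{a,\omega^\star}=0$ yields non-positive principal utility and is dominated by the null contract $p\equiv 0$, which trivially has $p_{\omega_\varnothing}=0$.

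The construction shifts payment from $\omega_\varnothing$ to $\omega^\star$ while exactly preserving the agent's expected utility under $a^\star$: set $p'_{\omega_\varnothing}=0$, $p'_{\omega^\star}=p_{\omega^\star}+(1-2\delta)\,p_{\omega_\varnothing}$, and $p'_\omega=p_\omega$ for every other outcome. The multiplier $1-2\delta$ is precisely the ratio between the posterior-expected probabilities of $\omega_\varnothing$ and $\omega^\star$ under $a^\star$, namely $(\tfrac{1}{2}-\delta)/\tfrac{1}{2}$. By construction the decrease $(\tfrac{1}{2}-\delta)\,p_{\omega_\varnothing}$ in the expected payment on $\omega_\varnothing$ is exactly matched by the increase $\tfrac{1}{2}(1-2\delta)\,p_{\omega_\varnothing}$ on $\omega^\star$, so the expected payment under $a^\star$, and hence the principal's utility, is unchanged.

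The remaining task is to verify that $a^\star$ is still a best response under $p'$. Writing $\E_i[\omega]=\sum_{\theta}\mu(\theta)\pi(s\mid\theta)\,F^\theta_{i,\omega}/\Pr[s]$ for the posterior-expected probability of outcome $\omega$ under action $i$, the change in the agent's expected utility for an alternative action $i$ equals $p_{\omega_\varnothing}\bigl[(1-2\delta)\,\E_i[\omega^\star]-\E_i[\omega_\varnothing]\bigr]$. Inspecting the construction gives $\E_i[\omega^\star]\in\{0,\tfrac{1}{4}\}$ for every $i\neq a^\star$ and $\E_i[\omega_\varnothing]\geq 0$, and a short case analysis then shows that this quantity is non-positive: it is bounded by $-\tfrac{1}{2}p_{\omega_\varnothing}+O(\delta\,p_{\omega_\varnothing})$ for $\hat a_v$ (where $\E_i[\omega_\varnothing]\geq \tfrac{3}{4}-O(\delta)$), and it equals $-p_{\omega_\varnothing}\E_i[\omega_\varnothing]\leq 0$ for $\bar a_v$ and $\tilde a_v$ (since $\E_i[\omega^\star]=0$). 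Hence every alternative action weakly loses utility under the swap, so $a^\star$ remains the best response.

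The step I expect to be most delicate is the check for $\tilde a_v$ when many neighbors of $v$ carry large posterior mass, because then $\E_{\tilde a_v}[\omega_\varnothing]$ can be close to zero and one might worry that shifting payment to $\omega^\star$ disproportionately helps the agent play $\tilde a_v$. The saving grace is that $\E_{\tilde a_v}[\omega^\star]=0$: adding payment on $\omega^\star$ contributes nothing to $\tilde a_v$'s utility, so the swap can only make $\tilde a_v$ less attractive relative to $a^\star$, regardless of the posterior structure.
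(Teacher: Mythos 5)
Your construction is identical to the paper's: you shift $(1-2\delta)\,p_{\omega_\varnothing}$ onto $\omega^\star$, zero out $\omega_\varnothing$, and leave the rest fixed — exactly the paper's $p_{\omega^*} = p^*_{\omega^*} + 2(\tfrac{1}{2}-\delta)p^*_{\omega_\varnothing}$. The paper verifies the new IC constraint for $\hat a_v$ by chaining through the old IC constraint under $p^*$, whereas you decompose the argument into ``utility of $a^\star$ unchanged'' plus ``each alternative's utility weakly decreases,'' tracked by the sign of $(1-2\delta)\E_i[\omega^\star]-\E_i[\omega_\varnothing]$. These are arithmetically equivalent; your decomposition is perhaps the cleaner way to organize it, and correctly isolates why $\bar a_v,\tilde a_v$ are harmless (they place zero mass on $\omega^\star$).

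One small imprecision: for $\hat a_v$ you bound $\E_{\hat a_v}[\omega_\varnothing]\ge \tfrac{3}{4}-O(\delta)$, but the subtracted term is $\prob{\theta_v\mid s}\,\hat k\delta\le \hat k\delta$, and $\hat k\delta = \tfrac{|V|}{9}\cdot\tfrac{1}{|V|10^5}=\tfrac{1}{9\cdot 10^5}$ is a constant, not $O(\delta)$. The conclusion survives because that constant is far below $\tfrac{1}{2}$, but the notation understates the dependence. This is cosmetic; the proof stands and matches the paper's.
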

\begin{proof}
First, since $\delta = \frac{1}{|V|10^{5}} < \frac{1}{4(\hat{k}-1)}$, we have  $\frac{3}{4}-\hat{k}\delta \ge \frac{1}{2}-\delta$. Given any optimal contract $p^*$ for signal $s$, we construct a new contract $p$ such that $p_{\omega_v} = p^*_{\omega_v}$, $p_{\omega_\varnothing} = 0$ and $p_{\omega^*} = p^*_{\omega^*} + 2(\frac{1}{2}-\delta)p^*_{\omega_\varnothing}$. We first show that the action $a^*$ is still incentivized under this new contract.

Note that under the new contract $p$, the incentives for actions $\bar{a}_v$ and $\tilde{a}_v$ are decreased since the payment $p_{\omega_\varnothing}$ is set to be $0$. Consider action $\hat{a}_v$,  we have 
\begin{align*}
\sum_{\theta} \prob{\theta|s} &[\langle F^{\theta}_{a^*}, p \rangle -c_{a^*}]\\&=\frac{1}{2} p_{\omega^*} + \sum_{v' \in V} \delta \prob{\theta_{v'}|s}p_{\omega_{v'}}  - \frac{1}{4}\\
    &=\frac{1}{2}( p^*_{\omega^*} + 2 (\frac{1}{2}-\delta)p^*_{\omega_{\varnothing}})+ \delta \sum_{v' \in V}\prob{\theta_{v'}|s}p^*_{\omega_{v'}}  - \frac{1}{4} \\
&\ge \frac{1}{4} p^*_{\omega^*} +\delta \hat k \prob{\theta_{v}|s}p^*_{\omega_{v}} + p^*_{\omega_{\varnothing}} \sum_{v' \in V} \prob{\theta_{v'}|s}F^{\theta_{v'}}_{a_{v}}(\omega_{\varnothing})  -\frac{1}{16} \tag*{(IC constraint for $a^*$)} \\
& \ge \frac{1}{4} (p_{\omega^*}^* +2 (\frac{1}{2}-\delta)p^*_{\omega_{\varnothing}})+\delta \hat k \prob{\theta_{v}|s}p^*_{\omega_{v}}  -\frac{1}{16}\\
& = \frac{1}{4} p_{\omega^*} +\delta \hat k \prob{\theta_{v}|s}p_{\omega_{v}}  -\frac{1}{16} = \sum_{\theta} \prob{\theta|s} [\langle F^{\theta}_{\hat{a}_v}, p \rangle -c_{\hat{a}_v}],
\end{align*}
where the second inequality follows by $\frac{1}{2}(\frac{1}{2}-\delta) \le \sum_{v' \in V} \prob{\theta_{v'}|s}F^{\theta_{v'}}_{a_{v}}(\omega_{\emptyset})$, which holds since $\frac{3}{4}-\hat{k}\delta \ge \frac{1}{2}-\delta$. This implies that $a^*$ is the agent's best-response action.
Finally, we can verify that the expected payment for action $a^*$ does not change. This concludes the proof.
\end{proof}

Next, we provide some upper bounds on the payment over outcomes $\omega_v$, $v\in V$.

\begin{claim}\label{claim:hard}
    For all $v\in V$, it holds:
   \[\prob{\theta_v|s} p_{\omega_v}\le \frac{1}{2\hat k}  \sum_{v' \in V} \prob{\theta_{v'}|s}p_{\omega_{v'}} - \frac{11}{320\delta \hat k}.\] 
\end{claim}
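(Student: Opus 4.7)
The plan is to chain together two inequalities that are available at this point in the proof of Lemma~\ref{ourlemmanocase}: the IC constraint that action $a^*$ is weakly preferred to $\hat a_v$, and a bound on $p_{\omega^*}$ derived from the hypothesis that the principal's utility under signal $s$ is at least $\tfrac{31}{160}$. Both bounds will be written in a form that retains the posterior-weighted sum $\sum_{v'\in V}\prob{\theta_{v'}|s}p_{\omega_{v'}}$, and the claim will then follow by a single substitution.

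First, I will exploit the hypothesis on the principal's utility. Since the preceding discussion has established that $a^*$ must be the played action, and the previous claim allows us to take $p_{\omega_\varnothing}=0$, the principal's utility under signal $s$ reduces to $\tfrac{1}{2}(1-p_{\omega^*}) - \delta\sum_{v'\in V}\prob{\theta_{v'}|s}p_{\omega_{v'}}$. Imposing that this is at least $\tfrac{31}{160}$ rearranges to
\[
p_{\omega^*} \;\le\; \tfrac{49}{80} \;-\; 2\delta\sum_{v'\in V}\prob{\theta_{v'}|s}\,p_{\omega_{v'}}.
\]
The important point here is to keep this as a \emph{linear} bound in the posterior-weighted sum, rather than weakening it to the numeric inequality $p_{\omega^*}\le \tfrac{49}{80}$.

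Second, I will write down the IC constraint guaranteeing that $a^*$ is weakly preferred to $\hat a_v$. With $p_{\omega_\varnothing}=0$, and recalling that $F^{\theta_{v'}}_{\hat a_v,\omega_v}$ is nonzero only when $v'=v$ (with value $\hat k\delta$), this IC inequality expands and rearranges into
\[
\hat k\delta\,\prob{\theta_v|s}\,p_{\omega_v} \;\le\; \tfrac{1}{4}p_{\omega^*} \;+\; \delta\sum_{v'\in V}\prob{\theta_{v'}|s}\,p_{\omega_{v'}} \;-\; \tfrac{3}{16}.
\]
Dividing by $\hat k\delta$ produces an upper bound on $\prob{\theta_v|s}p_{\omega_v}$ whose only problematic term is $\tfrac{p_{\omega^*}}{4\hat k\delta}$.

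Finally, I plug the linear bound on $p_{\omega^*}$ from the first step into this inequality. The contribution $-2\delta\sum_{v'}\prob{\theta_{v'}|s}p_{\omega_{v'}}$, once divided by $4\hat k\delta$, becomes $-\tfrac{1}{2\hat k}\sum_{v'}\prob{\theta_{v'}|s}p_{\omega_{v'}}$, which combines with the pre-existing $+\tfrac{1}{\hat k}\sum_{v'}\prob{\theta_{v'}|s}p_{\omega_{v'}}$ to leave exactly the $\tfrac{1}{2\hat k}\sum$ in the target, while the constants collapse as $\tfrac{49}{320\hat k\delta}-\tfrac{3}{16\hat k\delta}=-\tfrac{11}{320\hat k\delta}$. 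The whole computation is essentially bookkeeping; the one non-mechanical choice, and the place where a careless substitution would fail, is to use the linear bound on $p_{\omega^*}$ rather than the numeric one, since otherwise the coefficient in front of the posterior sum would be $\tfrac{1}{\hat k}$ instead of $\tfrac{1}{2\hat k}$ and the claim would not follow.
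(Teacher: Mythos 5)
Your proposal is correct and follows essentially the same path as the paper's proof: both derive the two key inequalities (the IC constraint comparing $a^*$ to $\hat a_v$, and the bound on $p_{\omega^*}$ coming from the $\tfrac{31}{160}$ utility hypothesis), and both combine them to cancel the $p_{\omega^*}$ term and halve the coefficient of the posterior-weighted sum. The paper just phrases the combination as adding two inequalities rather than substituting one into the other; algebraically these are the same step.
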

\begin{proof}
Recall that $a^*$ is the best-response action. Consider any action $\hat{a}_v$, $v\in V$. By the IC constraints, we have  
\[ \frac{1}{2} p_{\omega^*} + \delta \sum_{v' \in V}\prob{\theta_{v'}|s}p_{\omega_{v'}}- \frac{1}{4} \ge \frac{1}{4} p_{\omega^*} +\delta \hat k \prob{\theta_{v}|s}p_{\omega_{v}} -\frac{1}{16},  \]
implying
%\[ \frac{1}{4}p_{\omega^*} +\delta \sum_{v' \in V}\prob{\theta_{v'}|s}p_{\omega_{v'}} \ge \frac{3}{16} +\delta \hat k \prob{\theta_{v}|s}p_{\omega_{v}} .  \]
%Hence, we can conclude that: 
\begin{align}\label{eq:claim1}
\delta \sum_{v' \in V}\prob{\theta_{v'}|s}p_{\omega_{v'}}  \ge  \frac{3}{16} +\delta \hat k \prob{\theta_{v}|s}p_{\omega_{v}}- \frac{1}{4}p_{\omega^*}. 
\end{align}
Since the principal's utility is at least $\frac{31}{160}$ by assumption, it must be the case that
\[ \frac{1}{2}- \frac{1}{2}p_{\omega^*} - \delta \sum_{v' \in V}\prob{\theta_{v'}|s}p_{\omega_{v'}}   \ge \frac{31}{160}\]
implying
\begin{align}\label{eq:claim2}  -\frac{1}{2}\delta \sum_{v' \in V}\prob{\theta_{v'}|s}p_{\omega_{v'}}  \ge -\frac{49}{320}+\frac{1}{4}p_{\omega^*}.
\end{align}
Then, summing Eq.~\eqref{eq:claim1} and Eq.~\eqref{eq:claim2}, we get:
\[ \frac{1}{2} \delta \sum_{v' \in V}\prob{\theta_{v'}|s}p_{\omega_{v'}}  \ge \frac{11}{320} +\delta \hat k \prob{\theta_{v}|s}p_{\omega_{v}} ,  \]
and
\[ \prob{\theta_{v}|s}p_{\omega_{v}} \le \frac{1}{2 \hat k}  \sum_{v' \in V}\prob{\theta_{v'}|s}p_{\omega_{v'}}-\frac{11}{320\delta \hat k}. \]
This concludes the proof.
\end{proof}

%Moreover, we can bound the expected payment on action $\bar a$ and $\tilde a$.
We proceed providing some upperbounds on the payment on each outcome $\omega_v$, and on the product between the payment on outcome $\omega_v$ and the posterior probability of a ‘‘connected'' state $\omega_{v'}$.

\begin{claim}\label{claim:small}
    It holds:
    \[ p_{\omega_v} \le \frac{10}{\delta} \quad \forall v \in V\]
    and
    \[ \prob{\theta_{v'}|s} p_{\omega_v}\le 1 \quad \forall (v',v)\in E. \]
\end{claim}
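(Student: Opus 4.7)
Proof plan for Claim~\ref{claim:small}. The plan is to exploit the incentive-compatibility constraints for actions $\bar a_v$ and $\tilde a_v$, which are precisely designed to enforce these bounds. The main observation is that both actions have zero cost and (since we already reduced to $p_{\omega_\varnothing}=0$) their expected payments come only from the outcomes $\omega_v$ with nonzero $F$ entries. So each IC inequality for $a^\star$ vs.\ one of these actions directly translates into an upper bound on $p_{\omega_v}$ or on $\prob{\theta_{v'}|s} p_{\omega_v}$.

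First, I would derive an upper bound on the agent's utility from playing $a^\star$. Since by assumption the principal's utility from signal $s$ is at least $\frac{31}{160}$, and the principal's utility equals $\frac{1}{2} - \frac{1}{2}p_{\omega^\star} - \delta \sum_{v'\in V}\prob{\theta_{v'}|s}p_{\omega_{v'}}$, we get
\[
\tfrac{1}{2}p_{\omega^\star} + \delta \sum_{v'\in V}\prob{\theta_{v'}|s}p_{\omega_{v'}} \le \tfrac{49}{160}.
\]
Consequently the agent's utility from $a^\star$ is
\[
\sum_\theta \prob{\theta|s}[\langle F^\theta_{a^\star},p\rangle - c_{a^\star}] = \tfrac{1}{2}p_{\omega^\star} + \delta \sum_{v'\in V}\prob{\theta_{v'}|s}p_{\omega_{v'}} - \tfrac{1}{4} \le \tfrac{9}{160}.
\]

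Next, for the first inequality in the claim, I would apply the IC constraint for action $\bar a_v$. Using $c_{\bar a_v}=0$, $p_{\omega_\varnothing}=0$, and $F^{\theta_{v'}}_{\bar a_v,\omega_v}=\delta/10$ for every $v'$, the agent's utility from $\bar a_v$ equals $\frac{\delta}{10}p_{\omega_v}$. IC then yields $\frac{\delta}{10}p_{\omega_v}\le \frac{9}{160}$, i.e.\ $p_{\omega_v}\le \frac{9}{16\delta}\le \frac{10}{\delta}$.

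For the second inequality, I would apply the IC constraint for action $\tilde a_v$ (for a fixed $v$ such that $(v',v)\in E$). Using $c_{\tilde a_v}=0$, $p_{\omega_\varnothing}=0$, and $F^{\theta_{v''}}_{\tilde a_v,\omega_v}=1$ precisely when $(v'',v)\in E$, the agent's utility from $\tilde a_v$ is
\[
\sum_{v''\colon (v'',v)\in E} \prob{\theta_{v''}|s}\,p_{\omega_v}.
\]
By IC against $a^\star$ this sum is $\le \frac{9}{160}$, and since every term is nonnegative, in particular $\prob{\theta_{v'}|s}\,p_{\omega_v}\le \frac{9}{160}\le 1$. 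There is no real obstacle in this argument; the only thing to be careful about is correctly collecting the $\delta$ and $\hat k$ factors inside the utility expression for $a^\star$ and verifying that the $p_{\omega_\varnothing}=0$ reduction from the previous claim is properly used, so that only the ‘‘named'' outcomes contribute.
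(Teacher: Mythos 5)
Your proposal is correct and follows essentially the same route as the paper: use the assumed lower bound on the principal's utility from $s$ to bound the agent's utility from $a^\star$, then apply the IC constraint of $a^\star$ against $\bar a_v$ (resp.\ $\tilde a_v$), using $p_{\omega_\varnothing}=0$ from the preceding claim. You simply carry tighter constants ($\E[a^\star,p]\le \frac{49}{160}$ rather than the paper's loose $\le 1$), which yields somewhat stronger numerical bounds ($\frac{9}{16\delta}$ and $\frac{9}{160}$) that still imply the stated ones.
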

\begin{proof}
Under signal $s$ and contract $p$, let $\E(a^*, p)$ denote the expected payment of playing action $a^*$. First, due to the principal's utility is at least $\frac{1}{2}- \E(a^*, p)\ge \frac{31}{160}$,  we have $\E(a^*, p) \le 1$. Consider any $\bar{a}_v$ with $v\in V$. By the IC constraint, we have that 
\[\sum_{\theta} \prob{\theta|s} [\langle F^{\theta}_{\bar{a}_v}, p \rangle -c_{\bar{a}_v}] = \frac{\delta}{10} p_{\omega_v}\le \E(a^*, p)-\frac{1}{4}\le 1 , \]
implying the first part of the lemma.
Similarly, consider any $\tilde a_v$, $v \in V$. By the IC constraint, it holds
\[\sum_{\theta} \prob{\theta|s} [\langle F^{\theta}_{\tilde{a}_v}, p \rangle -c_{\tilde{a}_v}] = \sum_{v':(v',v)\in E}\prob{\theta_{v'}|s} p_{\omega_v}\le  E(a^*)-\frac{1}{4}\le 1\]
implying the second part of the claim.
\end{proof}

Then, we define the set of nodes which state has a large product between posterior probability and payment. Formally, we define the set of nodes $\hat V=\{v \in V: \prob{\theta_v|s} p_{\omega_{v}}  \ge \rho\}$, where $\rho=  \frac{1}{4|V|}\sum_{v \in V} \prob{\theta_{v}|s}p_{\omega_{v}}$. %(\jj{where does this $\rho$ come from?})
Intuitively, $\hat V$ includes all the nodes (i.e., outcomes) that provide a sufficiently large portion of the agent's utility when the agent plays $a^*$.
We show that $\hat V$ must includes at least $ \frac{3}{2}\hat k$ nodes.
\begin{claim}\label{claim:large}
    The size of set $\hat V$ is $|\hat V|\ge \frac{3}{2}\hat k $.
\end{claim}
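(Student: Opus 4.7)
The plan is to prove the claim by a simple double-counting (averaging) argument that combines the defining threshold of $\hat V$ with the per-vertex upper bound supplied by Claim~\ref{claim:hard}. Let $T := \sum_{v\in V}\prob{\theta_v|s}p_{\omega_v}$, so that $\rho = T/(4|V|)$. I first dispose of the edge case $T=0$: then $\rho=0$, the defining inequality $\prob{\theta_v|s}p_{\omega_v}\ge 0$ holds trivially for every $v\in V$, and so $\hat V=V$ with $|\hat V|=|V|=9\hat k \ge \tfrac{3}{2}\hat k$. From here on I assume $T>0$.

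The main argument proceeds in two short steps. Step one is a lower bound on the mass in $\hat V$. By the defining inequality of $\hat V$, every $v \notin \hat V$ satisfies $\prob{\theta_v|s}p_{\omega_v} < \rho$, so $\sum_{v\notin \hat V}\prob{\theta_v|s}p_{\omega_v} < |V\setminus \hat V|\rho \le |V|\rho = T/4$, and therefore $\sum_{v\in \hat V}\prob{\theta_v|s}p_{\omega_v} > 3T/4$. Step two imports the per-vertex cap from Claim~\ref{claim:hard}: since $\tfrac{11}{320\delta\hat k} \ge 0$ can be dropped, that claim yields $\prob{\theta_v|s}p_{\omega_v} \le T/(2\hat k)$ for every $v\in V$, and in particular for every $v\in \hat V$. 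Summing this cap over $\hat V$ and comparing with step one gives $3T/4 < |\hat V|\cdot T/(2\hat k)$, and dividing through by $T>0$ yields $|\hat V| > 3\hat k/2$, which implies $|\hat V|\ge \tfrac{3}{2}\hat k$ as claimed.

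The main obstacle has essentially already been overcome upstream: Claim~\ref{claim:hard} is precisely the ingredient that provides the per-vertex cap $T/(2\hat k)$ needed to make the counting work, and it is that cap (rather than anything in the proof of this claim itself) that makes the existence of a large $\hat V$ inevitable. The only residual subtleties are the trivial case $T=0$ and observing that the negative correction $-\tfrac{11}{320\delta\hat k}$ in Claim~\ref{claim:hard} can be discarded to obtain the clean bound.
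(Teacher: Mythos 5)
Your proof is correct and uses the same averaging argument as the paper: you split $T=\sum_v \prob{\theta_v|s}p_{\omega_v}$ over $\hat V$ and its complement, bound the complement's contribution by $|V|\rho = T/4$ via the definition of $\hat V$, and bound each term in $\hat V$ by $T/(2\hat k)$ via Claim~\ref{claim:hard}, yielding $|\hat V|\ge \tfrac{3}{2}\hat k$. The only cosmetic difference is your explicit treatment of $T=0$, which is in fact vacuous since Claim~\ref{claim:hard} already forces $T\ge \tfrac{11}{160\delta}>0$ (take any $v$ and use nonnegativity of the left side).
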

\begin{proof}
First, notice that 
\begin{align*}
\sum_{v \in V} \prob{\theta_{v}|s}p_{\omega_{v}}
&=\sum_{v \in \hat V} \prob{\theta_{v}|s}p_{\omega_{v}} + \sum_{v \notin \hat V} \prob{\theta_{v}|s}p_{\omega_{v}}\\
&\le |\hat V| \frac{1}{2\hat k} \sum_{v \in V} \prob{\theta_{v}|s}p_{\omega_{v}} + \sum_{v \notin \hat V} \prob{\theta_{v}|s}p_{\omega_{v}},  \\
&\le |\hat V| \frac{1}{2\hat k} \sum_{v \in V} \prob{\theta_{v}|s}p_{\omega_{v}} + |V| \frac{1}{4|V|}\sum_{v \in V} \prob{\theta_{v}|s}p_{\omega_{v}}\\
&= |\hat V| \frac{1}{2\hat k} \sum_{v \in V} \prob{\theta_{v}|s}p_{\omega_{v}} + \frac{1}{4}\sum_{v \in V} \prob{\theta_{v}|s}p_{\omega_{v}},
\end{align*}
where the first inequality comes from Claim~\ref{claim:hard} and the second from the definition of $\hat V$. This directly implies $|\hat V|\ge \frac{3}{2}\hat k$.
\end{proof}

In the following, we prove that $\hat{V}$ is an independent set, reaching a contradiction. As the first step,  we show that for $v \in \hat V$, both the posterior probability of state $\theta_v$ and the payment on outcomes $\omega_v$ are sufficiently large.

\begin{claim}\label{claim:highProb}
    For each $v \in \hat V$, it holds:
    \[p_{\omega_v}\ge \frac{1}{|V|} \frac{11}{640\delta}  \]
    and 
    \[ \prob{\theta_v|s}\ge \frac{1}{|V|} \frac{11}{6400}.  \]
\end{claim}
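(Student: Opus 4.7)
The plan is to derive a nontrivial lower bound on the total ‘‘posterior-weighted payment'' $T \triangleq \sum_{v' \in V} \prob{\theta_{v'}|s} p_{\omega_{v'}}$ by exploiting Claim~\ref{claim:hard}, then propagate that bound through the definition of $\rho$ and the set $\hat V$, and finally combine it with the per-outcome payment ceiling from Claim~\ref{claim:small} to extract a posterior lower bound.

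First, I would observe that the left-hand side of the inequality in Claim~\ref{claim:hard}, $\prob{\theta_v|s} p_{\omega_v}$, is nonnegative since posteriors and payments are nonnegative. Applying this for any single $v \in V$ to the inequality
\[\prob{\theta_v|s} p_{\omega_v} \le \frac{1}{2\hat k} T - \frac{11}{320\, \delta\, \hat k}\]
yields $T \ge \frac{11}{160\,\delta}$. By the definition $\rho = \frac{1}{4|V|} T$, this gives $\rho \ge \frac{11}{640\,|V|\,\delta}$.

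Next, for $v \in \hat V$ the defining inequality $\prob{\theta_v|s} p_{\omega_v} \ge \rho$ combined with $\prob{\theta_v|s} \le 1$ immediately yields the first bound $p_{\omega_v} \ge \rho \ge \frac{1}{|V|} \cdot \frac{11}{640\,\delta}$. For the second bound, I would invoke the upper bound $p_{\omega_v} \le \frac{10}{\delta}$ from Claim~\ref{claim:small} and rearrange $\prob{\theta_v|s} \ge \rho / p_{\omega_v} \ge \frac{\delta \rho}{10} \ge \frac{1}{|V|} \cdot \frac{11}{6400}$.

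Both estimates are simple chains of inequalities, so there is no real obstacle here beyond correctly threading the constants; the only subtle step is recognizing that Claim~\ref{claim:hard} applied at any vertex (together with nonnegativity of its left-hand side) is already strong enough to force $T$ to be large, which in turn is what makes $\rho$ large enough to give the desired bounds via the definition of $\hat V$.
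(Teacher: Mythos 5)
Your proposal is correct and follows essentially the same route as the paper: both first deduce $\sum_{v'} \prob{\theta_{v'}|s} p_{\omega_{v'}} \ge \frac{11}{160\delta}$ from Claim~\ref{claim:hard} (you spell out the nonnegativity step that the paper leaves implicit), then combine the definition of $\hat V$ with $\prob{\theta_v|s}\le 1$ for the payment bound and with the ceiling $p_{\omega_v}\le 10/\delta$ from Claim~\ref{claim:small} for the posterior bound.
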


\begin{proof}
We start by observing that  Claim~\ref{claim:hard} implies
\[ \sum_{v' \in V} \prob{\theta_{v'}|s}p_{\omega_{v'}}\ge \frac{11}{160\delta}. \]
Hence, by definition of $\hat V$ it must be the case that
\[\prob{\theta_v|s} p_{\omega_{v}} \ge \frac{1}{4|V|} \frac{11}{160\delta}\]
for each $v \in \hat V$. This directly implies the first part of the claim.
Moreover, by Claim~\ref{claim:small},  
$p_{\omega_v}\le\frac{10}{\delta}$ holds, 
which implies
\[\prob{\theta_v|s}\ge \frac{1}{|V|} \frac{11}{6400}.\]

\end{proof}

Equipped with the above technical results, we are ready to show that $\hat V$ is an independent set.
 \begin{claim}
    The set $\hat V$ is an independent set, i.e., there are no edges between any two nodes $v,v'\in \hat V$ such that $(v,v')\in E$
\end{claim}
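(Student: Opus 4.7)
The plan is a proof by contradiction: suppose some pair $v,v'\in \hat V$ satisfies $(v,v')\in E$, and derive a contradiction with the standing hypothesis that the principal's utility under signal $s$ is at least $\tfrac{31}{160}$. The key idea is to pair the IC constraints for the two ``edge--detecting'' actions $\tilde a_v$ and $\tilde a_{v'}$; individually each of them is weak, but their combination is tight enough to exclude the edge.

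First, I would exploit the IC constraint for $\tilde a_v$. By construction $\tilde a_v$ has zero cost and, under state $\theta_u$, deterministically produces outcome $\omega_v$ when $(u,v)\in E$ and $\omega_\varnothing$ otherwise. Since $p_{\omega_\varnothing}=0$, the expected payment from playing $\tilde a_v$ reduces to $p_{\omega_v}\sum_{u:(u,v)\in E}\prob{\theta_u|s}$, and the IC constraint for $a^*$ against $\tilde a_v$ gives
\[
p_{\omega_v}\sum_{u:(u,v)\in E}\prob{\theta_u|s}\;\le\; E(a^*,p)-\tfrac{1}{4}.
\]
Because the principal's utility from $s$ equals $\tfrac{1}{2}-E(a^*,p)\ge \tfrac{31}{160}$, we have $E(a^*,p)-\tfrac{1}{4}\le \tfrac{9}{160}$. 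Since $(v',v)\in E$, the sum on the left contains the term $\prob{\theta_{v'}|s}$, so $\prob{\theta_{v'}|s}\,p_{\omega_v}\le \tfrac{9}{160}$. Symmetrically, the IC constraint for $\tilde a_{v'}$ yields $\prob{\theta_{v}|s}\,p_{\omega_{v'}}\le \tfrac{9}{160}$.

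The second step is to multiply these two ``cross'' inequalities. The product rearranges to
\[
\bigl(\prob{\theta_v|s}\,p_{\omega_v}\bigr)\bigl(\prob{\theta_{v'}|s}\,p_{\omega_{v'}}\bigr)\;\le\;\bigl(\tfrac{9}{160}\bigr)^{2},
\]
and by the defining inequality of $\hat V$ each factor on the left is at least $\rho$, which would force $\rho\le \tfrac{9}{160}$. The final step contradicts this using the bound $\sum_{u}\prob{\theta_u|s}\,p_{\omega_u}\ge \tfrac{11}{160\delta}$ already derived inside the proof of Claim~\ref{claim:highProb} from Claim~\ref{claim:hard}: it gives $\rho\ge \tfrac{11}{640\delta|V|}$, and plugging in $\delta=1/(|V|\cdot 10^{5})$ yields $\rho\ge \tfrac{11\cdot 10^{5}}{640}\gg \tfrac{9}{160}$, the desired contradiction. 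Hence no such edge $(v,v')$ exists, $\hat V$ is an independent set, and combined with $|\hat V|\ge \tfrac{3}{2}\hat k>\hat k$ from Claim~\ref{claim:large} this contradicts the ``No'' case and completes Lemma~\ref{ourlemmanocase}.

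The main conceptual obstacle is realizing that one must use \emph{both} edge--detecting actions simultaneously and multiply the resulting bounds: the product of the two off--diagonal quantities $\prob{\theta_{v'}|s}\,p_{\omega_v}$ and $\prob{\theta_{v}|s}\,p_{\omega_{v'}}$ reassembles into exactly the ``diagonal'' product that the definition of $\hat V$ lower--bounds by $\rho^{2}$. Once this pairing is spotted the remaining calculation is routine, and the polynomial--in--$|V|$ gap between $\rho \sim 1/\delta$ and the constant $\tfrac{9}{160}$ makes the contradiction robust to any reasonable choice of the parameters $\delta$ and $\hat k$.
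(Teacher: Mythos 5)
Your proof is correct and takes a genuinely different, tighter route from the paper's. The paper lower-bounds the single off-diagonal quantity $\prob{\theta_{v'}|s}\,p_{\omega_v}$ by multiplying the two separate lower bounds from Claim~\ref{claim:highProb} ($\prob{\theta_{v'}|s}\ge \tfrac{11}{6400|V|}$ and $p_{\omega_v}\ge\tfrac{11}{640|V|\delta}$) and compares against the upper bound $1$ from Claim~\ref{claim:small}. You instead invoke \emph{both} IC constraints for the edge-detecting actions $\tilde a_v$ and $\tilde a_{v'}$, multiply the two resulting cross-inequalities, and observe that $\bigl(\prob{\theta_{v'}|s}\,p_{\omega_v}\bigr)\bigl(\prob{\theta_v|s}\,p_{\omega_{v'}}\bigr)$ reassembles into the diagonal product $\bigl(\prob{\theta_v|s}\,p_{\omega_v}\bigr)\bigl(\prob{\theta_{v'}|s}\,p_{\omega_{v'}}\bigr)\ge\rho^2$; that pairing trick does not appear in the paper. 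It is not cosmetic: the naive product of Claim~\ref{claim:highProb}'s bounds picks up two factors of $1/|V|$, so after substituting $\delta=\tfrac{1}{|V|\,10^5}$ the product is only of order $1/|V|$ and does not by itself exceed $1$ once $|V|\ge 3$ (the displayed chain in the paper's proof appears to drop one $1/|V|$). Your rearrangement matches each posterior with the payment on its \emph{own} outcome, so you end up comparing $\rho\ge\tfrac{11\cdot 10^5}{640}$ against the constant $\tfrac{9}{160}$, a margin that is robust to the parameter choices. The sharper upper bound $\tfrac{9}{160}$ (rather than $1$) that you derive from the utility hypothesis is a nice bonus but is not even needed once the $\rho^2$ pairing is in place.
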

\begin{proof}
    Suppose by contradiction that there exists two nodes $v,v'\in \hat V$ such that $(v',v)\in E$.
    By Claim~\ref{claim:highProb} and the second part of Claim~\ref{claim:small}, we have 
    \[ 1\ge \prob{\theta_{v'}|s} p_{\omega_v}\ge \frac{1}{|V|}\frac{11}{6400} \frac{11}{640\delta}>1, \]
    where the last inequality holds due to $\delta = \frac{1}{|V|10^{5}}$. Hence, we reach a contradiction, concluding that $\hat V$ is an independent set.
\end{proof}

Finally,  since $\hat V$ is an independent set, we should have that $|\hat V|\le \frac{1}{9}|V|$ by the assumption  of ``No" case. This is in contradiction with Claim~\ref{claim:large}, concluding the proof of Lemma \ref{ourlemmanocase}. 

\subsection{Restriction to Bounded Payments}\label{sec:bounded}
Interestingly, in the hard instances that we designed, optimal contracts require high payments. Here, we show that almost optimal \emph{bounded} menus of contracts can be computed in quasi-polynomial time.
Formally, we assume that all payments are bounded by some constant $B>0$. Note that the bounded payment is a common assumption considered in many recent works on contract design \cite{chen2024bounded, bacchiocchi2023learning, zhu2023sample}.

\begin{proposition}\label{prop:K}
    For any $\eps>0$ and $B>0$, there exists an algorithm running in $\poly(n^{\frac{\log(2nB/\eps)}{2B^2\eps^4}}, |\Theta|)$ time, which returns an incentive-compatible solution with principal's utility at least $\opt- 4\eps$.
\end{proposition}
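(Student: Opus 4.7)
The plan is to approximately solve Program~\ref{original_problem_contract_menu} with bounded payments by discretizing signal posteriors to $K$-uniform distributions in the style of~\cite{cheng2015mixture}, obtaining a mechanism that is approximately optimal and approximately IC, and then rounding it to an exactly IC mechanism using the same $\xi$-IC to IC conversion employed in Section~\ref{sec:ambigous}.

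The first step is a discretization lemma: any feasible bounded mechanism can be replaced, at a cost of at most $\eps$ in the principal's utility and an additive $O(\eps^{2})$ violation of every IC constraint, by a mechanism whose every signal $s$ induces a posterior that is $K$-uniform over $\Theta$, meaning $\prob{\theta|s}=\frac{1}{K}\sum_{t=1}^{K}\mathbb{1}[\theta=\theta_t]$ for some multiset $\{\theta_1,\dots,\theta_K\}\subseteq\Theta$, with $K$ chosen suitably in terms of $n$, $B$, $\eps$. The proof would be by sampling: replace the posterior of each signal by the empirical distribution $\tilde q^{s}$ of $K$ i.i.d.\ draws from $\prob{\cdot|s}$. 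Both the principal's utility and each IC constraint are linear functionals of the posterior, with ranges controlled by the payments (bounded by $B$) and rewards (bounded by $1$); Hoeffding's inequality together with a union bound over the $O(n)$ action-comparison constraints yields the claimed deviation. Bayes plausibility is preserved because $\mu$ itself is a convex combination of $K$-uniform point masses.

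Given this, the algorithm enumerates all $K$-uniform posteriors (at most $n^{K}$ of them, after absorbing $|\Theta|$ into the polynomial factor via standard support reductions on the outer mixture), solves one small LP per candidate $q$ to obtain the best bounded contract $p^{(q)}\in[0,B]^{m}$ and the induced best-response action $a^{(q)}$, and then solves a single outer LP with variables $\alpha(q)\ge 0$ subject to $\sum_q \alpha(q)=1$ and $\sum_q \alpha(q)\, q=\mu$, maximizing the expected principal utility $\sum_q \alpha(q)\, U(q,p^{(q)},a^{(q)})$. By the discretization lemma this LP is feasible with value at least $\opt-\eps$. To eliminate its $O(\eps^{2})$ IC slack, apply the bounded analog of Lemma~\ref{lemma_eps_toicbyloss}, which perturbs each chosen $p^{(q)}$ by a small linear-in-reward term that strictly rewards the intended action; starting from effective budget $B-\eps$ keeps the result within $[0,B]$. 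This step costs at most $(n+1)\sqrt{\eps^{2}}=(n+1)\eps$ in utility; a final rescaling of $\eps$ by a constant yields the advertised total loss of $4\eps$.

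The main obstacle is calibrating $K$ to balance the two competing error sources: the Hoeffding deviation in the IC violations (which depends on $B$ through the range of the functional $\langle F^{\theta}_{i},p^{s}\rangle$), and the $\sqrt{\xi}$ blow-up in the $\xi$-IC to IC rounding. Setting both comparable to $\eps$ pins down the value of $K$ that produces the exponent $\log(2nB/\eps)/(2B^{2}\eps^{4})$ on $n$ in the running time. A secondary subtlety, and the reason one cannot simply enumerate over $n$ direct signals, is the failure of the revelation principle established in Proposition~\ref{sumoptimaiontamenuscontr}; the $K$-uniform enumeration is what circumvents it while keeping the overall running time quasi-polynomial in $n$ and polynomial in $|\Theta|$.
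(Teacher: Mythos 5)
Your overall blueprint — decompose each posterior as a $\gamma$-mixture of $K$-uniform distributions, use Hoeffding to control both the utility and the IC deviations, solve an outer LP with the Bayes-plausibility constraint $\sum_q \alpha(q)\,q=\mu$ and a per-posterior inner LP for the contracts, then round the approximately-IC mechanism to an exactly-IC one via a linear perturbation $\bar p=(1-\eta)p+\eta r$ — is the same as the paper's, and the calibration of $K$ to $\frac{\log(2nB/\eps)}{2B^2\eps^4}$ matches.

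However, there is a genuine quantitative gap in your last step. You invoke Lemma~\ref{lemma_eps_toicbyloss} with $\xi=\eps^2$ and conclude a loss of $(n+1)\sqrt{\eps^2}=(n+1)\eps$, then claim a ``final rescaling of $\eps$ by a constant'' brings this to $4\eps$. But $(n+1)$ is not a constant: to make $(n+1)\eps' \le \eps$ you would need $\eps'\approx\eps/(n+1)$, and since $K$ scales as $1/\eps'^4$, the enumeration would blow up to $n^{\Theta\bigl(n^4\log(nB/\eps)/(B^2\eps^4)\bigr)}$, which is no longer quasi-polynomial. The source of the spurious $(n+1)$ is that Lemma~\ref{lemma_eps_toicbyloss} measures the $\xi$-IC slack per signal in \emph{absolute} (unnormalized by $\prob{s}$) terms, and then sums a loss of $\xi/\eta$ over $n$ direct signals. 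That accounting does not transfer here. The correct move — which the paper makes in Claim~\ref{lineqepsictoic} — is to apply the perturbation $\bar p=(1-\eta)p^*+\eta r$ independently at each posterior $\tilde q$ in the mixture, where the $2\delta$-IC slack is stated in \emph{normalized} per-posterior units. This gives a per-posterior loss of $\frac{2\delta}{\eta}+\eta\le 3\sqrt{\delta}$ (choosing $\eta=\sqrt{\delta}$), and because this is already a per-unit-probability quantity, the $\gamma_{\tilde q}$-weighted average over posteriors remains $3\sqrt{\delta}=3\eps$ with no factor of $n$. Together with the $\eps$ lost to the small-$\gamma$-mass tail of the decomposition (bounded by $(\eps/B)\cdot B=\eps$), this gives exactly the advertised $4\eps$ without any rescaling. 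A secondary, more minor point: phrasing the discretization as ``replace each posterior by \emph{the} empirical distribution of $K$ draws'' suggests a one-shot random replacement, which would not preserve Bayes plausibility; what is needed (and what your parenthetical about $\mu$ being a convex combination of $K$-uniform masses hints at) is the exact decomposition $q^*=\sum_{\tilde q}\gamma_{\tilde q}\tilde q$ with $\gamma_{\tilde q}=\Pr[\text{empirical}=\tilde q]$, together with the Hoeffding bound that the $\gamma$-mass of posteriors with IC-slack exceeding $\delta$ is at most $\eps/B$.
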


\section{Single Explicit Contract}\label{sec:single}

\begin{figure}[t]
\centering
\includegraphics[scale=0.56]{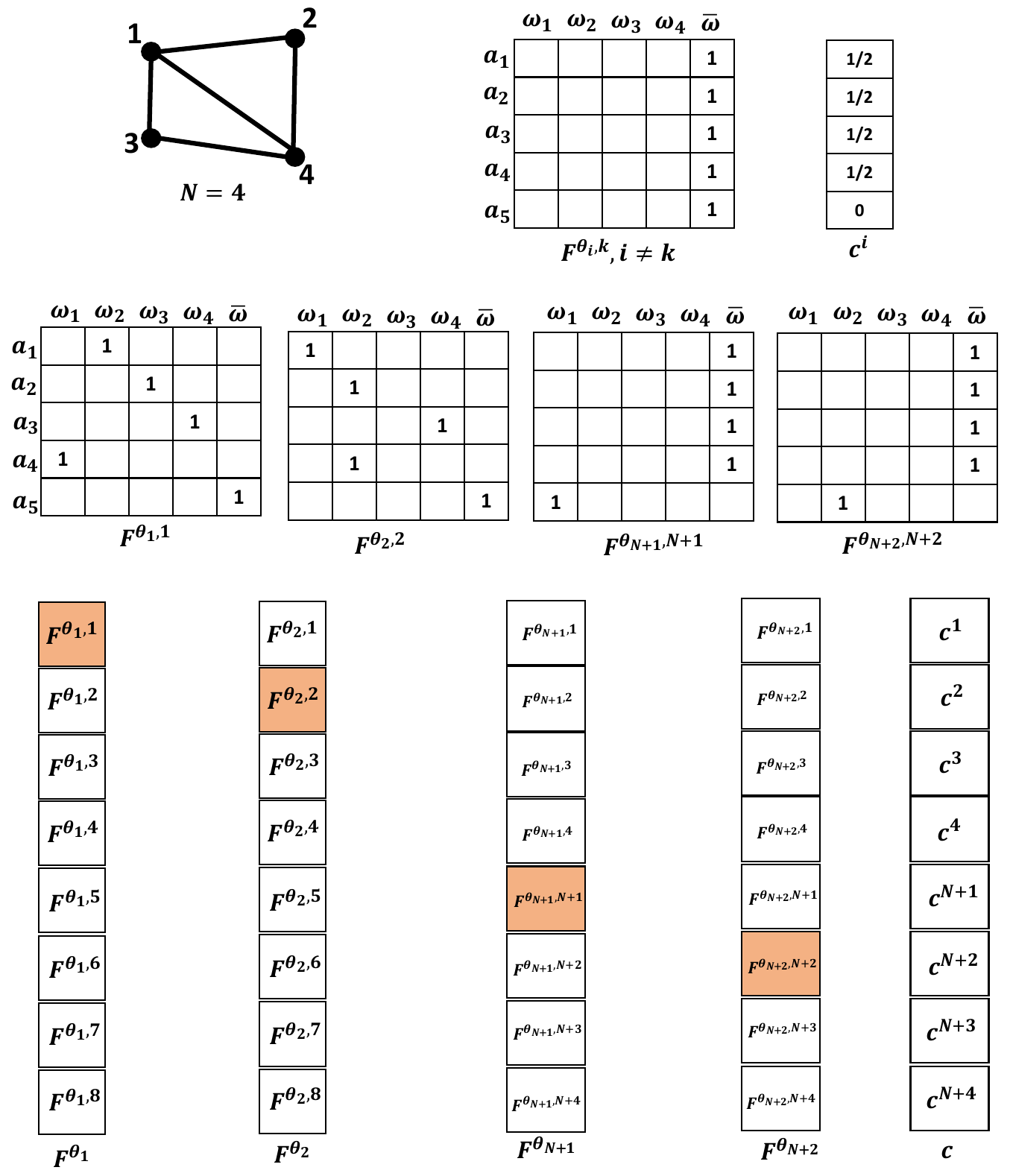}
\caption{An example for the reduction of Theorem \ref{singlcontractishard}, where the empty entries are all zeros.}
\label{processsreductionsinglecontacthard}
\end{figure}

In this section, we focus on the computation of a mechanism with a single explicit contract, which has the least flexibility. As the flexibility of design further decreases, one may expect that a direct mechanism is also suboptimal as Proposition \ref{sumoptimaiontamenuscontr}. However, in sharp contrast to the settings with an explicit menu and ambiguous contracts, a direct mechanism exists and the optimum is indeed achievable. 
Intuitively, this follows since all signals share the same contract. Hence, once the optimal contract is fixed, the information design problem is linear and a revelation-principle-like argument holds.

\begin{proposition}\label{singlcontractdirectmech}
    Problem \ref{original_problem_contract_single_exp} always admits an  optimal direct and IC mechanism.
\end{proposition}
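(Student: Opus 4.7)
The plan is to prove the two halves of the proposition separately: a revelation-principle-style reduction to direct IC mechanisms, followed by an argument that the supremum is actually attained. Both rely crucially on the fact that the contract $p$ is a single object, independent of both the signal and the state.

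For the first half, starting from any feasible $(\pi,p)$ with an arbitrary signal set $\Sigma$ and best-response function $a:\Sigma\to\mathcal{A}$, I would define a collapsed mechanism on $\Sigma'=\{s_i: i\in\langle n\rangle\}$ by setting $\pi'(s_i|\theta)=\sum_{s\in\Sigma:\,a(s)=a_i}\pi(s|\theta)$ and keeping $p$ unchanged. The IC inequality for $s_i$ in the new mechanism is exactly the $\mu(\theta)\pi(s|\theta)$-weighted sum of the original IC inequalities at all $s$ with $a(s)=a_i$; this summation works precisely because $p$ does not vary with $s$, so the very same contract appears on both sides of each summand. The principal's utility is invariant under this regrouping. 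Dropping unused signals yields $|\Sigma'|\le n$ with distinct actions recommended by distinct signals, matching the paper's definition of a direct mechanism. This contrasts with the menu-of-contracts case (Proposition~\ref{sumoptimaiontamenuscontr}) because there the collapse step would require combining different contracts, whereas here it is trivial.

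For the second half, the key intuition is that the pathology behind Proposition~\ref{theoremsupereconmanntoacdhi} (a vanishing signal probability paired with a diverging state-dependent payment) cannot occur here, since a single contract cannot be ``concentrated'' on one signal: any increase in $p_k$ contaminates the objective on every signal that realizes outcome $\omega_k$ with positive probability. I would formalize this by showing that we may restrict without loss of generality to $p\in[0,M]^m$ for a constant $M$ depending only on the instance data. For outcomes $\omega_k$ of zero realization probability under the mechanism, lowering $p_k$ to $0$ preserves IC, because the coefficient of $p_k$ on every deviation IC constraint equals $-\sum_\theta \mu(\theta)\pi(s_i|\theta)F^\theta_{j,k}\le 0$, and it does not change the objective. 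For outcomes with strictly positive realization probability, the objective coefficient on $p_k$ is strictly negative, so optimality forces $p_k$ down to the minimum value consistent with IC, which is bounded by a constant in terms of $c$ and the entries of the matrices $F^\theta$. Once $p$ is confined to a compact cube and $\pi$ to the product simplex, the feasible region is compact and the objective, together with the principal-favorable tie-breaking, is upper semicontinuous, so a maximizer exists.

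The main technical obstacle will be pinning down the constant $M$ cleanly and handling the tie-breaking-induced discontinuities rigorously: a maximizing sequence could oscillate between different best-response configurations, and one must verify that the limit mechanism, after applying principal-favorable tie-breaking, remains feasible with at least the limiting objective value. This reduces to upper hemicontinuity of the best-response correspondence in $(p,\pi)$, which is standard once $p$ is bounded. Combining the two steps then yields an optimal direct and IC mechanism, as required.
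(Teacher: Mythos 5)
Your first half (collapsing signals that recommend the same action into one and preserving IC by summing the weighted IC inequalities, using that $p$ is shared) is exactly the paper's argument; no issues there.

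The second half has a genuine gap in the boundedness step. You split outcomes into those with zero realization probability (set $p_k=0$, fine) and those with strictly positive realization probability, and then assert that ``optimality forces $p_k$ down to the minimum value consistent with IC, which is bounded by a constant in terms of $c$ and the entries of the matrices $F^\theta$.'' That last claim is unjustified and, as stated, false: the IC constraints that lower-bound $p_k$ have coefficients of the form $\sum_\theta \mu(\theta)\pi(s|\theta)\bigl(F^\theta_{a(s),k}-F^\theta_{j,k}\bigr)$, which depend on $\pi$ and can be arbitrarily small as $\pi(s|\theta)\to 0$. So the ``minimum $p_k$ consistent with IC'' is not uniformly bounded over feasible $\pi$; a maximizing sequence could in principle have $\pi(s|\theta)\to 0$ while $p_k\to\infty$, and your argument gives you no handle on that. (This is precisely the mechanism driving the non-existence example for \amb, so it is the one thing you cannot wave away.) The paper instead derives a $\pi$-independent bound: fix $i(\omega)\in\arg\max_i\sum_\theta\mu(\theta)F^\theta_{i,\omega}$ (which has $\sum_\theta\mu(\theta)F^\theta_{i(\omega),\omega}>0$, else the outcome is never realized and can be deleted). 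If $p_\omega>\tfrac{2}{\sum_\theta\mu(\theta)F^\theta_{i(\omega),\omega}}$, then the agent, by ignoring the signal and always playing $i(\omega)$, secures utility $\ge p_\omega\sum_\theta\mu(\theta)F^\theta_{i(\omega),\omega}-c_{i(\omega)}>1$; since the agent best-responds, this lower-bounds the agent's realized utility, hence the expected payment exceeds $1>\mathbb{E}[\text{reward}]$, so the principal's utility is negative. Thus the optimum is never attained with such a $p$, and one may restrict to a compact box. Replacing your coefficient-of-$p_k$ argument with this agent-deviation bound closes the gap; the rest of your plan (compactness of $p\times\pi$, upper semicontinuity under principal-favorable tie-breaking, existence of a maximizer) then matches the paper's.
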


Although the case with a single explicit contract has a simpler information structure than that with an explicit menu, we show that they still cannot be computed efficiently.
In particular, we show that the problem does not admit a PTAS unless $\P=\NP$.
Our main idea is to reduce our problem to the Bayesian contract design problem. Intuitively, this can be achieved by ‘‘forcing'' the signaling scheme to be full revelation.
It is easy to see that in this way the problem of designing a contract $p$ is exactly a Bayesian contract design problem.
Clearly, it is not possible to explicitly constrain the signaling scheme to be full revelation. Indeed, the main challenge of the proof is to design hard instances in which the optimal signaling scheme is directly revealing the state. On top of this, we need to design hard instances of the Bayesian principal-agent problem.
Our reduction does not provide a black-box reduction from the principal-agent problem but requires instances with specific properties.
We design these hard instances of the principal-agent problem taking inspiration from the reduction of~\cite{guruganesh2021contracts}.

\begin{theorem}\label{singlcontractishard}
    There exists a constant $\eps > 0$ such that it is NP-hard to approximate Problem~\ref{original_problem_contract_single_exp} within a multiplicative factor $(1-\eps)$. %In other words, it is APX-hard to compute an optimal single contract.
\end{theorem}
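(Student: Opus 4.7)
The plan is to reduce from the APX-hard problem of minimum dominating set on graphs of maximum degree $3$~\cite{chlebik2008approximation}, following the paper's hint that the construction is inspired by the Bayesian contract hardness of~\cite{guruganesh2021contracts}. The key conceptual observation is the following: if the optimal signaling scheme happens to be full information revelation (one distinct signal per state), then after the signal the agent knows $\theta$ exactly, and the residual problem of choosing a single contract $p$ is precisely Bayesian contract design with a private-type agent whose type is $\theta \sim \mu$. Hence the theorem reduces to constructing \single-instances in which (i) full revelation is optimal and (ii) the resulting Bayesian contract instance encodes dominating set.

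For the construction I would introduce one state $\theta_v$ and one outcome $\omega_v$ for each vertex $v \in V$, together with a small number of auxiliary outcomes and actions that play the role of the gadgets in~\cite{guruganesh2021contracts}. The action distributions $F^{\theta_v}$ would be designed so that, under full revelation of $\theta_v$, the agent under contract $p$ strictly prefers to play an action ``targeting'' some $u \in N[v]$ whenever $p_{\omega_u}$ is above a threshold, and otherwise falls back to the opt-out action $a_0$ (Assumption~\ref{assmutpionoptout}). In that regime the principal's problem becomes: pick a set $S \subseteq V$ on which to pay ``high'' so that $S \cap N[v] \neq \emptyset$ for every $v \in V$, while minimizing the expected payment. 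This is exactly minimum dominating set, so the APX-hardness of dominating set on degree-$3$ graphs gives a $(1-\eps)$ multiplicative gap in the expected payment, and therefore in the principal's utility.

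The main obstacle will be to ensure that full information revelation is indeed optimal, since by Proposition~\ref{singlcontractdirectmech} an optimal direct mechanism may use up to $n$ signals and, unlike the menu case, is forced to share a common contract across them. I would prevent pooling by calibrating the gadgets so that for any signal $s$ aggregating two or more states $\theta_v, \theta_{v'}$, the agent's unique best response under the common contract can ``target'' the productive outcome of at most one pooled state, while separating $\theta_v$ and $\theta_{v'}$ into distinct signals lets each state be matched with its own productive action. By choosing the gadget parameters so that this pooling loss is $\Omega(1)$ per pooled pair while the marginal benefit from pooling is strictly smaller, any approximately optimal mechanism must separate states.

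Combining the three ingredients, a YES-instance of dominating set (small dominating set) yields a mechanism of principal's utility at least some value $\alpha$, while a NO-instance caps the principal's utility below $(1-\eps)\alpha$ for a constant $\eps > 0$ inherited from~\cite{chlebik2008approximation}. Hence a $(1-\eps)$-approximation for \single would decide the dominating set problem, which proves the claimed \NP-hardness. The delicate part of the argument, and where most of the technical work will go, is the joint calibration of the costs, probabilities, and gadget sizes needed to simultaneously make full revelation strictly optimal and preserve the APX-hardness gap of dominating set.
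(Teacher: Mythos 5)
Your high-level plan matches the paper's: reduce from dominating set on degree-$3$ graphs, engineer the instance so that full information revelation is optimal, and then inherit the Bayesian contract design hardness of~\cite{guruganesh2021contracts}. The Guruganesh-style gadget you sketch for the post-revelation stage (the agent ``targets'' a vertex $u \in N[v]$ iff $p_{\omega_u}$ crosses a threshold) is essentially what the paper uses, and your observation that the construction should force any pooled signal's best response to be productive for at most one pooled state is exactly the role of the state-indexed action blocks $A^i$ in the paper.

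However, the crux of the argument---why is full revelation optimal?---is where your proposal diverges and, as stated, would not go through. You propose calibrating gadgets so that ``pooling loss is $\Omega(1)$ per pooled pair,'' i.e., a strict separation argument, and then conclude that approximately optimal mechanisms must separate states. The paper does not, and cannot, establish such strict loss: pooling in general is only \emph{weakly} dominated, and forcing a uniform $\Omega(1)$ loss for every pooled pair while preserving the dominating-set gap is not something the construction achieves. What the paper does instead is a careful weak-domination argument: it introduces a \emph{relaxed} program in which the principal can impose any feasible action (Program~\eqref{relexted_signle_contract}), proves via Lemma~\ref{optimalpbraomegaeq0} that there is an optimal contract with $p_{\bar\omega}=0$, and then shows (Lemma~\ref{fullisbetterlmma}) that the per-state contribution of any signal is upper-bounded by $x^s_i(1-p_{\omega(a(s))}) \le \frac{1}{2N}(1-p_{\omega(a(s))})$, which is exactly what full revelation attains. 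Finally, Lemma~\ref{underfullp8anadori} closes the loop by showing the relaxation is tight under full revelation. Your proposal has no analogue of the relaxation step, which is what lets one argue about a clean upper bound rather than about the agent's (hard to control) best responses under pooled posteriors.

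A second gap is structural: the paper's instance uses $2N$ states, not $N$. The extra $N$ states $\theta_{N+1},\dots,\theta_{2N}$ and the dummy outcome $\bar\omega$ are not cosmetic; they provide the fixed baseline utility term $\tfrac{1}{2}$ in the YES-case value $\tfrac{1}{2}+\tfrac{1}{4}-\tfrac{|S|}{4N}$, they are what make $p_{\bar\omega}=0$ without loss of generality, and they penalize overly large payments so that the optimal $p$ restricts to $\{0,\tfrac12\}^{N+1}$. Without them you do not get a clean reduction to dominating-set cost, and the claim that the optimal contract is a threshold indicator of a dominating set would not be justified. In short: the plan is right in spirit, but the ``prevent pooling'' mechanism you propose is the wrong one, and the missing relaxation technique, dummy outcome, and doubled state space are exactly the nontrivial ingredients that make the argument work.
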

\begin{proof}
    We reduce from the problem of finding the smallest dominating set on a graph $G=(V, E)$ where each vertex has a maximum degree at most $3$ \cite{chlebik2008approximation}. A {\it dominating set} for a graph $G$ is a subset $D$ of its vertices, such that any vertex of $G$ is in $D$ or has a neighbor in $D$.

    \medskip

    \textbf{Construction.}  Given an instance of  dominating set on a graph $G=(V, E)$ with $N$ vertices and maximum degree $3$, we construct our instance as follows. 
    There are $2N$ states $\Theta = \{\theta_1, \theta_2, \cdots \theta_{2N}\}$. Each state $\theta_i$ is associated with one action-outcome probability matrix $F^{\theta_i} \in \mathbb{R}^{(2N \times 5) \times (N+1)}$, i.e., each matrix $F^{\theta_i}$ has $10N$ actions and $N+1$ outcomes.
    $F^{\theta_i}$ can be interpeted as a concatenation of $2N$ submatrices $F^{\theta_i, k} \in \mathbb{R}^{5\times (N+1)}$, i.e., $F^{\theta_i} = [F^{\theta_i, 1}; F^{\theta_i, 2}; \dots; F^{\theta_i, 2N}]^\top$. We use $A^i=\{a_{i,1},\ldots, a_{i,5}\}$ to denote the set of actions in the $i^{\textnormal{th}}$ submatrix. We use $\bar{\omega}$ to denote the dummy outcome. Each vertex in graph $G$ is associated with one outcome $\omega$ in matrix $F^{\theta_i}$.

    Next, we construct the matrix $F^{\theta_i}$. {Figure \ref{processsreductionsinglecontacthard}} depicts an example of the construction.
    For $i \in \{1, 2, \dots, N\}$, the $k^{\textnormal{th}}$ submatrix $F^{\theta_i, k}$ is constructed as follows, where with a slight abuse of notation, we define $F^{\theta_i,k}_{a_{t,\omega}}=F^{\theta_i,k}_{a_{k,t},\omega}$ 
    \begin{itemize}
        \item If $k\neq i$, then for each $t \in \{1, 2, \dots, 5\}$, let $F^{\theta_i, k}_{a_t, \bar{\omega}} = 1$ and  $F^{\theta_i, k}_{a_t, \omega} = 0$ for $\omega \neq \bar{\omega}$.
        \item If $k = i$, we first order the neighbors of vertex $i$ in a non-decreasing order. 
        Denote $\delta_j(i)$ as  vertex $i$'s  $j^{\textnormal{th}}$ neighbor. Then, let $F^{\theta_i, i}_{a_5, \bar{\omega}} =1$, $F^{\theta_i, i}_{a_4, \omega_i} =1$. For $j\in \{1, 2, 3\}$, let $F^{\theta_i, i}_{a_j, \omega_{\delta_j(i)}} =1$. All other entries are set as $0$. If vertex $i$ has less than $3$ neighbors, let vertex $i$ be the dummy neighbor of itself and construct the matrix $F^{\theta_i, i}$ similarly.
    \end{itemize}
Next, we construct the $k^{\textnormal{th}}$ submatrix $F^{\theta_i, k}$ for  $i\in \{ N+1, N+2, \dots, 2N-1, 2N\}$ as follows:
    \begin{itemize}
        \item If $k\neq i$, then for any action $t \in \{1, 2, \dots, 5\}$,  let $F^{\theta_i, k}_{a_t, \bar{\omega}} = 1$  while all other entries are $0$.
        \item If $k = i$, then for $t\in \{1, \dots, 4\}$, let $F^{\theta_i, i}_{a_t, \bar{\omega}} = 1$  and $F^{\theta_i, i}_{a_5, \omega_i} = 1$, while all other entries are $0$.
    \end{itemize}

   The cost vector $c \in \mathbb{R}^{10N}$ consists of $2N$ subvectors $c = [c^1, c^2, \dots, c^{2N}]$ where each $c^{i} = [\frac{1}{2}, \frac{1}{2}, \frac{1}{2}, \frac{1}{2}, 0]$.
    By the above construction, we know that $a_{i,5} \in A^i$ serves as an opt-out action.
   The reward vector $r$ has $r_{\bar{\omega}} = 0$ on the dummy outcome $\bar{\omega}$ while $r_{\omega} = 1$ for all other outcomes $\omega \neq \bar{\omega}$. The uniform prior probability for each state is $\mu({\theta_i}) = \frac{1}{2N}$.

\medskip
\textbf{Utility of full-information  revelation $I$.}  Following a similar analysis as \cite{guruganesh2021contracts}, it is possible to show that full information revelation (denoted as $I$) coupled with a suitable contract defined in the following will give principal's utility $\frac{1}{2} + \frac{1}{4} - \frac{1}{4N} |S|$ where $S$ is the smallest dominating set of graph $G$.
{Indeed, it is not hard to see that under full revelation our problem is equivalent to a Bayesian principal-agent problem in which each type $\theta_i$ has five actions defined according to matrix $F^{\theta_i,i}$, and the distribution over types is uniform.}
The optimal contract is $p^* \in \{\frac{1}{2}, 0\}^{N+1}$ which has payment $p_{\omega_i} =\frac{1}{2}$ with $i\in S$ and $p_{\omega_i} = 0$ with $i\notin S$.

\medskip
\textbf{Full information revelation $I$ is the optimal scheme.} Next, we show that for the instance constructed above, the optimal signaling scheme is full revelation.  % From now on, with an upper subscript $i$ in $a$, we use $a^i_k \in A^i$ to denote that it is the $k^{th}$ action in $A^{i}$. 

We start Defining the following relaxation of Problem~\ref{original_problem_contract_single_exp}:
\begin{equation}\label{relexted_signle_contract}
\begin{aligned}
    \sup_{\pi, p, a} \quad & U_R (p, \pi, a) \triangleq \sum_s \prob{s} \sum_{\theta} \prob{\theta|s} \langle F^{\theta}_{a(s)}, r - p \rangle \\
    \textnormal{s.t.} \quad 
    & {\sum_{\theta} \mu(\theta)\pi(s|\theta) [\langle F^{\theta}_{a(s)},  p\rangle -c_{a(s)}] \ge \sum_{\theta} \mu(\theta)\pi(s|\theta) [\langle F^{\theta}_{k}, p \rangle -c_{k}]}, \forall a(s)\in A^{i}\setminus \{a_{i, 5}\}, k \in A^{i}, \exists i \\
    & \sum_{\theta} \mu(\theta)\pi(s|\theta) [\langle F^{\theta}_{a(s)}, p\rangle -c_{a(s)}] \ge 0, \quad \forall a(s)= a_{i,5}, \exists i\\
& p \ge 0, \quad \forall s \in \Sigma 
\end{aligned}
\end{equation}
where $\prob{s}$, $\prob{\theta|s}$ are defined as (\ref{posteriorformaitnlleion}) depending on $\pi$, and 
\begin{itemize}
    \item The first constraint ensures that under signal $s$, if the agent applies an action $a(s) \in A^i\setminus \{a_{i,5}\}$ for some $i$, the utility from $a(s)$ only needs to be greater than that of all other actions in $A^i$;
    \item The second constraint ensures that if the agent applies action $a(s) = a_{i,5} \in A^{i}$ for some $i$, the utility only needs to be nonnegative.
\end{itemize}
It is important to note that Problem (\ref{relexted_signle_contract}) is optimized over the action $a(s)$ recommended by the principal. In other words, the principal has the power to ``impose" an action on the agent as long as the constraints are satisfied. As a comparison, the best-response requirement in \ref{original_problem_contract_single_exp} is more restrictive. Although we use $\sup$ instead of $\max$ in Problem (\ref{relexted_signle_contract}), our final results show that it admits a maximum of the same utility as the original \single.

%\jj{I add a sketch here.}

{\bf High-level idea.}  The remaining part of the proof is mainly to show that the optimal solution of Problem (\ref{relexted_signle_contract}) is indeed the optimal solution for \ref{original_problem_contract_single_exp}.  A key observation (Lemma \ref{optimalpbraomegaeq0}) is that it is without loss of generality to consider a contract with $p_{\bar{\omega}} = 0$ on the dummy outcome for Problem (\ref{relexted_signle_contract}), which is due to the relaxation to IC constraints. With this observation, we are able to show in Lemma \ref{fullisbetterlmma} that the full-information revelation $I$ is optimal for (\ref{relexted_signle_contract}). This is proved by showing that given any signaling scheme $\pi$, we can construct a solution with full-information revelation $I$, achieving at least the same utility. Finally, under full information $I$, 
we establish the equivalence between two problems (\ref{relexted_signle_contract}) and \ref{original_problem_contract_single_exp}, which immediately implies that full information revelation $I$ is indeed optimal for \ref{original_problem_contract_single_exp}. Therefore, the hardness results in \cite{guruganesh2021contracts} apply to \ref{original_problem_contract_single_exp}.

\begin{lemma}\label{lemamdirectfor_relsaxed_proble}
    %There always exists an optimal direct mechanism for the relaxed problem $U_R (p, \pi, a)$.
    Given a feasible $(p, \pi, a)$ to Problem (\ref{relexted_signle_contract}), there exists a direct scheme with the same $(p, a)$ achieving the same utility.
\end{lemma}
\begin{proof}
    Suppose there exist two signals $s_1$ and $s_2$ such that the principal imposes the same action $\hat{a} = a(s_1)=a(s_2) \in A^i \setminus \{a_{i,5}\}$. We have the following hold 
    \begin{gather*}
        {\sum_{\theta} \mu(\theta)\pi(s_1|\theta) [\langle F^{\theta}_{\hat{a}},  p\rangle -c_{\hat{a}}] \ge \sum_{\theta} \mu(\theta)\pi(s_1|\theta) [\langle F^{\theta}_{k}, p \rangle-c_{k}]}, \forall k \in A^{i} \\
        {\sum_{\theta} \mu(\theta)\pi(s_2|\theta) [\langle F^{\theta}_{\hat{a}},  p \rangle-c_{\hat{a}}] \ge \sum_{\theta} \mu(\theta)\pi(s_2|\theta) [\langle F^{\theta}_{k}, p \rangle-c_{k}]}, \forall k \in A^{i}
    \end{gather*}
By defining a signal $s$ in the new scheme $\pi'$ such that $\pi'(s|\theta) = \pi(s_1|\theta) + \pi(s_2|\theta)$, we have 
\begin{equation*}
     \sum_{\theta} \mu(\theta)[\pi(s_1|\theta) + \pi(s_2|\theta)] [\langle F^{\theta}_{\hat{a}},  p\rangle -c_{\hat{a}}] \ge \sum_{\theta} \mu(\theta)[\pi(s_1|\theta) + \pi(s_2|\theta)] [\langle F^{\theta}_{k},  p \rangle-c_{k}] \quad \forall k \in A^{i}.
\end{equation*}
A similar conclusion holds when action $\hat{a}=a_{i,5}$. Since the utility is linear in $\pi$, we have that the utility of the principal under $\pi'$ does not change. Hence, there exists a direct mechanism with the same $(p, a)$.
\end{proof}

Consider a signaling scheme $\pi$. Conditional on $\pi$, let $(p, a)$ be the tuple of contract and action profile that maximizes Problem (\ref{relexted_signle_contract}). Then, a signal $s$ induces probability mass over states as $x^s = [x^s_1, x^s_2, \dots, x^s_{2N}]$ where $x^s_i = \mu(\theta_i) \pi(s|\theta_i)$. 
Moreover, the posterior over states is $y^s = [y^s_1, y^s_2, \dots, y^s_{2N}]$ {where $y^s_i = \prob{\theta_i|s} =  \frac{x_i^s}{\bar{x}}$ and $\bar {x} = \sum_i x_i^s$.}  
Let $F^s= \sum_{i} y^s_i F^{\theta_i}$ be the average probability matrix with the posterior $y^s$.
%%%%Note: I {change $x_s$ to $\bar{x}$}. originally, ${x}_s = \sum_i x_i$ 

%let $(p(\pi), a(\pi))$ be the tuple of contract and action profile that maximizes the relaxed problem $U_R(\pi, {p}, a)$. Define $p\triangleq p(\pi)$ and $a \triangleq a(\pi)$. Signal $s$ introduces probability masses over states as $x = [x_1, x_2, \dots, x_{2N}]$ where $x_i = \mu(\theta_i) \pi(s|\theta_i)$. Then, the posterior probabilities over states is $y = [y_1, y_2, \dots, y_{2N}]$ where $y_i = \prob{\theta_i|s} =  \frac{x_i}{x_s}$ and $x_s = \sum_i x_i$. Let $F= \sum_{i} y_i F^{\theta_i}$ be the average probability matrix.

\begin{lemma}\label{optimalpbraomegaeq0}
    Conditional on $\pi$, the optimal solution  $(p, a)$ to Problem (\ref{relexted_signle_contract}) has $p_{\bar{\omega}} = 0$.
\end{lemma}
\begin{proof}
    We prove this by contradiction. Suppose  that the optimal solution has $p_{\bar{\omega}} >0$. We construct a new contract $\hat{p}$ such that $\hat{p}_{\bar{\omega}} = 0$ and $\hat{p}_{\omega} = p_{\omega}$ for $\omega \neq \bar{\omega}$. In the following, we show that this new contract $\hat{p}$, together with a compatible action profile $\hat{a}$, is feasible under $\pi$ but gives a larger utility to the principal.

Without loss of generality, consider that under any signal $s$, the principal imposes an action $a(s)$ on the agent.
We first consider the case $a(s) \in A^i$ with $i\in \{1, 2, \dots,  N\}$.
\begin{itemize}
    \item If $a(s) = a_{i,5} \in A^{i}$, this action will lead to a {\it negative} utility for the principal since $p_{\bar\omega} > 0$.
    In contrast, under the $(\pi, \hat{p})$, the principal can continue to impose $a_{i,5}$ on the agent, i.e.,  $\hat{a}(s) = a_{i,5} \in A^i$, which satisfies the constraints in Problem (\ref{relexted_signle_contract}).
    This would give zero utility to the principal.
    \item If $a(s) \in A^i\setminus \{a_{i,5}\}$,  by the construction, we know that a non-dummy outcome $\hat{\omega}$ is induced with probability $y^s_i$ while $\bar{\omega}$ is induced with probability $1-y^s_i$.
    By the first constraint in Problem (\ref{relexted_signle_contract}), we have that by comparing with action $a_{i,5} \in A^i$,
    \[
    y^s_i p_{\hat{\omega}} + (1-y^s_i) p_{\bar{\omega}} -\frac{1}{2} \ge p_{\bar{\omega}},
    \]
    implying that $y^s_i p_{\hat{\omega}}  -\frac{1}{2} \ge y^s_i p_{\bar{\omega}} > 0$. Furthermore, we know that for any other action $a' \in A^i\setminus \{a_{i,5}\}$, it leads to another non-dummy outcome  $\omega'$ with the same probability $y^s_i$ and the dummy outcome $\bar{\omega}$ with  probability $1-y^s_i$. By the first constraint in Problem (\ref{relexted_signle_contract}), we have 
    \[
    y^s_i p_{\hat{\omega}} + (1-y^s_i) p_{\bar{\omega}} -\frac{1}{2} \ge y^s_i p_{\omega'} + (1-y^s_i) p_{\bar{\omega}} -\frac{1}{2},
    \]
    which implies that $y^s_i p_{\hat{\omega}}  -\frac{1}{2} \ge y^s_i p_{\omega'}  -\frac{1}{2}$. Hence, under contract $\hat{p}$, the action $\hat{a}(s) = a(s)$ is  feasible but gives more utility to the principal since less payment is needed, i.e., $\hat{p}_{\bar\omega} = 0$.
\end{itemize}
Similarly, we consider the case of $a(s) \in A^i$ with $i \in \{N+1, N+2, \dots, 2N\}$.
\begin{itemize}
\item If $a(s) = a_{i,5} \in A^i$, 
    the second constraint in Problem (\ref{relexted_signle_contract}) only requires nonnegative utility for the agent. Under $(\pi, \hat{p})$,  setting $\hat{a}(s) = a(s)$ obviously satisfies the second constraint in Problem (\ref{relexted_signle_contract}) due to $p\ge 0$, which will benefit more the  principal since the expected payment is less. 
    %\item If $a(s) = a_5 \in A^i$, then under the $(\pi, \hat{p})$, we know that action $a_5$ leads to one non-dummy outcome $\hat{\omega}$ with probability $y_i$ and to dummy outcome $\bar{\omega}$ with probability $1-y_i$. By letting $\hat{a}(s) = a(s)$, the second constraint in Problem (\ref{relexted_signle_contract}) is still satisfies, but the expected payment is less, which benefits more the  principal. 
    \item If $a(s) \in A^i \setminus \{a_{i,5}\}$, the principal gains negative utility since $p_{\bar{\omega}}>0$. By comparing with action $a_{i,5}$, the first constraint in Problem (\ref{relexted_signle_contract}) implies  
\begin{equation}\label{a5asinplus1to2n}
    p_{\bar{\omega}} -\frac{1}{2} \ge y^s_i p_{\hat{\omega}} + (1-y^s_i) p_{\bar{\omega}},
    \end{equation}
    where the right-hand side is due to  that action $a_{i,5}$ leads to one non-dummy outcome $\hat{\omega}$ with probability $y^s_i$ and to dummy outcome $\bar{\omega}$ with probability $1-y^s_i$. (\ref{a5asinplus1to2n}) further implies $y^s_i p_{\bar{\omega}} -\frac{1}{2} \ge y^s_i p_{\hat{\omega}} \ge 0$, which means  $p_{\bar{\omega}} > p_{\hat{\omega}}$. Therefore,  under $(\pi, \hat{p})$, the principal can impose action $\hat{a}(s) = a_{i,5} \in A^i$.  The second constraint is obviously satisfied.  Under $(\hat{p}, \hat{a}(s))$, the principal gains more utility since the expected payment is less, but the expected reward increases. 
\end{itemize}
Finally, we have that $(\hat{p}, \hat{a})$ gives more utility to the principal, contradicting the optimality of $({p}, {a})$. This conclude that $p_{\bar\omega} = 0$.
\end{proof}

Given a scheme $\pi$ and $(p, a)$ satisfying Lemma \ref{optimalpbraomegaeq0},  our next Lemma shows that full information revelation $I$ (with contract and action profile built upon $(p, a)$)  benefits the principal the most. Denote the set of signals in $I$ as $I_{\Sigma} = \{s^{\theta_1}, s^{\theta_2}, \dots, s^{\theta_{2N}}\}$ such that $I(s^{\theta_i}|\theta_i) =1$.

\begin{lemma}\label{fullisbetterlmma}
    Full information revelation $I$ is an optimal scheme for Problem (\ref{relexted_signle_contract}).
\end{lemma}

\begin{proof}
Assume that the signaling scheme $\pi$ is optimal, and conditional on $\pi$, $(p, a)$ maximizes Problem (\ref{relexted_signle_contract}) thus satisfies Lemma \ref{optimalpbraomegaeq0}. We show that with full information revelation, $(I, p, \hat{a})$ gives more utility to the principal where the action profile $\hat{a}$ is feasible for Problem (\ref{relexted_signle_contract}).

By Lemma \ref{lemamdirectfor_relsaxed_proble}, we can consider a direct mechanism $(\pi, p, a)$ for Problem (\ref{relexted_signle_contract}). For each action set $A^i$, by construction, there will be at most two signals respectively inducing two actions in $A^i$: 1) For $i \in \{1, 2, \dots, N\}$, action $a_{i,5} \in A^i$ and action $a_{i,k} \in A^{i}\setminus \{a_5\}$ where with $ \bar{k} =\arg\max_j p_{\delta_j(i)}$,  $k = \bar{k}$ if $p_{\bar{k}} \ge p_i$ otherwise $k=4$, i.e., the action induces the largest payments; and 2) For $i \in \{N+1, N+2, \dots, 2N\}$, action $a_{i,5} \in A^i$ and action $a_{i,1} \in A^i$. 

For each signal $s$, it is important to  note that to maximize the principal utility, the imposed action $a(s)$ will give nonnegative utility to the principal, since otherwise, the principal can impose action $a_{i,5} \in A^i$ with $i \in \{1, 2, \dots, N\}$ that gives $0$ utility.

First, assume that the principal will impose actions from set  $A^i$ with $i \in \{1, 2, \dots, N\}$ under the mechanism $(\pi, p)$. Let  $s$ and $\bar{s}$ be signals corresponding to action $a(s) \in A^i \setminus\{a_{i,5}\}$ and $a(\bar{s})  =a_{i,5} \in A^i$. Note that in this case, we only need to consider state $\theta_i$ since only  $\theta_i$ can induce positive utility to the principal.
\begin{itemize}
    \item If the signaling scheme $\pi$ only send signal $\bar{s}$,  the principal only get $0$ utility from state $\theta_i$.  Under full information revelation,  the principal can continue to set $\hat{a}(s^{\theta_i}) = a_{i,5} \in A^i$, which still gives $0$ utility.
    \item If $\pi$ only send signal $s$ and $a(s) \in A^i \setminus\{a_{i,5}\}$, by the first constraint in Problem (\ref{relexted_signle_contract}) and $p_{\bar{\omega}}=0$ thanks to Lemma \ref{optimalpbraomegaeq0}, we have 
    \begin{align*}
        y^s_i p_{\omega(a(s))} -\frac{1}{2} &\ge y^s_i p_{\omega(a')} - \frac{1}{2}, \quad \forall a' \in A^{i} \setminus \{a_{i,5}\} \\
        y^s_i p_{\omega(a(s))} -\frac{1}{2} &\ge 0
    \end{align*}
    where $\omega(a)$ is the non-dummy outcome induces by action $a$ and we recall that $y^s$ is the posterior under signal $s$. Under scheme $I$, by setting $\hat{a}(s^{{\theta}_i}) = a(s)$, we will continue to have the first constraint in Problem (\ref{relexted_signle_contract}) hold. The epxected utility that the principal gets from signal $s$ under $(\pi, p)$ is 
\begin{align*}
     \prob{s} \sum_{\theta} \prob{\theta|s} F^{\theta}_{a(s)} \cdot (r - p)
    =\quad & \prob{s} F^s_{a(s)} (r-p) \\
    =\quad & \bar{x} \cdot y_i (1-p_{\omega(a(s))}) \\ 
    =\quad & \bar{x} \cdot \frac{x^s_i}{\bar{x}} (1-p_{\omega(a(s))}) \\
    = \quad & x^s_i (1-p_{\omega(a(s))}) \\
    \le \quad & \frac{1}{2N} (1-p_{\omega(a(s))}) 
\end{align*}
where the inequality follows $x^s_i \le \mu(\theta_i) =\frac{1}{2N}$ and $p_{\omega(a(s))} \le 1$ due to that the principal will not impose an action leading to negative utility.  The right-hand side of the inequality is the utility the principal gains from state $\theta_i$ under scheme $I$. Therefore,  full information revelation benefits the principal more.

\item If $\pi$ sends both of two signals $s$ and $\bar{s}$, then by the above arguments, we can set $\hat{a}(s^{{\theta}_i}) = a(s)$ under scheme $I$, which is still feasible. This will give more utility to the principal.
\end{itemize}

Similarly, we consider the action set $A^{i}$ with $i \in \{N+1, N+2, \dots,  2N \}$. Actually, we know that under any signal, the action $a(s) =a_{i,1}$ will not be imposed by the principal due to $p_{\bar{\omega}} = 0$, which will lead to negative utility to the agent. Hence, only $a(s) = a_{i,5} \in A^{i}$ will be imposed. Similarly, the expected utility that the principal gets from $s$ is
\begin{align*}
     \prob{s} \sum_{\theta} \prob{\theta|s} F^{\theta}_{a_{i,5}} \cdot (r - p)
    =\quad & \prob{s} F^s_{a_{i,5}} (r-p) \\
    =\quad & \bar{x} \cdot y^s_i (1-p_{\omega(a_{i,5})}) \\ 
    =\quad & \bar{x} \cdot \frac{x^s_i}{\bar{x}} (1-p_{\omega(a_{i,5})}) \\
    = \quad & x^s_i (1-p_{\omega(a_{i,5})})   \le \frac{1}{2N} (1-p_{\omega(a_{i,5})}) 
\end{align*}
where the right-hand side of the inequality is the utility obtained from $\theta_i$ under full revelation. Hence, we set $\hat{a}(s^{\theta_i}) = a_{i,5}$.

Finally, we need to consider the cases where there exists some action set $A^i$, whose actions are never imposed by the principal under $\pi$.  This implies that the principal gets zero utility from state $\theta_i$.
Under $(I, p)$, for each of these states $\theta_i$, the principal can impose the opt-out action $\hat{a}(s^{{\theta}_i}) = a_{k,5} \in A^k$ for any $k \in \{1, 2, \dots, N\}$. This will give  zero utility to the principal.

Hence, by the above argument, we can conclude that the principal's  utility from $(I, p, \hat{a})$ is larger than that in $(\pi, p, a)$, which is a contradiction. This concludes that $I$ is an optimal strategy. 
\end{proof}

Lemma \ref{fullisbetterlmma} implies that given any scheme $\pi$ together with contract-action profile $(p, a)$ for Problem (\ref{relexted_signle_contract}), one can always construct a new profile under $I$ such that higher principal utility is achieved. Hence, considering full information revelation $I$ is sufficient for Problem (\ref{relexted_signle_contract}). 

Finally, we show that under $I$, the optimal contract-action profile for Problem (\ref{relexted_signle_contract}) coincides with that of the original Problem \ref{original_problem_contract_single_exp}. By \cite{guruganesh2021contracts}, 
we denote the optimal contract and action profile for Problem \ref{original_problem_contract_single_exp} under $I$ as $(p^*, a^*)$.

\begin{lemma}\label{underfullp8anadori}
Under $I$, Problems (\ref{relexted_signle_contract}) and \ref{original_problem_contract_single_exp} have the same optimal contract-action profile.
\end{lemma}
\begin{proof}
Under $I$, $(p^*, a^*)$ obviously is feasible for Problem (\ref{relexted_signle_contract}). Next, we show that $(p^*, a^*)$ is optimal. 

Conditional on $I$, we maximize Problem (\ref{relexted_signle_contract}) and obtain a solution $(p, a)$. Note that $p_{\bar{\omega}} = 0$ by Lemma \ref{optimalpbraomegaeq0}. We first show that $p\le 1$ holds. Assume by contradiction that there exists some outcome $\omega$ such that $p_{\omega}>1$. We construct a new contract $\bar{p}$ such that if $\bar{p}_{\omega} = p_{\omega}$ for $p_{\omega} \le 1$ and $\bar{p}_{\omega} = 1$ for $p_{\omega} > 1$.
\begin{itemize}
    \item For $i\in \{N+1, N+2, \dots, 2N\}$, under signal $s^{{\theta}_i}$,  we know that either $a(s^{{\theta}_i}) =a_{i,5} \in A^i$ or $a(s^{{\theta}_i}) =a_{i,5} \in A^k$ for some $k\neq i$ holds. If the latter case holds, it will give $0$ utility to the principal. $(\bar{p}, a)$ is feasible and gives the same utility to  the principal. If $a(s^{{\theta}_i}) =a_{i,5} \in A^i$, $(\bar{p}, a)$ is obviously also feasible in Problem (\ref{relexted_signle_contract}) and the principal's utility weakly increases.

\item Consider $i\in \{1, 2,\dots, N\}$. If $a(s^{{\theta}_i}) =a_{i,5} \in A^i$, $(\bar{p}, a)$ is feasible to Problem (\ref{relexted_signle_contract}) and gives the principal same utility. If $a(s^{{\theta}_i}) \in A^i \setminus \{a_{i,5}\}$, it is easy to see that $(\bar{p}, a)$ is feasible in Problem (\ref{relexted_signle_contract}) since the cost $c_{a(s^{{\theta}_i})} = \frac{1}{2}$, and the principal's utility weakly increases.
\end{itemize}
Therefore, $p \le 1$ must hold. This implies that the principal can always get nonnegative utility from each state under $I$. This implies that under the scheme $I$, it is not profitable for the principal under any signal $s^{{\theta}_i}$ to impose an action from $A^k$, $k\neq i$, which always induces the dummy outcome. Hence, for each state $\theta_i$, we can constrain the action set for the agent to be $A^i$. Moreover, to maximize the utility, the principal will impose the action that gives the maximum principal utility when satisfying the constraints in Problem (\ref{relexted_signle_contract}). Consider  $i\in \{1, 2, \dots, N\}$. If the imposed action is $a(s^{{\theta}_i}) \in A^i \setminus \{a_{i,5}\}$, this must be the agent's best-response action by the constraints in (\ref{relexted_signle_contract}).  If the imposed action is $a(s^{\theta}_i) = a_{i,5} \in A^i$, which gives zero utility to the principal and agent, this must imply that other actions in $A^i$ give a negative utility to the agent, and thus $a_{i,5}$ is the best-response action. Similarly for $i\in \{N+1, N+2, \dots, 2N\}$, the principal will only impose action $a(s^{\theta}_i) = a_{i,5} \in A^i$, which is also the agent's best-response action. These arguments imply that under $(I, p)$, the imposed action by the principal is indeed the best-response action of the agent. In other words, under scheme $I$, Problem (\ref{relexted_signle_contract}) is equivalent to Problem \ref{original_problem_contract_single_exp}. Therefore, $(p^*, a^*)$ is the optimal solution to Problem (\ref{relexted_signle_contract}).   
\end{proof}

%Combining Lemma \ref{optimalpbraomegaeq0}, \ref{fullisbetterlmma} and \ref{underfullp8anadori}, we know that the optimal solution to Problem (\ref{relexted_signle_contract}) is $(I, p^*, a^*)$, which implies that it is also the optimal solution to the original Bounty Hunter instance. 

Lemma \ref{fullisbetterlmma} and \ref{underfullp8anadori} imply that $(I, p^*, a^*)$ is the optimal solution to Problem (\ref{relexted_signle_contract}), and since the utility of (\ref{relexted_signle_contract}) upper bounds that of \ref{original_problem_contract_single_exp}, it immediately implies that it is also the optimal solution to Problem \ref{original_problem_contract_single_exp}.

By \cite{guruganesh2021contracts} that it is NP-hard to distinguish if the optimal solution extracts expected profit at least $0.67820248214$ or at most $0.678018005$ in a Bayesian contract design problem, the same result also holds for joint design problem with single explicit contract as we have shown that the full information revelation $I$ is optimal. This concludes the proof.
\end{proof}

\section{Explicit Linear Contracts Are Tractable}

In the previous sections, we showed that it is \NP-Hard to compute explicit (single or menus of) contracts in general. This further motivates our study of simpler classes of contracts.  \citet{bajari2001incentives} argue that ``the vast majority of contracts are variants of simple fixed-price (FP) and cost-plus (C+) contracts". Hence, looking for simple contracts in the joint design problem is significant. 
Moreover, this is a necessary requirement to allow the efficient computation of explicit contracts. 
Here, we focus on the most common class of simple contracts, i.e., linear contract.
This simple class of contracts enjoy two important properties: i) it is usually easier to optimize, and ii) it is ‘‘powerful'' and simple enough to be employed in real-world applications.
Indeed, this simple type of contract is widely seen in practice. For example, a lost wallet notice may simply say the reward as {\it half of the cash in the wallet will be shared with the finder}, which is a linear contract of parameter $\frac{1}{2}$. 

Formally, (menus of) linear contracts are defined as follows:
\begin{definition}
    (Linear Contract) A single contract $p\in \mathbb{R}_+^m$ is linear if it is characterized by only one parameter $\alpha \in [0, 1]$, such that $p=\alpha r$. Similarly, a menu of  explicit liner contracts is characterized by a linear contract $p^s = \alpha(s) r$ for all  $ s\in \Sigma$, where $\alpha(s) \in [0, 1]$ depends on signal $s$.
\end{definition}
In the next sections, we show that the joint design with explicit (menus of) linear contracts is indeed tractable. In particular, it admits an FPTAS. 

\subsection{Single Linear Contract}

If the principal restricts to commit to a single linear contract, Problem \ref{original_problem_contract_single_exp} reads as follows
\begin{equation}\label{original_problem_contract_signal_linear}
\begin{aligned}
    \max_{\pi, \alpha \in [0,1]} \quad & \sum_{\theta \in \Theta} \mu(\theta) \sum_{s \in \Sigma} \pi(s|\theta) (1-\alpha ) \langle F^{\theta}_{a(s)},   r \rangle \\
    \textnormal{s.t.} \quad & \sum_{\theta \in \Theta} \mu(\theta)\pi(s|\theta) \Big[ \alpha \langle F^{\theta}_{a(s)},  r \rangle -c_{a(s)} \Big] \ge \sum_{\theta \in \Theta} \mu(\theta)\pi(s|\theta) \Big[  \alpha\langle F^{\theta}_{i},  r \rangle -c_{i}\Big], \quad  \forall s \in \Sigma, i \in \langle n \rangle \\
& \sum_{s \in \Sigma} \pi(s|\theta) = 1, \pi(s|\theta) \ge 0, \quad \forall \theta \in \Theta, s \in \Sigma
\end{aligned}
\end{equation}
By arguments similar to Proposition \ref{singlcontractdirectmech}, we can show that the problem admits a maximum and that we can restrict to consider direct mechanisms. The algorithm is depicted in Algorithm \ref{alg_single_linear}. 
Intuitively, it discretizes the set of possible linear contracts with a grid of size $\epsilon$. Then, once the contract is fixed, it is possible to compute the optimal signaling scheme associated with the contract in polynomial time. The main challenge is to show that we do not lose too much utility by considering only a grid of possible contracts.
We can prove the following guarantees for our algorithm:

\begin{algorithm}[t]
\caption{FPTAS for joint design problem with single explicit linear contract}\label{alg_single_linear}
\textbf{input:}{ parameter $\epsilon>0$}\;
$\alpha^* \gets 0$\;
$U^* \gets 0$ \;
\For{$\alpha = 0, \eps, 2\eps, \dots, 1$}{
  $U \gets$ {utility from solving Problem (\ref{original_problem_contract_signal_linear}) with linear contract} $\alpha$\;
  \If{$U > U^*$}{
      $\alpha^* \gets \alpha$\;
      $U^*\gets U$
    }
}
$\pi^* \gets$ {signaling scheme from solving Problem (\ref{original_problem_contract_signal_linear}) with linear contract} $\alpha^*$\;
\Return $(\pi^*, \alpha^*)$\;
\end{algorithm}

\begin{proposition}\label{proposition61linearsignal}
    For any $\eps>0$, there exists an algorithm running in time polynomial in the instance size and $\frac{1}{\eps}$ that returns a solution to Problem~\eqref{original_problem_contract_signal_linear} with principal's utility  at least $\opt -\eps$, where $\opt$ is the optimal principal's utility.
\end{proposition}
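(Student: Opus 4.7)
The plan is to argue both the polynomial running time and the approximation guarantee of Algorithm~\ref{alg_single_linear}. For the running time, once $\alpha$ is fixed, the objective and all IC constraints of Problem~(\ref{original_problem_contract_signal_linear}) are linear in $\pi(s|\theta)$. By the same direct-mechanism reduction used in the proof of Proposition~\ref{singlcontractdirectmech} (specialized to linear contracts), we may restrict to $|\Sigma| \le n$ and solve a linear program with polynomially many variables and constraints. Iterating over the $O(1/\eps)$ grid points gives total running time polynomial in the instance size and $1/\eps$.

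For the approximation guarantee, let $(\pi^*, \alpha^*, a^*)$ be an optimal solution with value $\opt$. If $\alpha^* > 1-\eps$, then because $\langle F^{\theta}_i, r\rangle \le 1$,
\[ \opt = (1-\alpha^*)\sum_{\theta}\mu(\theta) \sum_s \pi^*(s|\theta)\langle F^\theta_{a^*(s)}, r \rangle \le 1-\alpha^* < \eps,\]
and the algorithm trivially returns utility at least $0 \ge \opt-\eps$. Otherwise I would pick the grid point $\bar\alpha = \lceil \alpha^*/\eps\rceil \eps$, so that $0 \le \bar\alpha - \alpha^* \le \eps$, reuse the signaling scheme $\pi^*$ at $\bar\alpha$, and let $a'(s)$ denote the agent's best response (with tie-breaking in favor of the principal) at each signal $s$. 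For any signal with $\prob{s} > 0$, abbreviate $R_a^s = \sum_\theta \prob{\theta|s} \langle F^\theta_a, r\rangle$. The key observation is a \emph{monotonicity of the best response in $\alpha$}: the IC inequality at $\alpha^*$ applied with alternative $a'(s)$ reads $\alpha^* R_{a^*(s)}^s - c_{a^*(s)} \ge \alpha^* R_{a'(s)}^s - c_{a'(s)}$, and the IC inequality at $\bar\alpha$ applied with alternative $a^*(s)$ reads $\bar\alpha R_{a'(s)}^s - c_{a'(s)} \ge \bar\alpha R_{a^*(s)}^s - c_{a^*(s)}$; adding them yields $(\bar\alpha-\alpha^*)(R_{a'(s)}^s-R_{a^*(s)}^s) \ge 0$, so $R_{a'(s)}^s \ge R_{a^*(s)}^s$ for every used signal. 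Since $\sum_s \prob{s}\, R_{a^*(s)}^s \le 1$, the utility of $(\pi^*,\bar\alpha,a')$ is at least
\[ (1-\bar\alpha)\sum_s \prob{s}\, R_{a^*(s)}^s \ge \opt - (\bar\alpha-\alpha^*) \ge \opt - \eps,\]
and the LP solved at $\bar\alpha$ returns at least this value, so the overall output is at least $\opt-\eps$.

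The main obstacle is that the agent's best response, and hence the principal's utility, is discontinuous in $\alpha$, so a naive ``small perturbation in $\alpha$ implies small change in utility'' argument fails. The monotonicity observation above bypasses this: as $\alpha$ increases, the best response can only move to actions with weakly higher posterior expected reward, so reusing $\pi^*$ at $\bar\alpha$ the expected reward only (weakly) grows, and the only loss relative to $\opt$ comes from the shrinking $(1-\alpha)$ factor, which costs at most $\eps$.
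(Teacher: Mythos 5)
Your proof is correct and follows essentially the same route as the paper's: reduce to a grid of $O(1/\eps)$ linear contracts, reuse the optimal signaling scheme at the nearest grid point above $\alpha^*$, and bound the per-signal loss. The paper derives the per-signal bound $f_s(\alpha^*)\ge f_s(\bar\alpha)-\eps$ by citing the known piecewise-linear, upper-semicontinuous structure of the principal's utility in $\alpha$ with slopes in $[-1,0]$, whereas you re-derive the relevant fact from scratch by adding the two IC inequalities to show that the best response's expected reward is non-decreasing in $\alpha$ at a fixed posterior; the two statements are equivalent, and your version is a slightly more self-contained presentation of the same idea (and you also handle the boundary case $\alpha^*>1-\eps$ explicitly, which the paper glosses over).
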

\begin{proof}
    Suppose that the optimal signaling scheme is $\bar \pi$, and $\bar \alpha$ be the optimal linear contract.  Given signal $s$, let 
    \[f_s(\alpha) = \frac{\sum_{\theta \in \Theta} \mu(\theta)  \bar \pi(s|\theta) (1-\alpha ) \langle F^{\theta}_{a(\alpha)},   r \rangle}{\sum_{\theta \in \Theta} \mu(\theta)  \bar \pi(s|\theta)}\] be the principal's utility, where $a(\alpha)$ is the agent's best-response action under contract $\alpha$. Notice that $\sum_{s}\sum_{\theta \in \Theta} \mu(\theta)  \bar \pi(s|\theta) f_{s}(\bar \alpha)= \opt$.
    It is easy to see that $f_s(\alpha)$ is equivalent to a the utility in principal-agent problem in which the action-outcome distribution induced by an action $a$ is given by $\frac{\sum_{\theta \in \Theta} \mu(\theta)  \pi(s|\theta) F^{\theta}_{a}}{\sum_{\theta \in \Theta} \mu(\theta)  \bar \pi(s|\theta)}$.
    It is well-known that $f_s(\alpha)$ is a piece-wise linear upper-semicontinuous function with at most $n+1$ discontinuous points~\cite{dutting2019simple,dutting2023optimal}. Moreover, the slope of all linear functions is at least $-1$ and non-positive.
    %By the results in \cite{dutting2019simple}, we know that $f^s(\alpha)$ is a piece-wise linear function with at most $n+1$ discontinuous points. Next, we want to show that under signal $s$, $f^s(\alpha)$ is right-continuous (as shown in Figure \ref{rightou_contin_examp}), which means that there is no break point $\bar{\alpha}$ such that $f_s(\bar{\alpha}) > \max\{ \lim_{\alpha \to \bar{\alpha}^-} f_s(\alpha), \lim_{\alpha \to \bar{\alpha}^+} f_s(\alpha) \}$, as shown in Figure \ref{rightou_contin_coutnexamp}. Let $F^s = \sum_{\theta \in \Theta} \mu(\theta) \pi(s|\theta)  F^{\theta}$ and $c^s =\sum_{\theta \in \Theta} \mu(\theta)\pi(s|\theta) c$. Without loss of generality, assume there are $n+1$ discontinuous points (i.e.,  all the actions could be chosen by the agent as $\alpha$ varies) and costs are increasing, i.e., $c^s_i < c^s_{i+1}$ for all $i \in \langle n \rangle$. Suppose in negation that $\bar{\alpha}$ exists and the best response action is $\bar{a}$. By the IC constraint, it implies that 
    %\[
   % F^s_{\bar{a}} r -c^s_{\bar{a}} = F^s_{\bar{a}-1} r -c^s_{\bar{a}-1} ~~~\textnormal{and} ~~~F^s_{\bar{a}} r -c^s_{\bar{a}} = F^s_{\bar{a}+1} r -c^s_{\bar{a}+1}.
    %\]
    %Since $c^s_{\bar{a}} < c^s_{\bar{a}+1}$, we know $F^s_{\bar{a}+1} r > F^s_{\bar{a}} r$ implying $f_s(\bar{\alpha}) <  \lim_{\alpha \to \bar{\alpha}^+} f_s(\alpha)$. Hence, $f_s(\bar{\alpha})$ is right continuous. 
    Then, consider smallest contract $\alpha^*$ in the grid such that $\alpha^*>\bar \alpha$. Clearly, $\alpha^*-\bar \alpha\le \epsilon$.
    Hence, we have that for each $s$ it holds
    \[f_s(\alpha^*)\ge f_s(\bar \alpha)-\epsilon.\]
    Hence, the utility of mechanism $(\pi^*,\alpha^*)$ is:
    \[ \sum_{s}\sum_{\theta \in \Theta} \mu(\theta)  \bar \pi(s|\theta) f_{s}( \alpha^*) \ge\sum_{s}\sum_{\theta \in \Theta} \mu(\theta)  \bar \pi(s|\theta) f_{s}( \bar \alpha)-\epsilon= \opt-\epsilon.  \]

    This implies that when Algorithm~\ref{alg_single_linear} solves Problem~\eqref{original_problem_contract_signal_linear} with contract $\alpha^*$ (which is an LP), it returns a mechanism with utility at least $\opt-\epsilon$.
    Finally, it is easy to see that Algorithm~\ref{alg_single_linear} runs in time polynomial in the instance size and $1/\epsilon$.
    %Under signaling scheme $\pi^*$, the principal's utility is $f(\alpha) = \sum_s f_s (\alpha)$ which is also piece-wise linear and has at most $n(n+1)$ discontinuous points. Furthermore, the above arguments imply that for any discontinuous point $\bar{\alpha}$, we have $\lim_{\alpha \to \bar{\alpha}^-} f(\alpha) <  \lim_{\alpha \to \bar{\alpha}^+} f(\alpha)$. Therefore, thanks to the slope $\frac{df(\alpha)}{d\alpha} \le 1$, by discretizing $[0,1]$ with step $\eps>0$, we can find some $\alpha^*$ such that $f(\alpha^*) > \opt-\eps$. 
    %After obtaining $\alpha^*$, by solving (\ref{original_problem_contract_signal_linear}) with $\alpha =\alpha^*$ via a linear program, one can obtain a scheme $\pi$ such that the principal's utility is at least $\opt-\eps$. It is clear that this process runs in $\poly(n, m ,\frac{1}{\eps})$ time.
\end{proof}

\subsection{Menus of linear contracts}

In this section, we turn our attention to the menus of linear contracts. The problem of finding the optimal menus of linear contracts can be formulated as Problem~(\ref{mneulinearcontractformulatedion}). 
Notice that the payment scheme $\mathcal{P}$ here is represented by a set of linear contracts $\alpha(s)$, one for each signal.
\begin{equation}\label{mneulinearcontractformulatedion}
\begin{aligned}
    \sup_{\pi, \mathcal{P}} \quad & \sum_{\theta \in \Theta} \mu(\theta) \sum_{s \in \Sigma} \pi(s|\theta) (1-\alpha(s)) \langle F^{\theta}_{a(s)},   r \rangle \\
    \textnormal{s.t.} \quad & \sum_{\theta \in \Theta} \mu(\theta)\pi(s|\theta) \Big[ \alpha(s) \langle F^{\theta}_{a(s)},   r \rangle -c_{a(s)} \Big] \hspace{-0.07cm} \ge \hspace{-0.07cm}  \sum_{\theta \in \Theta} \mu(\theta)\pi(s|\theta) \Big[ \alpha(s) \langle F^{\theta}_{i},   r \rangle -c_{i}\Big] \hspace{0.18cm}  \forall s \in \Sigma, i \in \langle n \rangle \\
& \sum_{s \in \Sigma} \pi(s|\theta) = 1, \pi(s|\theta) \ge 0 \quad \forall \theta \in \Theta, s \in \Sigma
\end{aligned}
\end{equation}

Designing a menu of linear contracts is much more challenging than designing a linear contract. Indeed, as we showed in Proposition~\ref{sumoptimaiontamenuscontr} for arbitrary contracts, direct signaling schemes are not optimal and it is unclear whether an optimal menu exists. Hence, we cannot simply apply an enumeration as in Proposition  \ref{proposition61linearsignal}.
Nonetheless, we show that a polynomial number of signals and contracts is sufficient to approximate the optimal mechanism.
Then, we use this result to design an FPTAS for the problem.

Our algorithm is depicted in Algorithm \ref{alg_menu_linear}. It considers only contracts from a grid of size $1/\epsilon$. To each of this contracts $\alpha$, it assigns a direct ‘‘marginal'' signaling scheme that employs (at most) one signal $s_{\alpha,i}$ for each action $a_i$. %Signal $s_{\alpha,i}$ is associated with linear contract $\alpha$. 
Hence, there are at most $\frac{n}{\eps}$ signals.
Then, it solves Problem (\ref{mneulinearcontractformulatedion}), where the payment function is fixed. Hence, the problem is linear.

Our proof works in two main steps. First, we show that we are not loosing too much utility considering  only a grid of possible linear contracts. Then, we observe that signals that induce the same payment and best-response can be joined since, once the payment is fixed, the best response region is convex.
 Below, we formally introduce our result.

\begin{algorithm}[t]
\caption{FPTAS for joint design problem with a menu of explicit linear contracts}\label{alg_menu_linear}
\textbf{input:}{parameter $\epsilon>0$}\;
initialize set of signals $\Sigma\gets \emptyset$\;
\For{$\alpha= 0,\eps,2\epsilon, \ldots,1$}{
\For{$i \in \langle n \rangle $}{
    $\Sigma\gets \Sigma \cap \{s_{\alpha,i}\}$\;
    $\alpha(s_{\alpha,i})\gets \alpha$\Comment*[r]{assign linear contract $\alpha$ to signal $s_{\alpha,i,}$} 
}}
$\pi^* \gets$  signaling scheme from solving Problem (\ref{mneulinearcontractformulatedion}) with fixed $\alpha(\cdot)$ and $a(s_{\alpha, i})=a_i$ \;
\Return $(\pi^*, \alpha(\cdot))$
\end{algorithm}

\begin{proposition}
    For any $\eps>0$, there exists an algorithm running in time polynomial in the instance size and $\frac{1}{\eps}$ that returns a solution to Problem~\eqref{mneulinearcontractformulatedion} with principal's utility at least $\opt-\eps$, where $\opt$ is the optimal principal's utility.
\end{proposition}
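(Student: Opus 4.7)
The plan is to show that Algorithm~\ref{alg_menu_linear} is not losing too much with respect to an arbitrary (potentially indirect and with arbitrary linear contracts) mechanism. The argument proceeds in three main conceptual steps: contract discretization, signal consolidation, and LP feasibility.

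First, I would fix a mechanism $(\bar\pi, \bar\alpha(\cdot))$ achieving principal's utility within $\epsilon/2$ of $\opt$ (recall we use $\sup$ in Problem~\eqref{mneulinearcontractformulatedion}, so the maximum might not be attained). For each signal $s$, define $\alpha^*(s)$ to be the smallest grid point in $\{0,\eps,2\eps,\ldots,1\}$ with $\alpha^*(s)\ge \bar\alpha(s)$, so $\alpha^*(s)-\bar\alpha(s)\le \eps$. As in the proof of Proposition~\ref{proposition61linearsignal}, the conditional principal's utility $f_s(\alpha)$ under signal $s$, as a function of the linear parameter $\alpha$, is piecewise linear and upper-semicontinuous with slopes bounded in $[-1,0]$, and discontinuities jumping upward (since tie-breaking favors the principal). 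Hence $f_s(\alpha^*(s))\ge f_s(\bar\alpha(s))-\eps$. Summing over signals with weights $\Pr[s]$, the rounded mechanism achieves utility at least $\opt - 3\eps/2$.

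Next, I would merge signals. After rounding, each signal $s$ has a grid contract $\alpha^*(s)$ and induces some best-response action $a^*(s)$. I would group all signals with the same pair $(\alpha,i)$ into a single signal $s_{\alpha,i}$, setting $\pi(s_{\alpha,i}|\theta)=\sum_{s:\alpha^*(s)=\alpha,a^*(s)=a_i}\bar\pi(s|\theta)$. The key observation is that the IC constraints of Problem~\eqref{mneulinearcontractformulatedion} are linear in $\mu(\theta)\pi(s|\theta)$ once both the linear contract $\alpha$ and the recommended action $a_i$ are fixed; hence summing IC inequalities preserves feasibility, and the same $a_i$ remains a best response under the merged signal. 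The principal's objective is also linear in $\pi$ for fixed $(\alpha,a)$, so the merge preserves the objective value exactly. The merged mechanism uses at most $n(1/\eps+1)$ signals, matching the signal space constructed by Algorithm~\ref{alg_menu_linear}.

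Finally, the merged mechanism is a feasible point of the linear program that Algorithm~\ref{alg_menu_linear} solves (the LP fixes $\alpha(s_{\alpha,i})=\alpha$ and $a(s_{\alpha,i})=a_i$, so the IC constraints and objective become linear in $\pi$, and can be solved in polynomial time in $n$, $m$, $|\Theta|$, and $1/\eps$). Therefore the LP optimum is at least $\opt - 3\eps/2$; rescaling $\eps$ at the outset yields the claimed $\opt-\eps$ guarantee. The main subtlety I expect is the first step: verifying carefully that rounding $\bar\alpha(s)$ upward does not destabilize the structure of the signal's contribution and that the slope/upper-semicontinuity argument from the single-contract case transfers signal-by-signal; once this per-signal bound is in hand, convexity and linearity make the rest routine.
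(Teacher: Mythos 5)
Your proposal is correct and follows essentially the same route as the paper: discretize the linear parameter to a grid of size $1/\epsilon$, use the piecewise-linear, upper-semicontinuous structure of the per-signal value $f_s(\alpha)$ (with slopes in $[-1,0]$) to bound the rounding loss by $\epsilon$ per signal, merge signals sharing the same rounded contract and best-response action by additivity of the IC constraints and objective in $\pi(s|\theta)$, and then observe that the resulting mechanism is feasible for the LP the algorithm solves. You are somewhat more careful than the paper on one point: since Problem~\eqref{mneulinearcontractformulatedion} is stated with a supremum, you explicitly start from a mechanism within $\epsilon/2$ of $\opt$ rather than assuming a maximizer exists, and you note the need to rescale $\epsilon$ at the end; the paper implicitly treats $\opt$ as attained. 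This is a genuine (if minor) tightening of the argument rather than a different approach.
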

\begin{proof}
    Consider a signaling scheme $\bar \pi$ with set of signals $\bar \Sigma$ and a menu of linear contracts $\bar \alpha(s)$.
    For each $s \in \bar \Sigma$ and $\alpha \in [0,1]$ define 
    \[f_s(\alpha) = \frac{\sum_{\theta \in \Theta} \mu(\theta) \bar \pi(s|\theta) (1-\alpha) \langle F^{\theta}_{a(\alpha)},   r \rangle}{\sum_{\theta \in \Theta} \mu(\theta) \bar \pi(s|\theta)}\] as the optimal utility for contract $\alpha$, where $a(\alpha)$ is the agent's best-response action under contract $\alpha$ and signal $s$. By a similar argument as the one in Proposition \ref{proposition61linearsignal}, we know that $f_s(\alpha)$ is a piece-wise linear upper-semicontinuous function with at most $n+1$ discontinuous points. Moreover, the slope of all linear functions is at least $-1$.

    For any signal $s$, let $\alpha^*(s)$ be the smallest contract in the grid such that $\alpha^*(s)>\bar \alpha(s)$. Clearly, $\alpha^*-\bar \alpha\le \epsilon$.
    Hence, we have that for each $s$ it holds
    \[f_s(\alpha^*(s))\ge f_s(\bar \alpha(s))-\epsilon.\]

    Hence, the utility of the mechanism $\bar\pi, \alpha^*(\cdot)$ is:
    \[ \sum_{s}\sum_{\theta \in \Theta} \mu(\theta)  \bar \pi(s|\theta) f_{s}( \alpha^*(s)) \ge\sum_{s}\sum_{\theta \in \Theta} \mu(\theta)  \bar \pi(s|\theta) f_{s}( \bar\alpha(s))-\epsilon= \opt-\epsilon.  \]

    Now, we want to show that the principal's utility under mechanism $\bar\pi, \alpha^*(\cdot)$ is upperbounded by a mechanism in which the set of signals is $\Sigma=\{ s_{\alpha,i}:\alpha \in \{0,\epsilon,\ldots,1\}, i \in \langle n \rangle\}$, and such that $\alpha^*(s_{\alpha,i})=\alpha$.
    %Hence, by discretizing the interval $[0, 1]$ with step size $\eps$, one can find a contract $\tilde{\alpha}(s)$ such that $|\tilde{\alpha}(s) - {\alpha}(s)| \le \eps$ and $f^s(\alpha(s)) - f^s(\tilde{\alpha}(s)) \le \eps \sum_{\theta \in \Theta} \mu(\theta) \pi^*(s|\theta)$. Therefore, under the optimal scheme $\pi^*$, the approximate contracts $\{\tilde{\alpha}(s)| s\in \Sigma\}$ satisfy $|\sum_{s\in \Sigma} f^s(\tilde{\alpha}(s)) - \sum_{s\in \Sigma} f^s(\alpha(s))| \le \eps$. Note that after applying the approximate contract $\tilde{\alpha}(s)$, the agent's best-response action may change accordingly, and we denote the best-response action as $\tilde{a}(s)$.
    Indeed, consider two signals $s_1,s_2\in \bar \Sigma$ such that $ \alpha^*(s_1)=\alpha^*(s_2)$, and that induce the same best response $a$ under $(\bar \pi,\alpha^*(\cdot))$. Consider a new signaling scheme $\pi^*$ in which we replace the two signals  $s_1,s_2$ with one new signal $\bar{s}$ by setting $\pi^*(\bar{s}|\theta) = \bar{\pi}(s_1|\theta) + \bar{\pi}(s_2|\theta)$ for all $\theta \in \Theta$, and let $\alpha^*(\bar s)=\alpha^*(s_1)$.
    By linearity, we can easily verify that $a$ is a best response to $\bar{s}$ and and the principal's utility does not change, i.e.,
    \begin{small}
\begin{equation*}
\sum_{\theta \in \Theta} \mu(\theta)  \pi^*(\bar{s}|\theta) (1-\alpha^*(\bar{s})) \langle F^{\theta}_{a},   r \rangle =  \sum_{\theta \in \Theta} \mu(\theta)  \bar \pi(s_1|\theta) (1-{\alpha^*}(s_1)) \langle F^{\theta}_{a},   r \rangle  + \sum_{\theta \in \Theta} \mu(\theta)  \bar \pi(s_2|\theta) (1-{\alpha^*}(s_2)) \langle F^{\theta}_{a}, r \rangle 
\end{equation*}
\end{small}
    
    Applying this procedure to each couple of signals, we obtain as set of signals $\Sigma$.
    Hence, we proved that there exists a mechanisms $\pi^*, \alpha^*(\cdot)$ that
    \begin{itemize}
        \item employs as set of signals $\Sigma$,
        \item has $\alpha^*(s_{\alpha,i})=\alpha$ {and the best response is $a(s_{\alpha,i}) = i$} for each $s_{\alpha,i}\in \Sigma$,
        \item achieves principal's utility at least $\opt-\epsilon$.
    \end{itemize}

Finally, we can formulate the problem of obtaining the optimal mechanism in this set as:
\begin{equation}\label{menulinearapproximate}
\begin{aligned}
    \max_{\pi} \quad & \sum_{\alpha \in \mathcal{A}} \sum_{i \in \langle n\rangle} \sum_{\theta \in \Theta} \mu(\theta) \pi(s_{\alpha, i}|\theta) (1-\alpha) \langle F^{\theta}_{i},   r \rangle \\
    \textnormal{s.t} \quad & \sum_{\theta \in \Theta} \mu(\theta)\pi(s_{\alpha, i}|\theta) \Big[ \alpha \langle F^{\theta}_{i},   r \rangle -c_{i} \Big] \ge \sum_{\theta \in \Theta} \mu(\theta)\pi(s_{\alpha, i}|\theta) \Big[ \alpha \langle F^{\theta}_{j},   r \rangle -c_{j}\Big] \  \forall s_{\alpha, i} \in \Sigma, j \in \langle n \rangle \\
& \sum_{s \in \Sigma} \pi(s|\theta) = 1, \pi(s|\theta) \ge 0 \quad \forall \theta \in \Theta, s \in \Sigma
\end{aligned}
\end{equation}
Hence, we can conclude that Program~\eqref{menulinearapproximate} returns a mechanism with utility $\opt-\epsilon$.
Moreover, it is an LP and hence can be solved in time polynomial time in the instance size and $1/\epsilon$.
This concludes the proof.
\end{proof}

\section{Conclusions}

%\mat{add here open problems. This is one of the few reasons why conclusions are useful. I guess the main question is about the number of signals that we need, if direct are optimal and so on.}

In this paper, we initiate the problem of jointly designing information and contracts, which captures many applications in practical life. By differentiating the principal's flexibility in designing contracts, we consider ambiguous contracts, menus of explicit contracts and single explicit contract. Our main technical contributions are devoted to understanding the computational aspects of the  problem. 

We believe this newly proposed problem can enrich the literature by simultaneously considering information asymmetry and moral hazard. There are many research questions in this direction. Below, we discuss two of them that could potentially interest the community.
\begin{itemize}
    \item One interesting question left in this paper is about {\it how many signals are enough for an optimal design in \menu}. This would help us better understand the structure of optimal signaling schemes. We suspect that an exponential number of signals is needed for the optimal design. If this is the case, a follow-up question may be: {\it is it possible to derive good approximations  for some simple class of signaling scheme such as the ones with a limited number of signals?}. For example, as in \cite{bergemann2022selling}, it is interesting to understand the approximation of the full-information scheme. Note that the result in \cite{dutting2019simple} implies that an $O(n)$ approximation can be derived for full-information revelation. However, it is unclear whether it is possible to derive a constant approximation.
    \item Another interesting direction is the extension of our model to the setting of multiple agents. For example, in a lost item notice, the owner may want to incentivize more agents to look for the item so that the probability of getting back the item is higher. Although in general this problem could potentially be \NP-hard given our current results, it would be interesting to consider classes of simple agents, e.g.,  with binary actions. Moreover, as in Bayesian persuasion, one can further consider the settings where the principal could communicate with the agent privately (i.e., private persuasion~\cite{arieli2019private,dughmi2017algorithmicnoex}) and publicly (i.e., public persuasion~\cite{castiglioni2023public,xu2020tractability}).
\end{itemize}

\newpage
\bibliographystyle{ACM-Reference-Format}
\bibliography{refer}
\newpage 
\appendix

%\section{Examples of Bounty Notice} \label{sec_example_bountoyt}
%\begin{figure}[h]
%  \centering
%\includegraphics[width=0.22\textwidth]{notice.jpg}
%  \caption{A bounty notice in  manga series {\it One Piece} includes a description of the criminal and the specified reward.}
%  \label{fig:f1}
%\end{figure}

\section{Proofs Omitted from Section~\ref{sec:ambigous}}

\subsection{Proof of Lemma~\ref{lemma_recommenddirect}}
 Given a feasible solution $(\tilde{\pi}, \tilde{\mathcal{P}})$ to Problem \ref{original_problem_contract_signal}, we will construct a direct mechanism $(\pi, \mathcal{P})$ that achieves at least the same principal utility. 
Suppose that in the feasible  solution $(\tilde{\pi}, \tilde{\mathcal{P}})$, under two signals $s_1$ and $s_2$, the agent optimally chooses the same action $a \in \langle n\rangle$, i.e., $a(s_1)=a(s_2)=a$. Hence, we have the following two IC constraints for two signals, 
    \begin{gather*}
        \sum_{\theta} \mu(\theta) \tilde{\pi}(s_1|\theta) [\langle F^{\theta}_{a}, \tilde{p}^{s_1, \theta} \rangle -c_{a}] \ge \sum_{\theta} \mu(\theta) \tilde{\pi}(s_1|\theta) [\langle F^{\theta}_{i},  \tilde{p}^{s_1, \theta} \rangle -c_{i}], \quad \forall i \in \langle n \rangle \\
        \sum_{\theta} \mu(\theta) \tilde{\pi}(s_2|\theta) [\langle F^{\theta}_{a},  \tilde{p}^{s_2, \theta} \rangle-c_{a}] \ge \sum_{\theta} \mu(\theta) \tilde{\pi}(s_2|\theta) [\langle F^{\theta}_{i}, \tilde{p}^{s_2, \theta} \rangle -c_{i}], \quad \forall i \in \langle n \rangle
    \end{gather*}
and the principal's utility from these two signals as 
\[
\sum_{\theta \in \Theta} \mu(\theta) \tilde{\pi}(s_1|\theta) \langle F^{\theta}_{a}, r - \tilde{p}^{s_1,\theta} \rangle + \sum_{\theta \in \Theta} \mu(\theta) \tilde{\pi}(s_2|\theta) \langle F^{\theta}_{a},   r - \tilde{p}^{s_2,\theta}\rangle
\]
We construct a direct mechanism $(\pi, \mathcal{P})$, where we replace $s_1$ and $s_2$ with a new signal $s$ such that $\pi(s|\theta) = \tilde{\pi}(s_1|\theta)+\tilde{\pi}(s_2|\theta)$ and $p^{s, \theta} = \frac{\tilde{\pi}(s_1|\theta)}{\pi(s|\theta)}\tilde{p}^{s_1, \theta} + \frac{\tilde{\pi}(s_2|\theta)}{\pi(s|\theta)} \tilde{p}^{s_2, \theta}$. We verify that under the new mechanism, the optimal action for the agent under signal $s$ and contracts $\{p^{s, \theta}\}_{\theta\in \Theta}$ is still $a$, i.e., for any $i \in \langle n \rangle$, we have
\begin{align*}
 \sum_{\theta} \mu(\theta) {\pi}(s|\theta) [\langle F^{\theta}_{a},  {p}^{s_1, \theta} \rangle -c_{a}] &= \sum_{\theta} \mu(\theta) {\pi}(s|\theta) [\langle F^{\theta}_{a},   \frac{\tilde{\pi}(s_1|\theta)}{\pi(s|\theta)}\tilde{p}^{s_1, \theta} + \frac{\tilde{\pi}(s_2|\theta)}{\pi(s|\theta)} \tilde{p}^{s_2, \theta}  \rangle -c_{a}] \\
% & = \sum_{\theta} \mu(\theta)  [F^{\theta}_{a} \Big({\tilde{\pi}(s_1|\theta)} \tilde{p}^{s_1, \theta} + {\tilde{\pi}(s_2|\theta)} \tilde{p}^{s_2, \theta} \Big) - {\pi}(s|\theta)c_{a}] \\
 & = \sum_{\theta} \mu(\theta) \tilde{\pi}(s_1|\theta) [\langle F^{\theta}_{a},  \tilde{p}^{s_1, \theta} \rangle -c_{a}] + \sum_{\theta} \mu(\theta) \tilde{\pi}(s_2|\theta) [\langle F^{\theta}_{a},  \tilde{p}^{s_2, \theta} \rangle -c_{a}]\\
 & \ge \sum_{\theta} \mu(\theta) \tilde{\pi}(s_1|\theta) [\langle F^{\theta}_{i},  \tilde{p}^{s_1, \theta} \rangle -c_{i}] + \sum_{\theta} \mu(\theta) \tilde{\pi}(s_2|\theta) [\langle F^{\theta}_{i},  \tilde{p}^{s_2, \theta} \rangle -c_{i}] \\
 & = \sum_{\theta} \mu(\theta) {\pi}(s|\theta) [\langle F^{\theta}_{i},  {p}^{s_1, \theta} \rangle -c_{i}]
\end{align*}
The revenue of the principal remains the same, i.e., 
\[
\sum_{\theta \in \Theta} \mu(\theta) {\pi}(s_1|\theta) \langle F^{\theta}_{a},   r - {p}^{s_1,\theta} \rangle = \sum_{\theta \in \Theta} \mu(\theta) \tilde{\pi}(s_1|\theta) \langle F^{\theta}_{a},   r - \tilde{p}^{s_1,\theta} \rangle + \sum_{\theta \in \Theta} \mu(\theta) \tilde{\pi}(s_2|\theta) \langle F^{\theta}_{a},   r - \tilde{p}^{s_2,\theta} \rangle
\]
Hence, for signal $s$, we directly recommend action $a$. This concludes the proof.

\subsection{Proof of Proposition~\ref{theoremsupereconmanntoacdhi} }

The proposition is proved by constructing one instance where the maximum cannot be achieved.

{\bf Construction of the Instance.} There are three states $\theta_1, \theta_2$ and $\theta_3$, and the  probabilities of states are $\mu(\theta_i)=\frac{1}{3}$ for $i=1,2,3$. There are $3$ actions $\mathcal{A} = \{a_1, a_2, a_3\}$ and $4$ outcomes $\Omega = \{ \omega_1, \omega_2, \omega_3, \omega_4 \}$. The rewards for outcomes are $r_1 = 1, r_2 = \frac{1}{2}, r_3=0, r_4=0$, while the costs for actions are $c_1=0, c_2=0, c_3=\frac{1}{8}$. Next, we construct the probability matrices $F^{\theta}$ (also depicted in Figure \ref{proceduremultisingleprocess}) 
\begin{itemize}
    \item For state $\theta_1$, we set $F^{\theta_1}_{a_1, \omega_2} = F^{\theta_1}_{a_1, \omega_3}=\frac{1}{2}$, $F^{\theta_1}_{a_2, \omega_4}=1$ and $F^{\theta_1}_{a_3, \omega_2}=1$. 
    \item For state $\theta_2$, we set $F^{\theta_2}_{a_1, \omega_3} =1$, $F^{\theta_2}_{a_2, \omega_1}=1$ and $F^{\theta_2}_{a_3, \omega_4}=1$. 
    \item For state $\theta_3$, we set $F^{\theta_3}_{a_1, \omega_1} =F^{\theta_3}_{a_1, \omega_3} =\frac{1}{2}$, $F^{\theta_3}_{a_2, \omega_4}=1$ and $F^{\theta_3}_{a_3, \omega_1}=1$. 
\end{itemize}
\begin{figure}[h]
\centering
\includegraphics[scale=0.4]{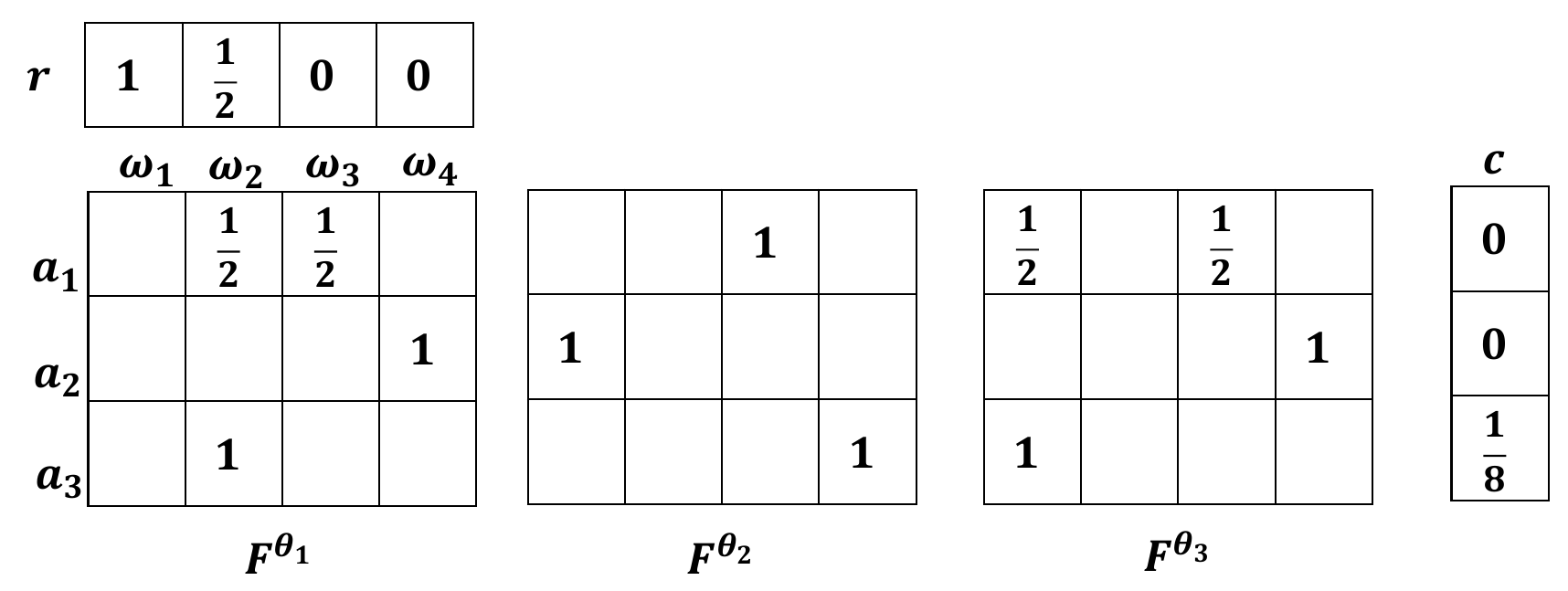}
\caption{The construction of three probability matrices in the constructed instance of Theorem \ref{theoremsupereconmanntoacdhi}, where the blank entries is $0$.}
\label{proceduremultisingleprocess}
\end{figure}

{\bf The supremum \texttt{SUP} is at least $\frac{9}{12}$.} To prove it, we only need to show that 
there exists one solution achieving principal utility as $\frac{9}{12} - \frac{9\eps}{8}$ where $\eps>0$ is an arbitrarily small constant. 

We construct a signaling scheme with $3$ signals $\Sigma = \{s_1, s_2, s_3\}$.  For signal $s_1$, let $\pi(s_1 | \theta)=0$ for all $\theta$. For signal $s_2$, let $\pi(s_2|\theta_1) = \pi(s_2|\theta_3) = 0 $ and $\pi(s_2|\theta_2) = 1-3\eps$. Let contract be $p^{s_2, \theta_2} = [0, 0, 0, 0]$. For signal $s_3$, let $\pi(s_3|\theta_1) = \pi(s_3|\theta_3) = 1 $ and $\pi(s_3|\theta_2) = 3\eps$. 
The contracts are defined as $p^{s_3, \theta_1} = p^{s_3, \theta_3} = [0, 0, 0, 0]$ and $p^{s_3, \theta_2} = [0, 0, 0, \frac{1}{12\eps} + \frac{1}{8}]$.

By construction, the principal will not send signal $s_1$. Under signal $s_2$, the agent can infer that the state is $\theta_2$ and the contract is $p^{s_2, \theta_2}$. Hence, it is optimal for the agent to choose action $a_2$ by tie-breaking. Under signal $s_3$, we show that it is optimal for the agent to take action $a_3$. If the agent chooses action $a_3$, the payoff (by removing the normalization term) is
\begin{align*}
    \sum_{\theta} \mu(\theta) \pi(s|\theta) [\langle F^{\theta}_{a_3}, p^{s, \theta}\rangle -c_{3}] = \frac{1}{3}\times 1\times (0-\frac{1}{8}) + \frac{1}{3}\times 1\times (0-\frac{1}{8}) + \frac{1}{3}\times 3\eps \times (\frac{1}{12\eps} + \frac{1}{8} - \frac{1}{8})= 0
\end{align*}
and the principal's utility is 
\begin{align*}
    \sum_{\theta} \mu(\theta) \pi(s_3|\theta) \langle F^{\theta}_{a_3},  r - {p}^{s_3, \theta} \rangle& =   \frac{1}{3} \times  1  \times \frac{1}{2} + \frac{1}{3} \times  1  \times 1 + \frac{1}{3} \times  3\eps  \times (0 - \frac{1}{12\eps} - \frac{1}{8}) \\
    & =\frac{1}{3} \Big[ 1 + \frac{1}{4}  -\frac{3\eps}{8} \Big]\\
    & = \frac{5}{12} - \frac{\eps}{8}
\end{align*}
If the agent chooses $a_2$, the agent's payoff is $0$, but the principal's utility is $\sum_{\theta} \mu(\theta) \pi(s_3|\theta) \langle F^{\theta}_{a_3}, r - {p}^{s_3, \theta}  \rangle = \eps< \frac{5}{12} - \frac{\eps}{8}$. If the agent chooses $a_1$, the agent's payoff is $0 $, but the principal's utility is \begin{align*}
    \sum_{\theta} \mu(\theta) \pi(s_3|\theta) \langle F^{\theta}_{a_1}, r - {p}^{s_3, \theta} \rangle & =   \frac{1}{3} \times  1  \times \frac{1}{2}\times \frac{1}{2} + \frac{1}{3} \times  1  \times \frac{1}{2} + \frac{1}{3} \times  3\eps  \times 0 \\
    & =\frac{1}{3} \Big[ \frac{1}{4} +\frac{1}{2} \Big]\\
    & = \frac{1}{4} < \frac{5}{12} - \frac{\eps}{8}
\end{align*}
Hence, by tie-breaking rule, the agent will choose action $a_3$. Therefore,
the principal's total utility is 
\begin{align*}
    \sum_{s \in \Sigma} \sum_{\theta} \mu(\theta) \pi(s|\theta) \langle F^{\theta}_{a(s)}, r - {p}^{s, \theta} \rangle& = \Big[ \frac{1}{3} \times (1-3\eps) \times 1 \Big] + \frac{5}{12} - \frac{\eps}{8}  = \frac{9}{12} - \frac{9\eps}{8}
\end{align*}

{\bf Any feasible mechanism has utility strictly less than $\frac{9}{12}$.}
By Lemma \ref{lemma_recommenddirect}, it is without loss of generality to consider a signaling scheme with at most $3$ signals corresponding to actions. If signal $a_2$ is sent with positive probability, i.e.,  $\prob{a_2}>0$, then it should be optimal for the agent to take action $a_2$ by Lemma \ref{lemma_recommenddirect}. For any feasible solution, signal $a_2$   induces the principal utility at most 
\begin{equation}\label{actiona2optimal}
 \sum_{\theta} \mu(\theta) \pi(a_2|\theta) \langle F^{\theta}_{a_2},   r - {p}^{a_2, \theta} \rangle \le \mu(\theta_2) \pi(a_2|\theta_2) 
\end{equation}
which can be achieved by setting $\pi(a_2|\theta_1) = \pi(a_2|\theta_3) = 0$  and contract $p^{a_2, \theta_2} = [0, 0, 0, 0]$. Similarly, for signal $a_1$ and the agent chooses action $a_1$, the principal's utility is at most 
\begin{equation}\label{actiona1signal}
\sum_{\theta} \mu(\theta) \pi(a_1|\theta) \langle F^{\theta}_{a_1},  r - {p}^{a_1, \theta} \rangle \le \frac{1}{4}\mu(\theta_1)\pi(a_1|\theta_1) + \frac{1}{2}\mu(\theta_3)\pi(a_1|\theta_3)
\end{equation}
which can be achieved by setting $\pi(a_1|\theta_2) =  0$ and contract $p^{a_1, \theta_1} = p^{a_1, \theta_3} = [0, 0, 0, 0]$.

By (\ref{actiona2optimal}), to maximize the principal's utility when sending signal $a_2$, it must hold that $\pi(a_2|\theta_1) = \pi(a_2|\theta_3) = 0$. Hence, the following discussion is divided into two cases: 1) $\pi(a_2|\theta_2) = 1$ and 2)  $\pi(a_2|\theta_2) < 1$. 

{\bf Case $\pi(a_2|\theta_2) = 1$.} By previous discussion, we know that in the optimal solution, we must have $\pi(a_2|\theta_1) = \pi(a_2|\theta_3) = 0$ since $a_2$ induces no utility to the principal under states $\theta_1$ and $\theta_3$. Also, we have $\pi(a_3|\theta_2) = 0$. Now, we consider signal $a_3$. 

\begin{itemize}[leftmargin=5.5mm]
    \item If we have $\prob{a_3} = 0$, i.e., action $a_3$ will not be induced, then by the previous discussion, it is optimal to set $\pi(a_1|\theta_1) = \pi(a_1|\theta_3) =1$ with {\it zero} contracts. Hence, the maximum principal utility is 
\[
\frac{1}{4}\mu(\theta_1)\pi(a_1|\theta_1) + \frac{1}{2}\mu(\theta_3)\pi(a_1|\theta_3) + \mu(\theta_2) \pi(a_2|\theta_2) = \frac{1}{3}\times  (1+\frac{3}{4}) < \frac{9}{12}
\]
\item If $\prob{a_3} > 0$, i.e., the agent will choose action $a_3$ if recommended, then by the IC constraint for action $a_3$, the agent's payoff should be greater than that by deviating to $a_1$,
\begin{equation}\label{eq_a3toa1_ic}
\begin{aligned}
& \mu(\theta_1) \pi(a_3|\theta_1) [\langle F^{\theta_1}_{a_3}, p^{a_3, \theta_1} \rangle -c_{3}] +  \mu(\theta_3) \pi(a_3|\theta_3) [\langle F^{\theta_3}_{a_3},  p^{a_3, \theta_3}\rangle  -c_{3}] \\
& \quad \ge \mu(\theta_1) \pi(a_3|\theta_1) [\langle F^{\theta_1}_{a_1},  p^{a_3, \theta_1}\rangle  -c_{1}] +  \mu(\theta_3) \pi(a_3|\theta_3) [\langle F^{\theta_3}_{a_1},  p^{a_3, \theta_3} \rangle  -c_{1}] 
\end{aligned}
\end{equation}
By the construction of matrices $F^{\theta_1}$ and $F^{\theta_3}$, it is optimal to set $p^{a_3, \theta_1} = [0, p^{a_3, \theta_1}_2, 0 , 0]$ and $p^{a_3, \theta_3} = [p^{a_3, \theta_3}_1, 0, 0, 0]$ where $p^{a_3, \theta_1}_2 \ge 0$ and $p^{a_3, \theta_3}_1 \ge 0$. This is because setting positive payments on other outcomes can only decrease the principal's utility and increase the incentives for agent to deviate.

Hence, by (\ref{eq_a3toa1_ic}) we have
\begin{align*}
&\mu(\theta_1) \pi(a_3|\theta_1)  (p^{a_3, \theta_1}_2 -\frac{1}{8}) +  \mu(\theta_3) \pi(a_3|\theta_3)  (p^{a_3, \theta_3}_1 -\frac{1}{8})\\
&\quad \ge   \frac{1}{2}\mu(\theta_1) \pi(a_3|\theta_1)  p^{a_3, \theta_1}_2  +  \frac{1}{2}\mu(\theta_3) \pi(a_3|\theta_3)  p^{a_3, \theta_3}_1 
\end{align*}
which implies that
\begin{equation}\label{optimal_inequ_expecpayment}
 \mu(\theta_1) \pi(a_3|\theta_1)  p^{a_3, \theta_1}_2  +  \mu(\theta_3) \pi(a_3|\theta_3)  p^{a_3, \theta_3}_1  \ge \mu(\theta_1) \pi(a_3|\theta_1)  \frac{1}{4}  +  \mu(\theta_3) \pi(a_3|\theta_3) \frac{1}{4}
\end{equation}

Hence, by (\ref{actiona2optimal}) and (\ref{actiona1signal}), we know that the principal's optimal utility is 
\begin{align*}
&\sum_{s \in \Sigma} \sum_{\theta} \mu(\theta) \pi(s|\theta) \langle F^{\theta}_{a(s)}, r - {p}^{s, \theta} \rangle\\
     = ~&\Big[ \mu(\theta_2) \pi(a_2|\theta_2) \Big] + \Big[ \frac{1}{4}\mu(\theta_1)\pi(a_1|\theta_1) + \frac{1}{2}\mu(\theta_3)\pi(a_1|\theta_3) \Big] \\
    & \quad \quad + \Big[ \mu(\theta_1) \pi(a_3|\theta_1)  (\frac{1}{2} - p^{a_3, \theta_1}_2)  +  \mu(\theta_3) \pi(a_3|\theta_3) (1- p^{a_3, \theta_3}_1) \Big] \\
     = ~&\frac{1}{3} + \Big[ \frac{1}{4}\mu(\theta_1)(1- \pi(a_3|\theta_1)) + \frac{1}{2}\mu(\theta_3)(1 - \pi(a_3|\theta_3)) \Big] \\
    & \quad \quad + \Big[ \mu(\theta_1) \pi(a_3|\theta_1)  (\frac{1}{2} - p^{a_3, \theta_1}_2)  +  \mu(\theta_3) \pi(a_3|\theta_3) (1- p^{a_3, \theta_3}_1) \Big] \\
     = ~&\frac{1}{3} + \frac{1}{3}\times (\frac{1}{4} + \frac{1}{2})  - \frac{1}{4}\mu(\theta_1) \pi(a_3|\theta_1) - \frac{1}{2}\mu(\theta_3) \pi(a_3|\theta_3) + \frac{1}{2} \mu(\theta_1) \pi(a_3|\theta_1)   + \mu(\theta_3) \pi(a_3|\theta_3) \\
    & \quad - \Big( \mu(\theta_1) \pi(a_3|\theta_1)  p^{a_3, \theta_1}_2  +  \mu(\theta_3) \pi(a_3|\theta_3)  p^{a_3, \theta_3}_1 \Big) \\
     \le ~& \frac{1}{3}\times (1+\frac{3}{4}) +  \frac{1}{4}\mu(\theta_1) \pi(a_3|\theta_1) + \frac{1}{2}\mu(\theta_3) \pi(a_3|\theta_3)  - 
    \Big(  \mu(\theta_1) \pi(a_3|\theta_1)  \frac{1}{4}  +  \mu(\theta_3) \pi(a_3|\theta_3) \frac{1}{4} \Big)\\
     = ~&\frac{1}{3}\times (1+\frac{3}{4})  + \frac{1}{4}\times \frac{1}{3} \pi(a_3|\theta_3) \\
     \le ~&\frac{1}{3} \times 2 =\frac{8}{12} < \frac{9}{12}
\end{align*}
where the first inequality is due to (\ref{optimal_inequ_expecpayment}).
\end{itemize}

{\bf Case $\pi(a_2|\theta_2) < 1$.} By (\ref{actiona1signal}), we know that $\pi(a_1|\theta_2) = 0$ holds in the optimal solution where the {\it zero} contract is applied.  Hence, in this case, the principal must send signal $a_3$ with positive probability. 

By the individual rationality for signal $a_3$, we have 
\begin{align*}
    \mu(\theta_1) (1-\pi(a_1|\theta_1)) [\langle F^{\theta_1}_{a_3}, p^{a_3, \theta_1} \rangle -\frac{1}{8}]& + \mu(\theta_2) (1-\pi(a_2|\theta_2)) [\langle F^{\theta_2}_{a_3}, p^{a_3, \theta_2} \rangle  -\frac{1}{8}] \\
    & \quad + \mu(\theta_3) (1-\pi(a_1|\theta_3)) [\langle F^{\theta_3}_{a_3}, p^{a_3, \theta_3} \rangle -\frac{1}{8}] \ge 0
\end{align*}
Hence, we have 
\begin{align*}
    &\mu(\theta_1) (1-\pi(a_1|\theta_1)) \langle F^{\theta_1}_{a_3}, p^{a_3, \theta_1} \rangle + \mu(\theta_2) (1-\pi(a_2|\theta_2)) \langle F^{\theta_2}_{a_3}, p^{a_3, \theta_2} \rangle + \mu(\theta_3) (1-\pi(a_1|\theta_3)) \langle F^{\theta_3}_{a_3},  p^{a_3, \theta_3} \rangle\\
    & \quad \quad \ge \mu(\theta_1) (1-\pi(a_1|\theta_1)) \frac{1}{8} + \mu(\theta_2) (1-\pi(a_2|\theta_2)) \frac{1}{8} + \mu(\theta_3) (1-\pi(a_1|\theta_3)) \frac{1}{8}
\end{align*}
The utility that the principal gets from signal $a_3$ is at most
\begin{align*}
    & \mu(\theta_1) (1-\pi(a_1|\theta_1)) \langle F^{\theta_1}_{a_3}, r-p^{a_3, \theta_1} \rangle + \mu(\theta_2) (1-\pi(a_2|\theta_2)) \langle F^{\theta_2}_{a_3}, r-p^{a_3, \theta_2} \rangle  \\
    & \quad \quad + \mu(\theta_3) (1-\pi(a_1|\theta_3))\langle F^{\theta_3}_{a_3}, r-p^{a_3, \theta_3} \rangle \\
    = ~&  \mu(\theta_1) (1-\pi(a_1|\theta_1)) \langle F^{\theta_1}_{a_3}, r \rangle + \mu(\theta_2) (1-\pi(a_2|\theta_2)) \langle F^{\theta_2}_{a_3}, r \rangle  + \mu(\theta_3) (1-\pi(a_1|\theta_3))\langle F^{\theta_3}_{a_3}, r \rangle\\
    & \quad \quad - \mu(\theta_1) (1-\pi(a_1|\theta_1)) \langle F^{\theta_1}_{a_3}, p^{a_3, \theta_1} \rangle - \mu(\theta_2) (1-\pi(a_2|\theta_2)) \langle F^{\theta_2}_{a_3}, p^{a_3, \theta_2} \rangle - \mu(\theta_3) (1-\pi(a_1|\theta_3))\langle F^{\theta_3}_{a_3}, p^{a_3, \theta_3} \rangle\\
    \le~& \frac{1}{2}\mu(\theta_1) (1-\pi(a_1|\theta_1)) + \mu(\theta_3) (1-\pi(a_1|\theta_3)) \\
    & \quad \quad -  \Big( \mu(\theta_1) (1-\pi(a_1|\theta_1)) \frac{1}{8} + \mu(\theta_2) (1-\pi(a_2|\theta_2)) \frac{1}{8} + \mu(\theta_3) (1-\pi(a_1|\theta_3)) \frac{1}{8} \Big)\\
    = ~& \frac{1}{3}\Big( \frac{3}{8}(1-\pi(a_1|\theta_1))  + \frac{7}{8} (1-\pi(a_1|\theta_3)) - \frac{1}{8}(1-\pi(a_2|\theta_2))   \Big)
\end{align*}
Hence, by (\ref{actiona2optimal}) and (\ref{actiona1signal}), the total utility the principal gets is at most 
\begin{align*}
& \sum_{s \in \Sigma} \sum_{\theta} \mu(\theta) \pi(s|\theta) \langle F^{\theta}_{a(s)}, r - {p}^{s, \theta} \rangle\\
     \le ~&  \mu(\theta_2) \pi(a_2|\theta_2)  +  \frac{1}{4}\mu(\theta_1)\pi(a_1|\theta_1) + \frac{1}{2}\mu(\theta_3)\pi(a_1|\theta_3) \\
    & \quad \quad + \frac{1}{3}\Big( \frac{3}{8}(1-\pi(a_1|\theta_1))  + \frac{7}{8} (1-\pi(a_1|\theta_3)) - \frac{1}{8}(1-\pi(a_2|\theta_2))   \Big)  \\
     = ~& \frac{1}{3}\Big( \pi(a_2|\theta_2) + \frac{1}{4}\pi(a_1|\theta_1) + \frac{1}{2}\pi(a_1|\theta_3) + \frac{3}{8}(1-\pi(a_1|\theta_1))  + \frac{7}{8} (1-\pi(a_1|\theta_3)) - \frac{1}{8}(1-\pi(a_2|\theta_2)) \Big) \\
     = ~&\frac{1}{3}\Big( \frac{10}{8} + \pi(a_2|\theta_2) - \frac{1}{8}(1-\pi(a_2|\theta_2)) - \frac{1}{8}\pi(a_1|\theta_1) -\frac{3}{8}\pi(a_1|\theta_3) \Big)\\
     \le~& \frac{1}{3}\Big( \frac{10}{8} + \pi(a_2|\theta_2) - \frac{1}{8}(1-\pi(a_2|\theta_2)) \Big) \\
    < ~&\frac{9}{12}
\end{align*}
The last inequality is by $\pi(a_2|\theta_2) < 1$. 

Finally, by combining the discussion of the above two cases, we can conclude that the optimal principal utility is strictly less $\frac{9}{12}$, which is a supremum but not achievable.

\subsection{Proof of Lemma~\ref{lemma_epsic}}

    If all the obtained pairs $(z^{s, \theta}, \pi(s|\theta))$ to Problem (\ref{relaxted_programzeqpip}) are regular, then by (\ref{recovingincontract}), we recover a solution satisfying the claim. Next, we focus on solutions with irregular pairs.

We prove the lemma by first reserving some probability mass and then redistributing it to fix the irregular solutions. Let $\eps>0$ be a sufficiently small positive constant. First, we construct a new signaling scheme $\hat{\pi}$ and introduce one additional signal $\bar{s}$, where $\hat{\pi}(\bar{s}|\theta) = \eps$ for all $\theta\in \Theta$. Accordingly, we let $\hat{\pi}(s|\theta) = \pi(s|\theta) - \pi(s|\theta)\eps$ for all $\theta\in \Theta$ and $s\in \langle n \rangle$. It can be verified that $\hat{\pi}$ satisfies the probability constraint and is a signaling scheme. The signal $\bar{s}$ is used to {\it reserve} some probability mass. We  show that the first constraint in Program~(\ref{relaxted_programzeqpip}) is relaxed by $\eps$ for signals $s \in \langle n \rangle$, 
\begin{equation}\label{inequaltiyrsereare}
\begin{aligned}
    &\sum_{\theta} \mu(\theta)  \Big[ \langle F^{\theta}_{s},  z^{s, \theta} \rangle - \hat{\pi}(s|\theta) c_{s} \Big] - \sum_{\theta} \mu(\theta)  \Big[ \langle F^{\theta}_{i}, z^{s, \theta} \rangle - \hat{\pi}(s|\theta) c_{i} \Big] \\
    \ge  & \sum_{\theta} \mu(\theta)   \pi(s|\theta) (c_{s}  - c_{i} ) - \sum_{\theta} \mu(\theta)   \hat{\pi}(s|\theta) (c_{s}  - c_{i} )  \\
    = & \eps \sum_{\theta} \mu(\theta)   \pi(s|\theta) (c_{s}  - c_{i} )\\
    \ge & -\eps
\end{aligned}
\end{equation}
where the first inequality is by that $\pi$ satisfy the IC constraint in Program~(\ref{relaxted_programzeqpip}) and the last inequality is by $c_s-c_i \ge -1$. Similarly, we can verify that the principal in Program~(\ref{relaxted_programzeqpip}) loses $\eps$ utility by only considering signal $s \in \langle n \rangle$, i.e., 
\begin{align}\label{utilitylossatmost}
\sum_{\theta} \mu(\theta) \sum_{s \in \langle n \rangle}  \langle F^{\theta}_{s}, \pi(s|\theta)r - z^{s, \theta}\rangle - \sum_{\theta} \mu(\theta) \sum_{s \in \langle n \rangle}  \langle F^{\theta}_{s}, \hat{\pi}(s|\theta)r - z^{s, \theta}\rangle \le \eps
\end{align}
Next, we perform the {\it redistribute} step. Without loss of generality, we consider one signal $\hat{s} \in \langle n \rangle$. Let $\hat{\Theta} \subseteq \Theta$ be the set of types containing all the irregular pairs for signal $s$. We construct a new scheme $\tilde{\pi}$ such that 1) for $\theta \in \hat{\Theta}$,  $\tilde{\pi}(\hat{s}|\theta) = \delta = \frac{\eps}{n}$,  $\tilde{\pi}(s|\theta) = \hat{\pi}(s|\theta)$ for $s\neq \hat{s}$ and $\tilde{\pi}(\bar{s}|\theta) = \hat{\pi}(\bar{s}|\theta) - \frac{\delta}{n}$ and 2) for  $\theta \notin \hat{\Theta}$, $\tilde{\pi}(s|\theta) = \hat{\pi}(s|\theta)$ for all $s\in \langle n \rangle \cup \{\bar{s}\}$. We verify that the principal's utility under $\tilde{\pi}$ weakly increases over $\hat{\pi}$ by only considering signal $\hat{s} \in \langle n \rangle$, 
\begin{equation}\label{increasinprincipanutilty}
\begin{aligned}
    &\sum_{\theta} \mu(\theta) \sum_{s \in \langle n \rangle}  \langle F^{\theta}_{s}, \tilde{\pi}(s|\theta)r - z^{s, \theta}\rangle  - \sum_{\theta} \mu(\theta) \sum_{s \in \langle n \rangle}  \langle F^{\theta}_{s}, \hat{\pi}(s|\theta)r - z^{s, \theta}\rangle \\
    = & \sum_{\theta} \mu(\theta) (\tilde{\pi}(\hat{s}|\theta) - \hat{\pi}(\hat{s}|\theta)) \langle F^{\theta}_{\hat{s}}, r\rangle \ge 0
\end{aligned}
\end{equation}
Next, we show that the first constraint for signal $\hat{s}$ in (\ref{relaxted_programzeqpip}) are $(\delta+\eps)$-IC under $\tilde{\pi}$,
\begin{align*}
        &\sum_{\theta} \mu(\theta)  \Big[ \langle F^{\theta}_{\hat{s}},  z^{\hat{s}, \theta} \rangle - \tilde{\pi}(\hat{s}|\theta) c_{\hat{s}} \Big] - \sum_{\theta} \mu(\theta)  \Big[ \langle F^{\theta}_{i}, z^{\hat{s}, \theta} \rangle - \tilde{\pi}(\hat{s}|\theta) c_{i} \Big] \\
        \ge & \sum_{\theta} \mu(\theta)   \hat{\pi}(\hat{s}|\theta) (c_{\hat{s}} -c_i)  -\eps - \Big[ \sum_{\theta} \mu(\theta)  \tilde{\pi}(\hat{s}|\theta) (c_{\hat{s}} -c_i) \Big]\\
        = & \sum_{\theta} \mu(\theta)   (\hat{\pi}(\hat{s}|\theta) - \tilde{\pi}(\hat{s}|\theta) ) (c_{\hat{s}} -c_i)  -\eps\\
        %\ge & -\delta (c_s -c_i) -\eps\\
        \ge & -\delta-\eps
\end{align*}
where the first inequality is by (\ref{inequaltiyrsereare}) and the second inequality is by that $0\ge \hat{\pi}(\hat{s}|\theta) - \tilde{\pi}(\hat{s}|\theta)\ge -\delta$ and $c_{\hat{s}} -c_i\le 1$.

Similarly, by performing the {\it redistribute} step to all other signals $s\in \langle n \rangle$, we know that the principal's utility from the signal set $\langle n \rangle$ keep increasing by (\ref{increasinprincipanutilty}). Therefore, the principal lose utility at most $\eps$ by (\ref{utilitylossatmost}). Now, for all $s\in \langle n \rangle$ and $\theta \in \Theta$,  $\pi(s|\theta)>0$ by our {\it reserve then redistribute} step. Hence, with (\ref{recovingincontract}), we recover the contracts $p$. Finally, for the signal $\bar{s}$, we can obtain the optimal IC contract $p^{\bar{s}, \theta}$ for all $\theta \in \Theta$ by solving an LP, where the optimal action for the agent is denoted as $a^* \in \langle n \rangle$. Note that signal $\bar{s}$ will only give nonnegative utility to the principal since at least the principal can employ all {\it zero} contracts. This again implies that the principal loses at most $\eps$ utility. To make it a direct scheme, we can combine signal $\bar{s}$ with the direct signal $a^*$. With similar arguments as in Lemma \ref{lemma_recommenddirect}, this will not decrease the principal's utility.

Finally, by letting $\xi = \delta + \eps = \frac{n+1}{n}\eps > 0$, we  conclude the proof.

\subsection{Proof of Lemma~\ref{lemma_eps_toicbyloss}}

    Suppose the $\xi$-IC solution for the Problem \ref{original_problem_contract_signal} is $(\pi, P)$. Now, for each signal $s$ and some $\eta>0$, we construct new contracts $\bar{p}^{s, \theta} = (1-\eta){p}^{s, \theta} + \eta r$ for all $\theta \in \Theta$. Under signal $s$ and contracts $\Big\{\bar{p}^{s, \theta}\Big\}_{\theta \in \Theta}$, the agent chooses action $a(s)$ by $a(s) =  \arg\max_k \Big\{ \sum_{\theta} \mu(\theta) \pi(s|\theta) [F^{\theta}_{k} \bar{p}^{s, \theta} -c_{k}]  \Big\}$ with tie-breaking in favor of the principal. 

    By definition, the mechanism is IC under $(\pi, \bar{P}, a)$, by which we have
    \begin{equation}\label{icpbrapia}
    \begin{aligned}
     \sum_{\theta} \mu(\theta) \pi(s|\theta) [F^{\theta}_{a(s)} \bar{p}^{s, \theta} -c_{a(s)}] &= \sum_{\theta} \mu(\theta) \pi(s|\theta) [F^{\theta}_{a(s)} \big( (1-\eta){p}^{s, \theta} + \eta r\big) -c_{a(s)}] \\
     &\ge \sum_{\theta} \mu(\theta) \pi(s|\theta) [F^{\theta}_{s} \bar{p}^{s, \theta} -c_{s}] \\
     & = \sum_{\theta} \mu(\theta) \pi(s|\theta) [F^{\theta}_{s} \big( (1-\eta){p}^{s, \theta} + \eta r\big) -c_{s}] 
    \end{aligned}
    \end{equation}
    Moreover, by $\xi$-IC solution $(\pi, P)$, 
    we have 
    \begin{equation}\label{xiidpior}
    \sum_{\theta} \mu(\theta) \pi(s|\theta) [F^{\theta}_{s} {p}^{s, \theta} -c_{s}] \ge \sum_{\theta} \mu(\theta) \pi(s|\theta) [F^{\theta}_{a(s)} {p}^{s, \theta} -c_{a(s)}] - \xi
    \end{equation}
    Combining the above two inequalities (\ref{icpbrapia}) and (\ref{xiidpior}), we have that 
    \begin{align}
    \sum_{\theta} \mu(\theta) \pi(s|\theta) [F^{\theta}_{a(s)} \big( -\eta{p}^{s, \theta} + \eta r\big)] &\ge \sum_{\theta} \mu(\theta) \pi(s|\theta) [F^{\theta}_{s} \big( -\eta{p}^{s, \theta} + \eta r\big) ] - \xi \notag\\
    \sum_{\theta} \mu(\theta) \pi(s|\theta) [F^{\theta}_{a(s)} \big( r -{p}^{s, \theta} \big)] &\ge \sum_{\theta} \mu(\theta) \pi(s|\theta) [F^{\theta}_{s} \big( r -{p}^{s, \theta} \big) ] - \frac{\xi}{\eta} \label{compasreaswiths}
    \end{align}

    Finally, we have the principal's utility for the new contract $\bar{p}$ as 
    \begin{align*}
        \sum_{\theta} \mu(\theta) \sum_{s \in \Sigma} \pi(s|\theta) F^{\theta}_{a(s)} \cdot (r - \bar{p}^{s, \theta}) & = \sum_{\theta} \mu(\theta) \sum_{s \in \Sigma} \pi(s|\theta) F^{\theta}_{a(s)} \cdot [r - ((1-\eta){p}^{s, \theta} + \eta r)] \\
        & = \sum_{s \in \Sigma} (1-\eta) \big[\sum_{\theta} \mu(\theta) \pi(s|\theta) F^{\theta}_{a(s)} (r - p^{s, \theta}) \big] \\
        &\ge \sum_{s \in \Sigma} (1-\eta) \big[ \sum_{\theta} \mu(\theta) \pi(s|\theta) [F^{\theta}_{s} \big( r -{p}^{s, \theta} \big) ] - \frac{\xi}{\eta} \big]  \\
        & = (1-\eta) \Big( \sum_{s \in \Sigma}   \sum_{\theta} \mu(\theta) \pi(s|\theta) [F^{\theta}_{s} \big( r -{p}^{s, \theta} \big) ] - n \frac{\xi}{\eta} \Big) \\
        & \ge \sum_{s \in \Sigma}   \sum_{\theta} \mu(\theta) \pi(s|\theta) [F^{\theta}_{s} \big( r -{p}^{s, \theta} \big) ] -n\frac{\xi}{\eta} - \eta\\
        & \ge \sum_{s \in \Sigma}   \sum_{\theta} \mu(\theta) \pi(s|\theta) [F^{\theta}_{s} \big( r -{p}^{s, \theta} \big) ] -(n+1)\sqrt{\xi}
    \end{align*}
    where the first inequality is by (\ref{compasreaswiths}) and the last inequality is by setting $\eta = \sqrt{\xi}$. This concludes the proof.

\section{Proof Omitted from Section~\ref{sec:explicitycontractdesing}}

\subsection{Proof of Proposition~\ref{prop:K}}

First, we define $K$-uniform distributions.

\begin{definition}
    ($K$-uniform distribution \cite{castiglioni2023selling,cheng2015mixture}) A distribution $q\in \Delta_n$ is said to be a $K$ uniform distribution if it is an average of $K$ standard basis vectors in $n$-dimensional space.
\end{definition}

Let $K = \frac{\log(2nB/\eps)}{2B^2\eps^4}$ and $\mathcal{K}$ be the set of all $K$-uniform distributions over state $\Theta$. Given signal $s^*$, denote the induced posterior over $\Theta$ as $q^* \triangleq \prob{\cdot|s^*}$.
Moreover, under $q^*$, the optimal linear contract is $p^* \triangleq p(q^*)$ and the agent's best-response action is $a^*\triangleq a(q^*)$. 

Let $\tilde{q}$ be the empirical distribution of $K$ i.i.d. samples from $q^*$. By definition, $\tilde{q} \in \mathcal{K}$. Further note $E[\tilde{q}] =q^*$, which implies that there exists some distribution $\gamma$ over $\mathcal{K}$ such that $\sum_{\tilde{q} \in \mathcal{K}} \gamma_{\tilde{q}} \tilde{q} = q^*$. Define Let $t_i \triangleq \sum_\theta q^* (\theta) \langle F^{\theta}_i, p^* \rangle$ and $\tilde{t}_i \triangleq \sum_\theta \tilde{q} (\theta) \langle F^{\theta}_i, p^* \rangle$ for all action $i \in \langle n\rangle$, where $t, \tilde{t} \in [0, B]$. Given some $\delta = \eps^2 > 0$, we define $\mathcal{Q} \subseteq \mathcal{K}$ such that $\mathcal{Q} = \{\tilde{q} ~\vert ~\forall i, |t_i-\tilde{t}_i| \le \delta \}$, i.e., the induced utility by $\tilde{q} \in \mathcal{Q}$ is close to that by $q^*$ for all actions. 

By Hoeffding's inequality, we have for all $i \in \langle n\rangle$,
 \[
 \prob{|t_i-\tilde{t}_i| > \delta} \le \frac{\eps}{nB}.
 \]
This implies that in the distribution $\gamma$, at least $\sum_{\tilde{q} \in \mathcal{Q}} \gamma_{\tilde{q}} \ge 1-\frac{\eps}{B}$. In other words, the principal loses at most $\eps$ utility in the decomposition step, i.e., 
\begin{equation}\label{epsdocomposoitionloeseeps}
\begin{aligned}
\sum_{\tilde{q} \in \mathcal{Q}} \gamma_{\tilde{q}} \sum_{\theta} \tilde{q}(\theta) \langle F^{\theta}_{a^*}, r -  p^*\rangle & = \sum_{\tilde{q} \in \mathcal{K}} \gamma_{\tilde{q}} \sum_{\theta} \tilde{q}(\theta) \langle F^{\theta}_{a^*}, r -  p^*\rangle - \sum_{\tilde{q} \in \mathcal{K}\setminus \mathcal{Q}} \gamma_{\tilde{q}} \sum_{\theta} \tilde{q}(\theta) \langle F^{\theta}_{a^*}, r -  p^*\rangle \\
&\ge \sum_{\tilde{q} \in \mathcal{K}} \gamma_{\tilde{q}} \sum_{\theta} \tilde{q}(\theta) \langle F^{\theta}_{a^*}, r -  p^*\rangle  - \sum_{\tilde{q} \in \mathcal{K}\setminus \mathcal{Q}} \gamma_{\tilde{q}} \cdot B \\
&\ge \sum_{\tilde{q} \in \mathcal{K}} \gamma_{\tilde{q}} \sum_{\theta} \tilde{q}(\theta) \langle F^{\theta}_{a^*}, r -  p^*\rangle  - \eps
\end{aligned}
\end{equation}

Next, we examine the IC constraints for $\tilde{q} \in \mathcal{Q}$. By definition of $\mathcal{Q}$, we have
  \begin{align*}
      \sum_\theta \tilde{q}(\theta) \langle F^{\theta}_{a^*}, p^* \rangle- c_{a^*}& \ge \sum_\theta q^*(\theta) \langle F^{\theta}_{a^*}, p^* \rangle- c_{a^*} -\delta \\
      &\ge \sum_\theta q^*(\theta) \langle F^{\theta}_{i}, p^*\rangle- c_{i} -\delta \tag*{($a^*$ is the best-response action)} \\
      &\ge \sum_\theta \tilde{q}(\theta) \langle F^{\theta}_{i}, p^*\rangle - c_{i} -2\delta, \quad \forall i\in \langle n \rangle
 \end{align*}
%where the second inequality is by that $a^*$ is the best-response action. 
This implies that under the close posterior $\tilde{q}$,  the tuple $(p^*, a^*)$ is $2\delta$-IC. Furthermore, by a similar argument as Lemma \ref{lemma_eps_toicbyloss}, we can convert it to an IC mechanism by losing some small utility.
\begin{claim}\label{lineqepsictoic}
    For a constant $\eta \in (0, 1)$, we can convert a $2\delta$-IC to an IC mechanism by losing the principal at most $3\sqrt{\delta}$ utility.
\end{claim}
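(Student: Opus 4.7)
The plan is to mirror the ``mix with the reward vector'' argument of Lemma~\ref{lemma_eps_toicbyloss}, applied to the single-signal, single-contract setting at hand. Given the $2\delta$-IC pair $(p^*, a^*)$ under posterior $\tilde q$, I would define a perturbed contract $\bar p = (1-\eta)p^* + \eta r$ for a parameter $\eta \in (0,1)$ to be optimized, and then analyze the principal's utility when the agent plays any best response $a'$ to $\bar p$ under $\tilde q$. The intuition is the same as in Lemma~\ref{lemma_eps_toicbyloss}: mixing a small amount of the principal's reward into the contract aligns the agent's incentives with the principal's and absorbs the $2\delta$ slack in the IC constraints.

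The core calculation proceeds in three steps. First, from the exact IC of $a'$ under $\bar p$, after expanding $\bar p$ and rearranging, I would obtain
\[
(1-\eta)\sum_\theta \tilde q(\theta)\langle F^\theta_{a'} - F^\theta_{a^*}, p^*\rangle + \eta\sum_\theta \tilde q(\theta)\langle F^\theta_{a'} - F^\theta_{a^*}, r\rangle \ge c_{a'} - c_{a^*}.
\]
Second, I would combine this with the $2\delta$-IC inequality $\sum_\theta \tilde q(\theta)\langle F^\theta_{a'} - F^\theta_{a^*}, p^*\rangle \le c_{a'} - c_{a^*} + 2\delta$ to derive, after cancellation of the cost terms,
\[
\sum_\theta \tilde q(\theta)\langle F^\theta_{a'}, r - p^*\rangle \ge \sum_\theta \tilde q(\theta)\langle F^\theta_{a^*}, r - p^*\rangle - \frac{2\delta}{\eta}.
\]
Third, since the principal's utility under the new contract equals $(1-\eta)\sum_\theta \tilde q(\theta)\langle F^\theta_{a'}, r - p^*\rangle$, and the old expected profit is bounded above by $1$ thanks to $r \in [0,1]$, the total loss in principal's utility is at most $\eta + \frac{2\delta}{\eta}$.

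Finally, the parameter $\eta = \sqrt{2\delta}$ balances the two terms and yields a loss of $2\sqrt{2\delta} \le 3\sqrt{\delta}$, as required. The only subtlety I anticipate is verifying that the modified contract $\bar p$ remains within the bounded-payment regime assumed in Proposition~\ref{prop:K}: since $p^* \le B$ componentwise and $r \le 1$, we have $\bar p \le \max\{B,1\}$, which is absorbed by enlarging the payment bound by at most an additive $\eta$ (or equivalently assuming $B \ge 1$, which is without loss for the hard regime). Beyond this bookkeeping, the argument is a direct specialization of Lemma~\ref{lemma_eps_toicbyloss} to the single posterior $\tilde q$, so no new technical obstacle is expected.
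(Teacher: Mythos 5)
Your proposal is correct and follows essentially the same route as the paper's own proof: construct the perturbed contract $\bar p = (1-\eta)p^* + \eta r$, combine the exact IC of the new best response under $\bar p$ with the $2\delta$-IC slack under $p^*$ to derive $\sum_\theta \tilde q(\theta)\langle F^\theta_{a'}, r - p^*\rangle \ge \sum_\theta \tilde q(\theta)\langle F^\theta_{a^*}, r - p^*\rangle - \tfrac{2\delta}{\eta}$, and then bound the total loss by $\eta + \tfrac{2\delta}{\eta}$. The only cosmetic difference is the tuning $\eta = \sqrt{2\delta}$ instead of the paper's $\eta = \sqrt{\delta}$, which gives the slightly sharper $2\sqrt{2\delta} \le 3\sqrt{\delta}$ but is otherwise the same argument.
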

\begin{proof}
    Given the posterior $\tilde{q}$, construct a new contract $\bar{p} = (1-\eta)p^* + \eta r$, and let $\bar{a} = a (\bar{p})$ be the best-response action under contract $\bar{p}$. We have
    \[
    \sum_\theta \tilde{q}(\theta) \langle F^{\theta}_{\bar{a}}, (1-\eta)p^* + \eta r \rangle - c_{\bar{a}} \ge \sum_\theta \tilde{q}(\theta) \langle F^{\theta}_{a^*}, (1-\eta)p^* + \eta r  \rangle - c_{a^*}
    \]
    and by the $2\delta$-IC constraint regarding action $a^*$ under $p^*$, we have
    \[
    \sum_\theta \tilde{q}(\theta) \langle F^{\theta}_{a^*}, p^*\rangle - c_{a^*} \ge \sum_\theta \tilde{q}(\theta) \langle F^{\theta}_{\bar{a}}, p^*\rangle - c_{\bar{a}} -2\delta
    \]
    Hence, by combining the above two inequalities, 
    \[
    \sum_\theta \tilde{q}(\theta) \langle F^{\theta}_{\bar{a}}, -\eta p^* + \eta r \rangle \ge \sum_\theta \tilde{q}(\theta) \langle F^{\theta}_{a^*}, -\eta p^* + \eta r \rangle  -2\delta
    \]
    which implies that 
    \[
    \sum_\theta \tilde{q}(\theta) \langle F^{\theta}_{\bar{a}},   r-p^* \rangle \ge \sum_\theta \tilde{q}(\theta) \langle F^{\theta}_{a^*}, r -p^*\rangle  -\frac{2\delta}{\eta}.
    \]
    Hence, the principal's utility is 
    \begin{align*}
        \sum_\theta \tilde{q}(\theta)  \langle F^{\theta}_{\bar{a}}, r -\bar{p}\rangle &= \sum_\theta \tilde{q}(\theta) \langle F^{\theta}_{\bar{a}}, (1-\eta)(r-p^*) \rangle  \\
        &\ge(1-\eta) \Big(\sum_\theta \tilde{q}(\theta) \langle F^{\theta}_{a^*},  r-p^* \rangle  -\frac{2\delta}{\eta}\Big)\\
        &\ge \sum_\theta \tilde{q}(\theta) \langle F^{\theta}_{a^*},  r-p^* \rangle -\frac{2\delta}{\eta} -\eta\\
        &\ge \sum_\theta \tilde{q}(\theta) \langle F^{\theta}_{a^*},  r-p^* \rangle-3\sqrt{\delta}
    \end{align*}
    where the last inequality is by setting $\eta=\sqrt{\delta}$.
\end{proof}

For each signal $\tilde{q} \in \mathcal{K}$, we solve the optimal contract $p(\tilde{q})$ and action $a(\tilde{q})$. Hence, we have that for distribution $\gamma$,
\begin{align*}
    \sum_{\tilde{q} \in \mathcal{K}} \gamma_{\tilde{q}} \sum_{\theta} \tilde{q}(\theta) \langle F_{a(\tilde{q})}^\theta, r - p(\tilde{q})\rangle& \ge \sum_{\tilde{q} \in \mathcal{Q}} \gamma_{\tilde{q}} \sum_{\theta} \tilde{q}(\theta) \langle F_{a(\tilde{q})}^\theta, r - p(\tilde{q})\rangle \\
    & \ge \sum_{\tilde{q} \in \mathcal{Q}} \gamma_{\tilde{q}} \sum_{\theta} \tilde{q}(\theta) \langle F_{a^*}^\theta, r - p^*\rangle -3\sqrt{\delta} \tag*{(Claim \ref{lineqepsictoic})} \\
    & \ge \sum_{\tilde{q} \in \mathcal{K}} \gamma_{\tilde{q}} \sum_{\theta} \tilde{q}(\theta)  \langle F_{a^*}^\theta, r-p^*\rangle -3\sqrt{\delta} -\eps \tag*{(Equation (\ref{epsdocomposoitionloeseeps}))}\\
    &= \sum_{\theta} q^*(\theta) \langle F_{a^*}^\theta, r - p^*\rangle -4\eps
\end{align*}
Thus, we can restrict our attention to $|\mathcal{K}| = \poly(n^{\frac{\log(2nB/\eps)}{2B^2\eps^4}})$ posteriors.  Then, we can compute a solution $4\eps$-additive approximation by solving the following linear program,
\begin{equation}
    \begin{aligned}
        \max_{\gamma} &\quad   \sum_{q \in \mathcal{K}} \gamma_{q} \sum_{\theta} q(\theta)  \langle F^{\theta}_{a(q)}, r-p(q) \rangle \\
        \textnormal{subject to}&  \quad \mu(\theta) = \sum_{q \in \mathcal{K}} \gamma_{q} q(\theta), \quad \forall \theta \in \Theta
    \end{aligned}
\end{equation}
where the optimal contract $p(q)$ and the best-response action $a(q)$ are known given posterior $q$. In total, there are $O (\poly(n^{\frac{\log(2nB/\eps)}{2B^2\eps^4}}))$ parameters and $O(|\Theta|)$ constraints. Thus, the algorithm runs in $\poly(|\mathcal{K}|, |\Theta|)$ time.

\section{Proofs Omitted from Section~\ref{sec:single}}

\subsection{Proof of Proposition~\ref{singlcontractdirectmech}}

    As the first step, we show that given any feasible solution $(\pi^*, p^*)$, there exists one direct signaling scheme that achieves the same principal utility. Suppose there exist two signals $s_1$ and $s_2$ that induce the agent to take the same action $a$, then the following  hold,
\begin{gather*}
    \sum_{\theta} \mu(\theta)\pi^*(s_1|\theta) [\langle F^{\theta}_{a}, p^* \rangle -c_{a}] \ge \sum_{\theta} \mu(\theta)\pi^*(s_1|\theta) [\langle F^{\theta}_{i}, p^*\rangle -c_{i}], \forall i \in \langle n \rangle \\ 
    \sum_{\theta} \mu(\theta)\pi^*(s_2|\theta) [\langle F^{\theta}_{a}, p^* \rangle  -c_{a}] \ge \sum_{\theta} \mu(\theta)\pi^*(s_2|\theta) [\langle F^{\theta}_{i}, p^* \rangle -c_{i}], \forall i \in \langle n \rangle 
\end{gather*}
Then, in the new scheme $\pi$ with a signal $s$, $\pi (s|\theta) = \pi^*(s_1|\theta) + \pi^*(s_2|\theta)$ defined by combining $s_1$ and $s_2$, we have 
\begin{equation*}
     \sum_{\theta} \mu(\theta)[\pi^*(s_1|\theta) + \pi^*(s_2|\theta)] [\langle F^{\theta}_{a}, p^* \rangle -c_{a}] \ge \sum_{\theta} \mu(\theta)[\pi^*(s_1|\theta) + \pi^*(s_2|\theta)] [\langle F^{\theta}_{i}, p^* \rangle -c_{i}]
\end{equation*}
This implies that the best-response action does not change in the new scheme.  Moreover, the utility of the principal does not change,
\begin{align*}
    \sum_{\theta} \mu(\theta) \pi(s_1|\theta) \langle  F^{\theta}_{a}, r - p^* \rangle + \sum_{\theta} \mu(\theta) \pi(s_2|\theta)\langle F^{\theta}_{a}, r - p^* \rangle = \sum_{\theta} \mu(\theta) [\pi(s_1|\theta) + \pi(s_2|\theta)] \langle F^{\theta}_{a}, r - p^* \rangle 
\end{align*}
Hence, it is without loss of generality to  consider a direct mechanism in the single contract case. Next, we show that such an optimal direct mechanism exists.

 Consider the function $g(p)$ that for any $p \in \mathbb{R}_+$ such that the principal's utility is maximized,
\begin{equation}\label{eq:USC}
    g(p) \triangleq \max_{\pi} \quad  \sum_{\theta} \mu(\theta) \sum_{s \in \Sigma} \pi(s|\theta) \langle F^{\theta}_{a(s)}, r - p \rangle \quad \textnormal{subject to} ~ \pi ~ \textnormal{is feasible},
\end{equation}
whereby the above analysis, we can restrict to a direct scheme with $ a (s)$ as the best response to signal $s$ under signaling scheme $\pi$. Note that given $p$, the optimal scheme $\pi$ must exist since (\ref{eq:USC}) is a linear program.
For any $\pi$, since ties are broken in favor of the principal, the function $\sum_{\theta} \mu(\theta) \sum_{s \in \Sigma} \pi(s|\theta) \langle F^{\theta}_{a(s)}, r - p \rangle$ is upper-semicontinuous in $p$. Hence, since the maximum of semicontinuos function is upper-semicontinuos also \eqref{eq:USC} is upper-semicontinuos.

Then, we show that we can restrict to a bounded and closed set on which we  optimize $p$.
In particular, we can restrict to $0 \le p_\omega \le  \frac{2}{\max_{i} \sum_{\theta}\mu(\theta) F^\theta_{i, \omega}}$ for every $\omega \in \Omega$. Let $i({\omega})\in \arg\max_{i} \sum_{\theta}\mu(\theta) F^\theta_{i, {\omega}}$ for every $\omega \in \Omega$.  Notice that $\sum_{\theta}\mu(\theta) F^\theta_{i({\omega}), {\omega}}>0$ since otherwise the outcome ${\omega}$ cannot be induced and we can remove it.
Indeed, consider a contract $p$ with some $\omega^*$ such that  $p_{\omega^*} >  \frac{2}{ \sum_{\theta}\mu(\theta) F^\theta_{i(\omega^*), \omega^*}}$. 
It is easy to see that by the IC constraints, the agent's total utility is at least the one he obtains from ignoring the signal and always playing action $i({\omega^*})$. Hence, the total utility is at least 
\[ \sum_{\theta} \mu(\theta) [\langle F^{\theta}_{i({\omega^*})}, p \rangle -c_{i({\omega^*})}] \ge  p_{\omega^*} \sum_{\theta}\mu(\theta) F^\theta_{i({\omega^*}), {\omega^*}}-c_{i(\omega^*)} >1.\]
This implies that the expected payment is strictly larger than $1$ and hence the principal's utility is negative.
Hence, we have shown that the optimal mechanism can be obtained from the maximization of an upper-semicontinous function on a  closed and bounded set. Hence, an optimum exists.

\end{document}